\setlist{leftmargin=*, topsep=0.5em, parsep=0pt, itemsep=1em, labelindent=0pt, align=left}
\theoremstyle{definition}
\newtheorem{theorem}{Theorem}[section]
\newtheorem{proposition}{Proposition}[section]
\newtheorem{remark}{Remark}[section]
\newtheorem{assumption}{Assumption}[section]
\newtheorem{lemma}{Lemma}[section]
\newtheorem{corollary}{Corollary}[section]
\numberwithin{equation}{section}
\numberwithin{table}{section}
\newcommand{\littleo}[1]{o(#1)\xspace}
\newcommand{\littleoo}[1]{o_0(#1)\xspace}
\newcommand{\Qro}{\mathbb{Q}}
\newcommand{\Ex}{\mathbb{E}}
\newcommand{\Var}{\mathrm{Var}}
\newcommand{\Cov}{\mathrm{Cov}}
\newcommand{\EE}{\mathcal{E}}
\newcommand{\FF}{\mathcal{F}}
\newcommand{\NN}{\mathcal{N}}
\newcommand{\dd}{\mathrm {d}}
\newcommand{\rind}[1]{\textbf{1}_{#1}}
\newcommand{\reals}{\mathbb{R}}
\newcommand{\naturals}{\mathbb{N}}
\definecolor{dgreen}{rgb}{0,.8,0}
\definecolor{red}{HTML}{D62728}
\definecolor{blue}{RGB}{ 0, 109, 219}
\definecolor{dgreen}{rgb}{0,.8,0}
\newcommand\blfootnote[1]{%
  \begingroup
  \renewcommand\thefootnote{}\footnote{#1}%
  \addtocounter{footnote}{-1}%
  \endgroup
}
\begin{document}
\title[]{Closed-form approximations\\ with respect to the mixing solution \\for option pricing under stochastic volatility}
\author{Kaustav Das$^\dagger$}
\address{$^\dagger$School of Mathematics, Monash University, Victoria, 3800 Australia.}
\author{Nicolas Langren\'e$^\ddagger$}
\address{$^\ddagger$CSIRO Data61, Victoria, 3008 Australia.}
\email{kaustav.das@monash.edu, nicolas.langrene@csiro.au}
\date{}
\DeclareGraphicsExtensions{.pdf,jpg,png}
\vspace{12pt}

\onehalfspacing

\begin{abstract}
\noindent We consider closed-form approximations for European put option prices within the Heston and GARCH diffusion stochastic volatility models with time-dependent parameters. Our methodology involves writing the put option price as an expectation of a Black-Scholes formula and performing a second-order Taylor expansion around the mean of its argument. The difficulties then faced are simplifying a number of expectations induced by the Taylor expansion. Under the assumption of piecewise-constant parameters, we derive closed-form pricing formulas and devise a fast calibration scheme. Furthermore, we perform a numerical error and sensitivity analysis to investigate the quality of our approximation and show that the errors are well within the acceptable range for application purposes. Lastly, we derive bounds on the remainder term generated by the Taylor expansion. \\[.5cm]
Keywords: Stochastic volatility; Closed-form expansion; Closed-form approximation; Heston; GARCH \\[.5cm]
\end{abstract}
\maketitle
\blfootnote{Research supported by an Australian Government Research Training Program (RTP) Scholarship.}
\section{Introduction}
\noindent We consider the European put option pricing problem within the Heston and GARCH\footnote{Generalised AutoRegressive Conditional Heteroskedasticity.} diffusion stochastic volatility models, see \citep{heston} and \citep{nelson1990arch, Willard97} respectively. Our goal is to study how European put option prices can be approximated in each of these frameworks via expansion of the so-called mixing solution, which will be detailed later in this article. We find that through a second-order Taylor expansion of the mixing solution, we can derive accurate approximations to European put option prices in the aforementioned models. In addition, our methodology works naturally with time-dependent parameters. This is seen as a major advantage compared to other methodologies, for example transform methods, which cannot handle time-dependent parameters well. Furthermore, the approximation formulas are written solely in terms of iterated integrals, which we show obey convenient recursive properties when parameters are assumed to be piecewise-constant. This results in the approximation being essentially closed-form, and moreover leads to a fast calibration scheme. Our method builds upon \citep{drimus2011closed}, in which the Heston model is considered with constant parameters. A sensitivity analysis is performed in order to assess our approximation numerically. Furthermore, we present mathematical bounds on the error term in terms of higher order moments of the underlying variance process.

It has been well established that volatility is highly dependent on the strike and maturity of European option contracts. This phenomenon is called the volatility smile or skew, an attribute the well known Black-Scholes model \citep{bs} fails to take into account, see for example \citep{gatheral2011volatility} for more on this. In response, there have been a number of frameworks proposed to explain the volatility smiles and skews observed in the market, for example, local volatility models, stochastic volatility models, and local-stochastic volatility models. In particular, stochastic volatility models have been introduced, where the volatility itself is a stochastic process possibly correlated with the spot. However with this added complexity, often option prices cannot be computed in a closed-form fashion. This is detrimental, as closed-form solutions lead to rapid option pricing, an attribute necessary for fast calibration of financial models. Without a closed-form solution, option pricing must be done numerically via Monte-Carlo or PDE methods, both methods being computationally costly.

If we assume that the characteristic function of the log-spot is known explicitly, then the option price can be computed quasi-explicitly (meaning in terms of at most one-dimensional complex integrals, where the integrands are explicit function), albeit under the restrictive assumption of constant or piecewise-constant parameters \citep{heston, carr1999option, mikhailov2004heston}. One class of models where this occurs is the class of affine models such as the Heston and Sch{\"o}bel-Zhu model. However, for non-affine models, the characteristic function of the log-spot is rarely known explicitly, and such a procedure will not be effective. Non-affine models, although usually intractable compared to their affine counterparts, are often far more realistic. This has been shown in a number of studies, see for example \citep{gander2007stochastic,christoffersen2010volatility,kaeck2012volatility}. For these reasons, numerical procedures such as PDE and Monte-Carlo methods have been substantially developed in the literature \citep{van2014heston,andersen2007efficient,kahl2006fast}.

Closed-form approximations are an alternative methodology for option pricing, where the option price is approximated by a closed-form expression. The main advantages are that the option price can be computed rapidly and since transform methods are not used, time-dependent parameters can usually be handled well. One motivation for quick option pricing formulas is calibration, where the option price must be computed several times within an optimisation procedure.
 
There have been a plethora of results on closed-form expansions in the literature. For example, \citep{lorigexplicit} derive a general closed-form expression for the price of an option via a PDE approach, as well as its corresponding implied volatility. \citep{sabr} use singular perturbation techniques to obtain an explicit approximation for the option price and implied volatility in their SABR model. \citep{alos2006generalization} show that from the mixing solution, one can approximate the put option price by decomposing it into a sum of two terms, one being completely correlation independent and the other dependent on correlation. However, neither terms are explicit. Furthermore, similar to our work, \citep{Antonelli09,Antonelli10} show that under the assumption of small correlation, an expansion can be performed with respect to the mixing solution, where the resulting expectations can be computed using Malliavin calculus techniques. Similarly, in the case of the time-dependent Heston model, \citep{timedepheston} consider the mixing solution and expand around vol-of-vol, performing a combination of Taylor expansions and computing the resulting terms via Malliavin calculus techniques.

Stochastic volatility models usually either model the volatility directly, or indirectly via the variance process. A critical assumption is that volatility or variance has some sort of mean reversion behaviour, and this is supported by empirical evidence, see for example \citep{gatheral2011volatility}. Specifically, for modelling the variance, a large class of stochastic volatility models is given by
\begin{align*}
	\dd S_t &= S_t ( (r_t^d - r_t^f ) \dd t + \sqrt{V_t} \dd W_t ), \quad S_0,\\
	\dd V_t &=\kappa_t ( \theta_t V_t^{\hat \mu} - V_t^{\tilde \mu }) \dd t + \lambda_t V_t^{\mu} \dd B_t , \quad V_0 = v_0,\\
	\dd \langle W, B \rangle_t &= \rho_t \dd t,
\end{align*}
whereas for modelling the volatility, this class is of the form
\begin{align*}
	\dd S_t &= S_t ( (r_t^d - r_t^f ) \dd t + V_t \dd W_t ), \quad S_0,\\
	\dd V_t &=\kappa_t ( \theta_t V_t^{\hat \mu} - V_t^{\tilde \mu }) \dd t + \lambda_t V_t^{\mu} \dd B_t , \quad V_0 = v_0,\\
	\dd \langle W, B \rangle_t &= \rho_t \dd t,
\end{align*}	
for some $ \tilde \mu, \hat \mu$ and  $\mu \in \reals$.\footnote{There exist other classes of stochastic volatility models, for example the exponential Ornstein-Uhlenbeck model \citep{wiggins1987option} is not a part of either of these classes.} Our model formulation here is for FX market purposes, but can be easily adjusted for equity and fixed income markets. In this article, we will focus on this class of stochastic volatility models. Some popular models in the literature include: \\[1em]
\noindent
\begin{tabular}{| l | c | l | l | l | l | }
\hline
Model &  Variance/Volatility  &Dynamics of $V$ & $\hat \mu$ & $\tilde \mu$ & $\mu$  \\
\hline
Heston \text{\citep{heston}} & Variance & $\dd V_t = \kappa_t( \theta_t - V_t) \dd t +  \lambda_t \sqrt {V_t} \dd B_t $ & 0  &1 &  1/2 \\
\hline
Sch{\"o}bel/Zhu \text{\citep{schobel1999stochastic}} & Volatility &$ \dd V_t = \kappa_t( \theta_t - V_t) \dd t + \lambda_t \dd B_t$ & 0 & 1 & 0 \\
\hline
GARCH \text{\citep{nelson1990arch, Willard97}} & Variance  & $\dd V_t = \kappa_t( \theta_t - V_t) \dd t +  \lambda_t V_t \dd B_t$  & 0 & 1 & 1\\
\hline
Inverse-Gamma \text{\citep{IGa}} & Volatility & $\dd V_t = \kappa_t( \theta_t - V_t) \dd t +  \lambda_t V_t \dd B_t $ & 0 & 1 & 1  \\
\hline
3/2 Model \text{\citep{lien2002option}}  & Variance & $\dd V_t = \kappa_t( \theta_t V_t - V_t^2) \dd t +  \lambda_t V_t^{3/2} \dd B_t$  & 1 & 2 & 3/2\\
\hline
Verhulst \text{\citep{lewis2019exact,carr2019lognormal}} & Volatility &  $\dd V_t = \kappa_t( \theta_t V_t - V_t^2) \dd t +  \lambda_t V_t \dd B_t $ & 1 & 2 & 1  \\
\hline
\end{tabular}

\noindent Note that the Verhulst model also goes by the names Logistic model and XGBM model.

This article is dedicated to detailing how a second-order expansion of the so-called mixing solution for stochastic volatility models with time-dependent parameters can result in an approximation for the price of a European put option. This approximation is always in terms of some specific iterated integrals. We then show that these iterated integrals obey a convenient recursive scheme. Not only does this lead to the approximation being essentially closed-form, but this also results in a fast calibration scheme. The tractability of our methodology relies largely on the dynamics of the underlying variance process. Our methodology extends that of \citep{drimus2011closed}, in which the Heston model is considered with constant parameters. Our contribution is the following. We apply the expansion methodology in the Heston and GARCH diffusion stochastic volatility models with time-dependent parameters. In particular, the GARCH diffusion model is an example of a non-affine model, which practitioners prefer, see for example \citep{gander2007stochastic,christoffersen2010volatility,kaeck2012volatility}. We include a robust error analysis, design a fast calibration scheme and give extensive numerical results. The sections are organised as follows: 
\begin{itemize}
	\item \Cref{sec:prelims2} details preliminary calculations, where we express the put option price as the mixing solution. Once done, a second-order Taylor expansion is performed, expressing the approximation formula in terms of a number of expectations of functionals of the underlying variance process.
	\item \Cref{sec:mom} details how to derive more convenient expressions for these expectations obtained in \Cref{sec:prelims2} through change of measure techniques. Specifically, we rewrite the spot $S_T$ as a convenient expression so as to construct a term which is a Dol\'eans-Dade exponential, thereby defining a Radon-Nikodym derivative. This term allows us to change measure, allowing for more convenient calculations.
	\item \Cref{sec:pricingspecmodels2} introduces specific models. As precise dynamics are assumed, the objective is to derive expressions in terms of specific iterated integrals for the expectations from \Cref{sec:mom}. In particular, we consider the Heston and GARCH diffusion models.
	\item In \Cref{sec:erroranalysis2} we perform an error analysis, bounding the error in the expansion in terms of higher order moments of the underlying variance process.
	\item In \Cref{sec:fastcal2}, under the assumption of piecewise-constant parameters, we obtain closed-form expressions for our pricing formulas. Moreover, we design a fast calibration scheme. Specifically, we show that the iterated integrals in which the pricing formulas in \Cref{sec:pricingspecmodels2} are in terms of satisfy some convenient recursive properties.
	\item \Cref{sec:numerical2} is dedicated to a numerical error and sensitivity analysis for the Heston and GARCH diffusion models.
\end{itemize} 

\section{Preliminaries}
\noindent
\label{sec:prelims2}
Suppose the spot  $S$ with variance $\sigma$ follows the dynamics
\begin{align*}
	\dd S_t &= S_t ( (r_t^d - r_t^f ) \dd t + \sqrt{\sigma_t} \dd W_t ), \quad S_0,\\
	\dd \sigma_t &= \alpha(t, \sigma_t) \dd t + \beta(t, \sigma_t) \dd B_t , \quad \sigma_0,\\
	\dd \langle W, B \rangle_t &= \rho_t \dd t,
\end{align*}
where $W$ and $B$ are Brownian motions with deterministic, time-dependent instantaneous correlation $(\rho_t)_{0 \leq t \leq T}$, defined on the filtered probability space $(\Omega, \FF, (\FF_t)_{0 \leq t \leq T}, \Qro)$. Here $T$ is a finite time horizon, where $(r_t^d)_{0\leq t \leq T}$ and $(r_t^f)_{0 \leq t \leq T}$ are the deterministic, time-dependent domestic and foreign interest rates respectively. In addition, $\rho_t \in [-1,1]$ for any $t \in [0,T]$. Furthermore, $(\FF_t)_{0 \leq t \leq T}$ is the filtration generated by $(W,B)$ which satisfies the usual assumptions. In the following, $\Ex (\cdot)$ denotes the expectation under $\Qro$, where $\Qro$ is a domestic risk-neutral measure which we assume to be chosen. 

\begin{assumption}
\label{ass:solutionvariance}
\noindent
 $\sigma$ possesses a non-negative pathwise unique strong solution which does not blow up in finite time.
\end{assumption}

We will enforce \Cref{ass:solutionvariance} for the rest of this article. We briefly remark that sufficient conditions which ensure pathwise uniqueness of the solution are the usual It\^o conditions (that is, Lipschitz continuity on the drift and diffusion coefficients) or the Yamada-Watanabe condition (Theorem 1 in \citep{yamada1971uniqueness}). Moreover, in order to ensure that the solution does not blow up in finite time, linear growth conditions on the drift and diffusion coefficients are sufficient.

\begin{remark}[Geometric Brownian motion process]
Let $\tilde B$ be an arbitrary Brownian motion process. We call $Y$ a GBM$(y_0 ; \mu_t, \nu_t)$ if it solves the SDE 
\begin{align*}
\dd Y_t = \mu_t Y_t \dd t+ \nu_t Y_t \dd \tilde B_t, \quad Y_0 = y_0,
\end{align*} 
where $\mu$ and $\nu$ are adapted to the Brownian filtration and satisfy some regularity conditions (for example, $\mu$ and $\nu$ bounded on $[0, T]$ is sufficient).
\end{remark}
We decompose the Brownian motion $W$ as $W_t = \int_0^t \rho_u  \dd B_u+\int_0^t \sqrt{1 - \rho_u^2} \dd Z_u$, where $Z$ is a Brownian motion under $\Qro$ such that $B$ and $Z$ are independent. Then, noticing $S$ is a GBM$(S_0; r_t^d - r_t^f, \sqrt{\sigma_t})$, we obtain the expression
\begin{align*}
	S_T &= S_0 \xi_T \exp \left \{   \int_0^T (r_t^d - r_t^f ) \dd t - \frac{1}{2} \int_0^T \sigma_t  (1 - \rho_t^2) \dd t + \int_0^T  \sqrt{ \sigma_t (1 - \rho_t^2)} \dd 		Z_t \right \},
\end{align*}
where
\begin{align*}
	\xi_t &:= \exp \left \{ \int_0^t \rho_u \sqrt{\sigma_u}  \dd B_u - \frac{1}{2} \int_0^t \rho_u^2 \sigma_u \dd u \right \}. 
\end{align*}

\begin{assumption}
\label{ass:martingale}
\noindent
$\sup_{t \in [0,T]} \limsup_{x \to \infty} \frac{ \rho_t \beta(t,x) \sqrt{x} + \alpha(t, x) }{x} < \infty.$ 
\end{assumption}
We will enforce \Cref{ass:martingale} for the rest of this article.

\begin{lemma}
\label{lemma:martingalexi}
$\xi$ is a martingale.
\end{lemma}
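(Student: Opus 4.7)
The first observation is that $\xi=\EE(M)$ is the stochastic exponential of the continuous local martingale $M_t:=\int_0^t \rho_u\sqrt{\sigma_u}\,\dd B_u$, which is well defined because $\rho$ is bounded and $\sigma$ has continuous paths, so that $\langle M\rangle_T=\int_0^T \rho_u^2\sigma_u\,\dd u<\infty$ almost surely. Thus $\xi$ is a nonnegative continuous local martingale, hence a supermartingale satisfying $\Ex[\xi_T]\leq 1$, and it suffices to upgrade this inequality to an equality.

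I would rely on the standard change-of-measure technique going back to Sin and Cheridito--Filipovi\'c--Yor. Localise with $\tau_n:=\inf\{t\geq 0:\sigma_t\geq n\}$; then $\xi^{\tau_n}$ is bounded, hence a genuine martingale, and defines an equivalent probability measure $\Qro^n$ on $\FF_T$ via $\dd\Qro^n/\dd\Qro=\xi_{T\wedge\tau_n}$. By Girsanov's theorem, $\tilde B^n_t:=B_t-\int_0^{t\wedge\tau_n}\rho_u\sqrt{\sigma_u}\,\dd u$ is a $\Qro^n$-Brownian motion, and up to $\tau_n$ the variance process satisfies
\begin{align*}
\dd\sigma_t=\bigl[\alpha(t,\sigma_t)+\rho_t\beta(t,\sigma_t)\sqrt{\sigma_t}\bigr]\,\dd t+\beta(t,\sigma_t)\,\dd\tilde B^n_t.
\end{align*}
A direct computation gives $\Qro^n(\tau_n>T)=\Ex[\xi_{T\wedge\tau_n}\mathbf{1}_{\tau_n>T}]=\Ex[\xi_T\mathbf{1}_{\tau_n>T}]$, so $\Ex[\xi_T]\geq \limsup_n \Qro^n(\tau_n>T)$; it therefore suffices to prove $\Qro^n(\tau_n>T)\to 1$, equivalently to rule out explosion of $\sigma$ under $\Qro^n$ within time $T$.

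This is precisely where \Cref{ass:martingale} enters. Part (i) yields that the shifted drift $\alpha(t,x)+\rho_t\beta(t,x)\sqrt{x}$ has at most linear growth in $x$ uniformly over $t\in[0,T]$, and part (ii) gives the same for $\beta(t,x)$. A standard It\^o--Gr\"onwall argument applied to $\sigma^{\tau_n}$ under $\Qro^n$ (for instance via a Lyapunov function such as $x\mapsto\log(1+x)$, or by estimating $\Ex^{\Qro^n}\bigl[\sup_{t\leq T}\sigma_{t\wedge\tau_n}\bigr]$ using the Burkholder--Davis--Gundy inequality) then yields a constant $C_T<\infty$ independent of $n$ such that $\Ex^{\Qro^n}\bigl[\sup_{t\leq T}\sigma_{t\wedge\tau_n}\bigr]\leq C_T$. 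Markov's inequality gives $\Qro^n(\tau_n\leq T)\leq C_T/n\to 0$, closing the argument. I expect the main technical difficulty to lie in this uniform-in-$n$ moment bound under the shifted SDE: the diffusion coefficient is only linearly growing, not globally Lipschitz, so one must be careful to keep the Gr\"onwall constants genuinely independent of the localisation level $n$.
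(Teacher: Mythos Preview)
Your approach is correct but follows a genuinely different route from the paper's. A minor slip: $\xi^{\tau_n}$ is not bounded; what is bounded is the integrand $\rho_u\sqrt{\sigma_u}\,\rind{\{u\leq\tau_n\}}$, whence $\xi^{\tau_n}$ is a true martingale by Novikov. The rest of your outline is sound.

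The paper, following \citet{lions2007correlations}, uses a uniform integrability argument via the de la Vall\'ee--Poussin criterion: it shows $\sup_n \Ex\bigl[\xi_T^n\log\xi_T^n\bigr]<\infty$, which forces $(\xi_T^n)_n$ to be UI and hence $\Ex[\xi_T]=1$. The entropy $\Ex[\xi_T^n\log\xi_T^n]$ is rewritten under the Girsanov-shifted measure as $\tfrac12\int_0^T\hat\Ex[\rho_t^2\sigma_t\,\rind{\{t\leq\tau_n\}}]\,\dd t$, and the task reduces to a \emph{pointwise} first-moment bound $\hat\Ex[\sigma_t]\leq c_0(1+t)e^{ct}$ obtained by Gronwall from Assumption~\ref{ass:martingale}. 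Your route (the Sin/non-explosion method) instead reduces to showing $\Qro^n(\tau_n\leq T)\to 0$, for which you propose a uniform bound on $\Ex^{\Qro^n}\bigl[\sup_{t\leq T}\sigma_{t\wedge\tau_n}\bigr]$ via BDG. Both strategies hinge on exactly the same ingredient---Assumption~\ref{ass:martingale} giving linear growth of the shifted drift and of $\beta$---but the paper's version is marginally lighter: it needs only the time-$t$ first moment and a direct Gronwall, with no BDG or running-supremum estimate. Conversely, your method is arguably more transparent in that it identifies the obstruction (explosion of $\sigma$ under the shifted measure) explicitly, and avoids the $L\log L$ machinery.
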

\begin{proof}[Proof]
The proof is inspired by arguments made in Theorem 2.4 (i) of \citep{lions2007correlations}. First notice that
\begin{align}
	\dd \xi_t &= \rho_t \sqrt{\sigma_t} \xi_t \dd B_t, \quad \xi_0 = 1. \label{sdexi}
\end{align}
Let $\tau_n := \inf \{t \geq 0: \sigma_t > n \}$ be the first time $\sigma$ crosses the level $n$. Then $\tau_n \uparrow \infty$ $\Qro$ a.s. Define $\xi_t^n := \xi_{t \wedge \tau_n}$. Then by solving \cref{sdexi} up to $\tau_n$, this yields the pathwise unique strong solution
\begin{align*}
	\xi_t^n =\exp \left (\int_0^t \rho_u \sqrt{\sigma_u} \rind{\{u \leq \tau_n\} } \dd B_u - \frac{1}{2}\int_0^t \rho_u^2 \sigma_u \rind{\{u \leq \tau_n\} } \dd u \right ).
\end{align*}
Since the integrand of the above It\^o integral is bounded, $(\xi_t^n)_t$ is a martingale for each $n \in \naturals$. Furthermore, $\xi$ is a non-negative local-martingale. Hence by Fatou's lemma, we have that $\xi$ is a non-negative supermartingale. Our goal now is to show that
\begin{align*}
	\sup_n \Ex( \xi_T^n \log(\xi_T^n) ) < \infty
\end{align*}
since then, by the Vall\'ee Poussin theorem, $(\xi_T^n)_n$ is uniformly integrable and thus ${\Ex(\xi_T) = 1}$. Combined with the fact that $\xi$ is a non-negative supermartingale, this will then imply that $\xi$ is a martingale. Now,
\begin{align*}
	\xi_T^n \log (\xi_T^n) = \xi_T^n \left ( \int_0^T \rho_t \sqrt{\sigma_t} \rind{\{t \leq \tau_n\} } \dd B_t - \frac{1}{2} \int_0^T \rho_t^2 \sigma_t  \rind{\{t \leq \tau_n\} } \dd t \right ).
\end{align*}
Clearly $\xi_T^n$ is a Radon-Nikodym derivative which defines a measure $\hat \Qro$. Hence by Girsanov's theorem, $\hat B_t = B_t - \int_0^t \rho_u \sqrt{\sigma_u} \rind{ \{ u \leq \tau_n \} }\dd u$ is a Brownian motion under $\hat \Qro$. Denote the expectation under $\hat \Qro$ by $\hat \Ex$. Then
\begin{align*}
	\Ex(\xi_T^n \log (\xi_T^n)) &= \Ex \left (\xi_T^n \left [\int_0^T \rho_t \sqrt{\sigma_t}  \rind{\{t \leq \tau_n\} } \dd B_t - \frac{1}{2} \int_0^T \rho_t^2 \sigma_t \rind{\{t \leq \tau_n\} } \dd t \right ]\right ) \\
	&=\hat \Ex \left (\int_0^T \rho_t \sqrt{\sigma_t} \rind{\{t \leq \tau_n\} } \dd \hat B_t + \frac{1}{2} \int_0^T \rho_t^2 \sigma_t\rind{\{t \leq \tau_n\} } \dd t \right ) \\
		&= \frac{1}{2} \int_0^T \hat \Ex( \rho_t^2 \sigma_t \rind{\{t \leq \tau_n\} } ) \dd t \\
		&\leq \frac{1}{2} \int_0^T \hat \Ex(\sigma_t) \dd t.
\end{align*}
So it suffices to determine a bound on $\hat \Ex(\sigma_t)$ for $t \leq T$. Under $\hat \Qro$, we have
\begin{align*}
	\dd \sigma_t = \Big [\rho_t\beta(t, \sigma_t) \sqrt{\sigma_t} \rind{\{t \leq \tau_n \}}  + \alpha(t, \sigma_t) \Big] \dd t + \beta(t, \sigma_t) \dd \hat B_t.
\end{align*}
Now as a consequence of \Cref{ass:martingale}, this implies that there exists some constant $M>0$ such that
\begin{align}
	\rho_t \beta(t,x) \sqrt{x} + \alpha(t,x) \leq M(1 + x), \label{importantbound1}
\end{align}
for all $x \geq 0$, uniformly in $t \leq T$. Utilising \cref{importantbound1} yields
\begin{align*}
	\dd \sigma_t \leq M (1 + \sigma_t) \dd t + \beta(t, \sigma_t) \dd \hat B_t.
\end{align*}
Hence
\begin{align*}
	\hat \Ex (\sigma_t) &\leq  \hat \Ex \left ( \sigma_0 +M \int_0^t (1+\sigma_u) \dd u + \int_0^t \beta(u, \sigma_u) \dd \hat B_u \right ) \\
				&= \sigma_0 + M \int_0^t (1 + \hat \Ex(\sigma_u)) \dd u.
\end{align*}
Define $m_t := \hat \Ex (\sigma_t)$. After redefining constants, we obtain the following integral inequality,
\begin{align*}
	m_t \leq c_0(1 + t) + c\int_0^t m_u \dd u.
\end{align*} 
Then by Gronwall's inequality, we obtain
\begin{align*}
	m_{t} \leq c_0(1 + t) e^{c t}.
\end{align*} 
\end{proof}

Let 
\begin{align*}
	\text{Put} := e^{-\int_0^T r_t^d \dd t } \Ex ( K - S_T)_+
\end{align*}
be the price of a put option on $S$. Denote by $(\FF_t^B)_{0 \leq t \leq T}$ the filtration generated by the Brownian motion $B$, as well as $\NN(\cdot)$ and $\phi(\cdot)$ the standard Normal distribution and density functions respectively.
\begin{proposition}
$\text{Put}$ can be expressed as
\begin{align}
	\begin{split}
		\text{Put} &= \Ex \left (  e^{-\int_0^T r_t^d \dd t } \Ex \big  [ (K - S_T)_+ | \FF_T^B \big ] \right ) \\
		&= \Ex \left ( \text{Put}_{\text{BS}} \left ( S_0 \xi_T, \int_0^T \sigma_t ( 1 - \rho_t^2) \dd t \right ) \right ), \label{mixing}
	\end{split}
\end{align}
where 
\begin{align}
\begin{split}
	\text{Put}_{\text{BS}}(x, y) &:= K e^{-\int_0^T r_t^d \dd t } \NN (-d_- ) - x e^{-\int_0^T r_t^f \dd t } \NN ( -d_+),  \label{PBS}  \\
	d_{\pm}(x,y) := d_{\pm} &:= \frac{ \ln(x/K) + \int_0^T ( r_t^d - r_t^f ) \dd t}{\sqrt{y}} \pm \frac{1}{2} \sqrt{y}.
\end{split}
\end{align}
\end{proposition}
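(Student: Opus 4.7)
The plan is a conditioning (tower property) argument combined with freezing of $\FF_T^B$-measurable coefficients, reducing the conditional law of $S_T$ to a standard log-normal so that the inner expectation collapses to a Black-Scholes put price.

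First I would observe that $\sigma$ is driven by $B$ alone, so the whole path $(\sigma_t)_{0\leq t\leq T}$, and hence $\xi_T$ as well as the integrated variance $\int_0^T \sigma_t(1-\rho_t^2)\,\dd t$, is $\FF_T^B$-measurable. Writing the explicit representation of $S_T$ derived just before the proposition, namely
\begin{align*}
S_T = S_0\,\xi_T\,\exp\!\left\{\int_0^T (r_t^d-r_t^f)\,\dd t - \tfrac{1}{2}\int_0^T \sigma_t(1-\rho_t^2)\,\dd t + \int_0^T \sqrt{\sigma_t(1-\rho_t^2)}\,\dd Z_t\right\},
\end{align*}
the only randomness not measurable with respect to $\FF_T^B$ sits in the $\dd Z$-integral. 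The key technical step is then the conditional distribution of $M_T:=\int_0^T \sqrt{\sigma_t(1-\rho_t^2)}\,\dd Z_t$ given $\FF_T^B$. By construction $Z$ is a $\Qro$-Brownian motion independent of $B$, so conditioning on $\FF_T^B$ freezes the integrand $\sqrt{\sigma_t(1-\rho_t^2)}$ while leaving $Z$ a Brownian motion. Hence $M_T\,|\,\FF_T^B$ is centered Gaussian with variance $\int_0^T \sigma_t(1-\rho_t^2)\,\dd t$. This is the step I expect to be the most delicate; I would justify it either by a monotone-class / approximation argument with simple integrands $\sqrt{\sigma}\rind{\{\cdot\leq \tau_n\}}$ (the same localisation used in \Cref{lemma:martingalexi}) and a dominated convergence passage to the limit, or by directly invoking that independence of $Z$ and $B$ together with $\FF_T^B$-measurability of the integrand turn the stochastic integral, conditionally, into the Wiener integral of a deterministic function.

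Once this conditional law is in hand, the inner conditional price
\begin{align*}
e^{-\int_0^T r_t^d\,\dd t}\,\Ex\bigl[(K-S_T)_+\,\bigl|\,\FF_T^B\bigr]
\end{align*}
is, by the tower property, exactly the Black-Scholes put price for a log-normal underlying with initial value $S_0\xi_T$, total variance $\int_0^T \sigma_t(1-\rho_t^2)\,\dd t$, and deterministic domestic/foreign rates $r^d,r^f$. A direct computation with the Gaussian density (completing the square in the two pieces $K$ and $S_T$, each producing a standard normal CDF evaluated at $-d_-$ and $-d_+$ respectively) matches precisely the definition of $\text{Put}_{\text{BS}}(x,y)$ in \cref{PBS} with $x=S_0\xi_T$ and $y=\int_0^T \sigma_t(1-\rho_t^2)\,\dd t$.

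Finally I would take the outer expectation over $\FF_T^B$ to obtain \cref{mixing}. For this last step integrability must be checked: since $\text{Put}_{\text{BS}}(x,y)\leq K e^{-\int_0^T r_t^d\,\dd t}$ uniformly in $(x,y)$, the outer expectation is finite and the use of the tower property and Fubini is legitimate. The martingale property of $\xi$ from \Cref{lemma:martingalexi} is not strictly needed for existence here but is what guarantees that the call-put parity / forward $S_0\xi_T$ interpretation of the BS argument is consistent, a point I would mention en route.
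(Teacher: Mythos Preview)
Your proposal is correct and follows essentially the same route as the paper's proof in \Cref{appen:mixing}: condition on $\FF_T^B$, identify $S_T\,|\,\FF_T^B$ as log-normal with parameters $(\tilde\mu(T),\tilde\sigma^2(T))$ determined by the frozen $\FF_T^B$-measurable quantities, and reduce the inner expectation to a Black--Scholes put formula. If anything you are more explicit than the paper about justifying the conditional Gaussianity of the $\dd Z$-integral and about the integrability needed for the tower property, which the paper simply asserts.
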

\begin{proof}
This is a consequence of the mixing solution methodology, which is detailed in \Cref{appen:mixing}.
\end{proof}

\begin{proposition}[Second-order put option price approximation]
\label{prop:secondorderput}
The second-order put option price approximation, denoted by $\text{Put}^{(2)}$, is given by
\begin{align}
\begin{split}
\text{Put}^{(2)} &=  \text{Put}_{\text{BS}} ( \hat x, \hat y) \\ &+ \frac{1}{2} \partial_{xx} \text{Put}_{\text{BS}} ( \hat x, \hat y)  S_0^2 \Ex (\xi _T - 1)^2 + \frac{1}{2}\partial_{yy} \text{Put}_{\text{BS}} ( \hat x, \hat y)   \Ex \left (\int_0^T (1 - \rho_t^2) (\sigma_t - \Ex(\sigma_t)) \dd t \right )^2 \\ 
&+\partial_{xy} \text{Put}_{\text{BS}} ( \hat x, \hat y) S_0 \Ex \left \{  ( \xi_T - 1) \left (\int_0^T (1 - \rho_t^2) (\sigma_t - \Ex(\sigma_t)) \dd t \right ) \right \}, \label{price2}
\end{split}
\end{align}
where $ (\hat x, \hat y) := ( S_0, \int_0^T ( 1 - \rho_t^2) \Ex(\sigma_t) \dd t)$.
\end{proposition}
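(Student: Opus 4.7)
The plan is to interpret $\text{Put}^{(2)}$ as literally the expectation of the second-order Taylor polynomial of $(x,y)\mapsto \text{Put}_{\text{BS}}(x,y)$ around the point $(\hat x,\hat y)$, evaluated at the random argument $(X,Y):=\bigl(S_0\xi_T,\int_0^T (1-\rho_t^2)\sigma_t\,\dd t\bigr)$. The whole content of the proposition is then that $(\hat x,\hat y)=(\Ex X,\Ex Y)$, so that the first-order contributions drop out when expectations are taken and only the quadratic terms survive.

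First I would verify that $\hat x=\Ex X$ and $\hat y=\Ex Y$. For $\hat x$: by \Cref{lemma:martingalexi}, $\xi$ is a $\Qro$-martingale with $\xi_0=1$, so $\Ex(\xi_T)=1$ and hence $\Ex(S_0\xi_T)=S_0=\hat x$. For $\hat y$: Tonelli (the integrand is nonnegative since $\rho_t^2\le 1$ and $\sigma_t\ge 0$) gives $\Ex\bigl(\int_0^T(1-\rho_t^2)\sigma_t\,\dd t\bigr)=\int_0^T(1-\rho_t^2)\Ex(\sigma_t)\,\dd t=\hat y$. These two identities are exactly why the expansion point is the natural one.

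Next I would write down the formal second-order Taylor polynomial of $f(x,y):=\text{Put}_{\text{BS}}(x,y)$ at $(\hat x,\hat y)$,
\begin{align*}
T_2 f(X,Y) &= f(\hat x,\hat y) + \partial_x f(\hat x,\hat y)(X-\hat x)+\partial_y f(\hat x,\hat y)(Y-\hat y) \\
&\quad + \tfrac{1}{2}\partial_{xx}f(\hat x,\hat y)(X-\hat x)^2+\tfrac{1}{2}\partial_{yy}f(\hat x,\hat y)(Y-\hat y)^2 \\
&\quad + \partial_{xy}f(\hat x,\hat y)(X-\hat x)(Y-\hat y),
\end{align*}
and take $\Qro$-expectation term by term. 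The two linear terms vanish because $\Ex(X-\hat x)=0$ and $\Ex(Y-\hat y)=0$ by the previous step. Substituting $X-\hat x=S_0(\xi_T-1)$ and $Y-\hat y=\int_0^T(1-\rho_t^2)(\sigma_t-\Ex(\sigma_t))\,\dd t$ into the three remaining quadratic terms yields precisely the right-hand side of \cref{price2}, which by definition is $\text{Put}^{(2)}$.

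The only step that requires any care is the legitimacy of passing the expectation through the quadratic expressions: I would note that $\partial_{xx}f$, $\partial_{yy}f$ and $\partial_{xy}f$ are deterministic constants once evaluated at $(\hat x,\hat y)$, so the only thing needed is finiteness of the three second moments $\Ex(\xi_T-1)^2$, $\Ex\bigl(\int_0^T(1-\rho_t^2)(\sigma_t-\Ex\sigma_t)\,\dd t\bigr)^2$ and their cross expectation. The first is finite under the standing assumptions (the same Girsanov/Gronwall machinery used in \Cref{lemma:martingalexi} controls $\Ex(\sigma_t^2)$ and hence, via $\dd\xi_t=\rho_t\sqrt{\sigma_t}\xi_t\,\dd B_t$, the $L^2$-norm of $\xi_T$); the other two are controlled by Cauchy--Schwarz from the same bound. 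No expansion error appears in the statement itself, so there is nothing further to quantify here; that task is deferred to \Cref{sec:erroranalysis2}.
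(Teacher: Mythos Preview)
Your proposal is correct and follows essentially the same route as the paper: Taylor-expand $\text{Put}_{\text{BS}}$ to second order around the mean $(\hat x,\hat y)$ of the random argument, invoke \Cref{lemma:martingalexi} to identify $\hat x=S_0$, and observe that the first-order terms vanish upon taking expectations. Your write-up is in fact more careful than the paper's---you explicitly justify $\hat y=\Ex Y$ via Tonelli and comment on the finiteness of the second moments---whereas the paper simply treats $\text{Put}^{(2)}$ as the definition of the second-order approximation and does not dwell on integrability.
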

\begin{proof}
Notice that $\text{Put}_{\text{BS}}$ is a composition of smooth functions. Hence, $\text{Put}_{\text{BS}}$ is smooth on $(\reals^2_+;\reals_+)$. We Taylor expand $\text{Put}_{\text{BS}}$ around the mean of $ \left ( S_0 \xi_T, \int_0^T \sigma_t ( 1 - \rho_t^2) \dd t \right )$ up to second-order. By \Cref{lemma:martingalexi}, the expansion point is $ (\hat x, \hat y) = ( S_0, \int_0^T ( 1 - \rho_t^2) \Ex(\sigma_t) \dd t)$. Thus
\begin{align*}
	&\text{Put}_{\text{BS}} \left ( S_0 \xi_T, \int_0^T \sigma_t  ( 1 - \rho_t^2) \dd t \right ) \approx \  \text{Put}_{\text{BS}} ( \hat x, \hat y) \\&
	+ \partial_x \text{Put}_{\text{BS}} ( \hat x, \hat y) S_0 ( \xi_T - 1) + \partial_y \text{Put}_{\text{BS}} ( \hat x, \hat y)   \left (\int_0^T (1 - \rho_t^2) (\sigma_t - 		\Ex(\sigma_t)) \dd t \right ) \\
	&+ \frac{1}{2} \partial_{xx} \text{Put}_{\text{BS}} ( \hat x, \hat y)  S_0^2 (\xi _T - 1)^2 + \frac{1}{2}\partial_{yy} \text{Put}_{\text{BS}} ( \hat x, \hat y)    \left 		(\int_0^T (1 - \rho_t^2) (\sigma_t - \Ex(\sigma_t)) \dd t \right )^2 \\ 
	&+\partial_{xy} \text{Put}_{\text{BS}} ( \hat x, \hat y) S_0 ( \xi_T - 1) \left (\int_0^T (1 - \rho_t^2) (\sigma_t - \Ex(\sigma_t)) \dd t \right ).
\end{align*}
Taking expectation yields a second-order approximation to the put option price, that is, $\text{Put}^{(2)}$. Notice that $\text{Put}_{\text{BS}} ( \hat x, \hat y) $ is a deterministic quantity, thus the first-order terms will vanish. 
\end{proof}
The partial derivatives of $\text{Put}_{\text{BS}}$ are analogous to the Black-Scholes Greeks, which are explicit and provided in \Cref{appen:greeks}. In order to simplify $\text{Put}^{(2)}$, what remains to be done is the calculation of each of the expectations from \Cref{prop:secondorderput}, which are 
\begin{align} 
	&\Ex (\xi _T - 1)^2, \label{mom1} \\
	&\Ex \left (\int_0^T (1 - \rho_t^2) (\sigma_t - \Ex(\sigma_t)) \dd t \right )^2, \label{mom2} \\
	&\Ex \left \{  ( \xi_T - 1) \left (\int_0^T (1 - \rho_t^2) (\sigma_t - \Ex(\sigma_t)) \dd t \right ) \right \}. \label{mom3}
\end{align}

\begin{remark}[Call options]
When considering the pricing of call options, the approximation methodology is essentially the same as that of put options. It is clear that the mixing solution methodology from \Cref{appen:mixing} can be easily adapted to the case of call options, in which case we obtain
\begin{align*}
\text{Call} &:= e^{-\int_0^T r_t^d \dd t }  \Ex (  S_T- K)_+  =  \Ex \left (  e^{-\int_0^T r_t^d \dd t } \Ex \big  [ ( S_T-K)_+ | \FF_T^B \big ] \right ) \\
&= \Ex \left ( \text{Call}_{\text{BS}} \left ( S_0 \xi_T, \int_0^T \sigma_t ( 1 - \rho_t^2) \dd t \right ) \right ),
\end{align*}
where 
\begin{align*}
\text{Call}_{\text{BS}}(x, y) &:=  x e^{-\int_0^T r_t^f \dd t } \NN ( d_+) - K e^{-\int_0^T r_t^d \dd t } \NN (d_- ).
\end{align*}
Moreover, put-call parity states that
\begin{align*} 
	\text{Call}_{\text{BS}}(x,y) -\text{Put}_{\text{BS}}(x,y) = x e^{-\int_0^T r_t^f  \dd t} - K e^{-\int_0^T r_t^d \dd t }.
\end{align*}
Thus the second-order Call partial derivatives will be the same as the second-order Put partial derivatives. Comparing the mixing solution expressions for Call and Put, one can see that the only difference between the call and put option second-order approximation is the zero-th order term. For this reason, in this article, we need only consider the pricing of put options.
\end{remark}

\begin{remark}[Greeks]
One can obtain second-order approximations of put option Greeks by simply differentiating the expression for $\text{Put}^{(2)}$. These expressions are provided in \Cref{appen:Greeks}.
\end{remark}

\section{Calculation of expectations}
\noindent
\label{sec:mom}
\begin{lemma}
\label{lem:girsseqmeas}
Let $(\Qro_n)_{n \geq 0}$ be a sequence of probability measures equivalent to $\Qro$, defined by the Radon-Nikodym derivatives
\begin{align*}
	\frac{\dd \Qro_{n+1} }{ \dd \Qro_{n}} := \xi_T^{{(n)}} :=  \exp \left \{ \int_0^T \rho_u \sqrt{\sigma_u} \dd B^{n}_u - \frac{1}{2} \int_0^T \rho_u^2 \sigma_u \dd u 	\right \},\quad \xi_T^{(0)} := \xi_T, \quad n \geq 0,
\end{align*}
where $\Qro_0 := \Qro$ and $B^0 := B$. Under $\Qro_n$, $B_t^n := B_t^{n-1} - \int_0^t \rho_u \sqrt{\sigma_u} \dd u $ is a Brownian motion. 
\end{lemma}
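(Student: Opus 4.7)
The plan is to proceed by induction on $n \geq 1$, with Girsanov's theorem and \Cref{lemma:martingalexi} doing the heavy lifting at each step. The base case $n=1$ is essentially already done: $\xi_T^{(0)} = \xi_T$ is shown to be a true $\Qro$-martingale in \Cref{lemma:martingalexi}, so it has unit expectation under $\Qro_0 = \Qro$ and hence defines a probability measure $\Qro_1$ equivalent to $\Qro$. Because $\xi^{(0)}$ is the Dol\'eans--Dade exponential of $\int_0^{\cdot} \rho_u \sqrt{\sigma_u}\,\dd B^0_u$, Girsanov's theorem directly yields that $B^1_t = B^0_t - \int_0^t \rho_u\sqrt{\sigma_u}\,\dd u$ is a $\Qro_1$-Brownian motion, as claimed.

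For the inductive step, I would assume that $\Qro_{n-1}$ is a bona fide probability measure and that $B^{n-1}$ is a $\Qro_{n-1}$-Brownian motion, and then prove that $\xi^{(n-1)}$ is a true $\Qro_{n-1}$-martingale by reproducing the argument of \Cref{lemma:martingalexi} verbatim, with $\Qro$ replaced by $\Qro_{n-1}$ and $B$ replaced by $B^{n-1}$: introduce the localising sequence $\tau_m := \inf\{t \geq 0 : \sigma_t > m\}$ (renaming to avoid clash with the induction index), observe that $(\xi^{(n-1)}_{t \wedge \tau_m})_{t}$ is a true martingale on account of its bounded It\^o integrand, and apply the Vall\'ee--Poussin criterion via $\Ex_{\Qro_{n-1}}\!\big(\xi^{(n-1)}_{T\wedge\tau_m}\log \xi^{(n-1)}_{T\wedge\tau_m}\big)$ together with a Gronwall-type moment bound on $\sigma$ to conclude uniform integrability of $(\xi^{(n-1)}_{T\wedge \tau_m})_{m}$ and hence $\Ex_{\Qro_{n-1}}(\xi_T^{(n-1)})=1$. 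Once this is established, $\Qro_n$ is a well-defined probability measure, and Girsanov's theorem delivers the Brownian motion property of $B^n$ under $\Qro_n$ exactly as in the base case.

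The main obstacle is the Gronwall moment estimate in this iterated setting. Because each of the previous changes of measure shifts the drift of $\sigma$ by $\rho_t\beta(t,\sigma_t)\sqrt{\sigma_t}$, the dynamics of $\sigma$ under $\Qro_{n-1}$ read
\begin{align*}
\dd \sigma_t = \big[\alpha(t,\sigma_t) + (n-1)\rho_t\beta(t,\sigma_t)\sqrt{\sigma_t}\big]\,\dd t + \beta(t,\sigma_t)\,\dd B^{n-1}_t,
\end{align*}
so when one performs the auxiliary Girsanov change using $\xi^{(n-1)}_{\cdot \wedge \tau_m}$ the drift acquires one further shift and the key linear-growth inequality \cref{importantbound1} from the proof of \Cref{lemma:martingalexi} must be upgraded to
\begin{align*}
n\,\rho_t\,\beta(t,x)\sqrt{x} + \alpha(t,x) \leq M_n(1+x), \qquad x \geq 0,\ t \in [0,T],
\end{align*}
for some $n$-dependent constant $M_n$. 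This iterated version of \Cref{ass:martingale1} of \Cref{ass:martingale} is the only non-trivial verification; once it is in place (it holds in particular whenever $\alpha$ and $\rho\beta\sqrt{\cdot}$ each have linear growth in $x$, which is the case for all models appearing in the table of Section~1 for which Assumption A is imposed), the Vall\'ee--Poussin/Gronwall machinery of \Cref{lemma:martingalexi} goes through with $n$-dependent constants, delivering $\Ex_{\Qro_{n-1}}(\xi_T^{(n-1)})=1$ and closing the induction.
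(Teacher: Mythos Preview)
Your proposal is correct and takes the same approach as the paper, namely Girsanov's theorem; the paper's proof is in fact just the one-line statement ``This is a clear consequence of Girsanov's theorem.'' Your more careful inductive treatment---verifying that each $\xi^{(n-1)}$ is a true $\Qro_{n-1}$-martingale by rerunning the Vall\'ee--Poussin/Gronwall argument of \Cref{lemma:martingalexi} with an $n$-dependent version of the linear-growth bound \cref{importantbound1}---fills in details the paper leaves implicit.
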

\begin{proof}
This is a clear consequence of Girsanov's theorem.
\end{proof}

\begin{remark}
\label{rem:girsseqmeas}Through \Cref{lem:girsseqmeas}, we have the following relationships:
\begin{align}
	\xi_T^{(n)} = \xi_T^{(n-1)} e^{-\int_0^T \rho_u^2 \sigma_u \dd u }, \quad n \geq 1, \label{calc1}
\end{align}
and
\begin{align}
\begin{split}
	\Ex_{\Qro_n}(X) &= \Ex_{\Qro_{n-1}}(X \xi_T^{(n-1)}), \\
	\Ex_{\Qro_{n-1}}(X) &= \Ex_{\Qro_n}\left (X \frac{1}{\xi_T^{(n-1)}} \right ). \label{calc2}
\end{split}
\end{align}
These relationships allow for alternative and often more convenient calculations of expectations under $\Qro$.
\end{remark}

Using \Cref{rem:girsseqmeas}, we can now give alternative expressions for the expectations seen in \crefrange{mom1}{mom3}.

\subsection{\texorpdfstring{$\Ex(\xi_T - 1)^2$}{First expectation}}
\label{sec:mom1}
First, expanding \cref{mom1} gives 
\begin{align*}
\Ex(\xi_T - 1)^2 = \Ex(\xi_T^2) - 1.
\end{align*}
This second moment can be dealt with a number of changes of measures.
\begin{align}
	\Ex(\xi_T^2) &= \Ex_{\Qro_1}(\xi_T) = \Ex_{\Qro_1}(\xi_T^{(1)}e^{\int_0^T \rho_t^2 \sigma_t \dd t }) \nonumber \\
			   &= \Ex_{\Qro_2} ( e^{\int_0^T \rho_t^2 \sigma_t \dd t }). \label{Exexp2}
\end{align}
Under the assumption of constant parameters\footnote{That is, $\alpha(t, \sigma_t) = \alpha(\sigma_t)$, $\beta(t, \sigma_t) = \beta(\sigma_t)$ and $\rho_t = \rho$. } we may calculate \cref{Exexp2} explicitly via the Laplace transform for certain processes $\sigma$. However to our knowledge, there exists no explicit solution when parameters are time-dependent, see \citep{hurd2008explicit}. Instead, we approximate \cref{Exexp2} by expanding the exponential around the mean of $\int_0^T \rho_t^2 \sigma_t \dd t $ to second-order. 
\begin{align*}
&\Ex_{\Qro_2} (e^{\int_0^T \rho_t^2 \sigma_t \dd t }) \\ &\approx  \Ex_{\Qro_2} \left \{ e^{\int_0^T \rho_t^2  \Ex_{\Qro_2} (\sigma_t) \dd t } \left[ 1 +   \int_0^T \rho_t^2  \left (\sigma_t - \Ex_{\Qro_2} (\sigma_t) \right ) \dd t  + \frac{1}{2} \left ( \int_0^T \rho_t^2  \left (\sigma_t - \Ex_{\Qro_2} (\sigma_t)  \right ) \dd t \right )^2 \right ]\right \}\\
&=  e^{\int_0^T \rho_t^2  \Ex_{\Qro_2} (\sigma_t) \dd t} \left  \{ 1 + \frac{1}{2}  \Ex_{\Qro_2} \left [ \left ( \int_0^T \rho_t^2  \left (\sigma_t - \Ex_{\Qro_2} (\sigma_t) \right ) \dd t  \right )^2 \right ] \right  \} \\
&=  e^{\int_0^T \rho_t^2  \Ex_{\Qro_2} (\sigma_t) \dd t} \left  \{ 1 + \int_0^T \rho_t^2 \int_0^t \rho_s^2 \Cov_{\Qro_2}(\sigma_s, \sigma_t) \dd s \dd t  \right  \},
\end{align*}
where we have used the fact that $\left (\int_0^T f(t) \dd t \right )^2 = 2 \int_0^T  f(t) \left  ( \int_0^t f(s) \dd s \right ) \dd t$. 

\subsection{\texorpdfstring{$\Ex \left (\int_0^T (1 - \rho_t^2) ( \sigma_t  - \Ex( \sigma_t )) \dd t \right )^2$}{Second expectation}}
\label{sec:mom2}
To calculate \cref{mom2}, we use the same approach from \Cref{sec:mom1}.
\begin{align*}
	\Ex \left (\int_0^T (1 - \rho_t^2) (\sigma_t - \Ex(\sigma_t)) \dd t \right )^2 &= 2 \int_0^T (1 - \rho_t^2) \left ( \int_0^t (1 - \rho_s^2) \Cov(\sigma_s, \sigma_t) \dd s \right ) \dd t.
\end{align*}

\subsection{\texorpdfstring{$\Ex \left \{  ( \xi_T - 1) \left (\int_0^T (1 - \rho_t^2) ( \sigma_t  - \Ex( \sigma_t )) \dd t \right ) \right \}$}{Third expectation}}
\label{sec:mom3}
Calculation of the mixed expectation \cref{mom3} gives 
\begin{align*}
\Ex \left \{  ( \xi_T - 1) \left (\int_0^T (1 - \rho_t^2) (\sigma_t - \Ex(\sigma_t)) \dd t \right ) \right \}  &= 
  \int_0^T (1 - \rho_t^2)\big ( \Ex(\xi_T \sigma_t) - \Ex(\sigma_t) \big )  \dd t \\
&= \int_0^T (1 - \rho_t^2)\big ( \Ex_{\Qro_1}( \sigma_t) - \Ex(\sigma_t) \big )  \dd t.
\end{align*}

\section{Pricing for specific models}
\noindent
\label{sec:pricingspecmodels2}
We now introduce specific dynamics for both the spot and its underlying variance process. From \Cref{sec:mom}, it is apparent that simplifying the expression for $\text{Put}^{(2)}$ will largely depend on the tractability of the variance process $\sigma$ under the original measure $\Qro$, as well as the artificial measures $\Qro_1$ and $\Qro_2$. 

\begin{remark}[Notation for key processes]
Let $(\tilde \kappa_t)_{0 \leq t \leq T}, (\tilde \theta_t)_{0 \leq t \leq T}$ and $(\tilde \lambda_t)_{0 \leq t \leq T}$ be time-dependent, deterministic, strictly positive and bounded on $[0,T]$. Let $\tilde B$ be an arbitrary Brownian motion. We will utilise the following terminology:
\begin{enumerate}[label = (\arabic*), ref = \arabic*]
\item If $\tilde V$ solves 
\begin{align*}
\dd \tilde V_t = \tilde \kappa_t (\tilde \theta_t - \tilde V_t) \dd t + \tilde \lambda_t \sqrt{\tilde V_t} \dd \tilde B_t, \quad \tilde V_0 =  \tilde v_0,
\end{align*}
then we call $\tilde V$ a CIR$(\tilde v_0 ; \tilde \kappa_t, \tilde \theta_t, \tilde \lambda_t)$.
\item If $\tilde V$ solves 
\begin{align*}
	\dd \tilde V_t = \tilde \kappa_t ( \tilde \theta_t - \tilde V_t) \dd t + \tilde \lambda_t \tilde V_t \dd \tilde B_t, \quad \tilde V_0 = \tilde v_0,
\end{align*}
then we call $\tilde V$ an IGa$(\tilde v_0 ; \tilde \kappa_t, \tilde \theta_t, \tilde \lambda_t)$.
\end{enumerate}
We point the reader towards \Cref{appen:moments} for further information on these processes.
\end{remark}

\subsection{Heston model} 
Suppose the spot $S$ with variance $V$ follows the Heston dynamics 
\begin{align}
\begin{split}
\label{HESTON}
\dd S_t &= S_t ( (r_t^d - r_t^f ) \dd t + \sqrt{V_t} \dd W_t ), \quad S_0, \\
\dd V_t &= \kappa_t (\theta_t - V_t) \dd t + \lambda_t \sqrt{V_t}  \dd B_t , \quad V_0 = v_0,\\
\dd \langle W, B \rangle_t &= \rho_t \dd t,
\end{split}
\end{align}
where $(\kappa_t)_{0 \leq t \leq T}, (\theta_t)_{0 \leq t \leq T}$ and $(\lambda_t)_{0 \leq t \leq T}$ are time-dependent, deterministic, strictly positive and bounded on $[0, T]$.

It is clear that the variance process $V$ in \cref{HESTON} is a CIR$(v_0 ; \kappa_t , \theta_t, \lambda_t)$. 
\begin{remark}
Note that the Heston model satisfies \Cref{ass:martingale}, since
\begin{align*}
	\sup_{t\in [0,T]} \limsup_{x \to \infty} \frac{\rho_t \lambda_t \sqrt{x} \sqrt{x} + \kappa_t (\theta_t - x)}{x} = 	\sup_{t\in [0,T]} \limsup_{x \to \infty} \left (\rho_t \lambda_t - \kappa_t + \frac{\kappa_t \theta_t}{x} \right ) < \infty.
\end{align*}
Hence by \Cref{lemma:martingalexi}, $\xi$ is a martingale.
\end{remark}

\begin{lemma}
Let $(\Qro_n)_{n \geq 0}$ be a sequence of probability measures equivalent to $\Qro$, defined by the Radon-Nikodym derivatives
\begin{align*}
	\frac{\dd \Qro_{n+1} }{ \dd \Qro_{n}} := \xi_T^{{(n)}} :=  \exp \left \{ \int_0^T \rho_u \sqrt{V_u} \dd B^{n}_u - \frac{1}{2} \int_0^T \rho_u^2 V_u \dd u \right \},\quad \xi_T^{(0)} := \xi_T, \quad n \geq 0,
\end{align*}
where $\Qro_0 := \Qro$ and $B^0 := B$. Under $\Qro_n$, $B_t^n := B_t^{n-1} - \int_0^t \rho_u \sqrt{V_u} \dd u $ is a Brownian motion. For $ n \geq 0$, the dynamics of $V$ under the measure $\Qro_n$ are
\begin{align*}
	\dd V_t = (\kappa_t - n\lambda_t \rho_t)\left  ( \frac{\theta_t \kappa_t}{\kappa_t - n\lambda_t \rho_t }  - V_t \right) \dd t + \lambda_t \sqrt{V_t} \dd B_t^n,
\end{align*}
which is a CIR$(v_0; \kappa_t - n\lambda_t \rho_t, \frac{\theta_t \kappa_t}{\kappa_t - n\lambda_t \rho_t } , \lambda_t)$. 
\end{lemma}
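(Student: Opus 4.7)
My plan is to prove the result by induction on $n$, with the base case $n=0$ being the given Heston dynamics under $\Qro_0 = \Qro$. The first claim, that $B^n$ is a Brownian motion under $\Qro_n$, is precisely Lemma~\ref{lem:girsseqmeas} applied with $\sigma = V$, provided one can verify at each step that $\xi_T^{(n)}$ is a true martingale (not merely a local martingale) under $\Qro_n$. I would check this as part of the induction: if the inductive hypothesis yields CIR dynamics for $V$ under $\Qro_n$ with deterministic bounded parameters $(\kappa_t - n\lambda_t\rho_t,\, \frac{\theta_t \kappa_t}{\kappa_t - n\lambda_t\rho_t},\, \lambda_t)$, then Assumption~\ref{ass:martingale} holds with $\alpha(t,x) = (\kappa_t - n\lambda_t\rho_t)\bigl(\frac{\theta_t \kappa_t}{\kappa_t - n\lambda_t\rho_t} - x\bigr)$ and $\beta(t,x) = \lambda_t \sqrt{x}$, the verification being identical to the one carried out above for the original Heston case. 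Hence Lemma~\ref{lemma:martingalexi} applied under $\Qro_n$ shows that $\xi^{(n)}$ is a martingale, so $\Qro_{n+1}$ is a well-defined equivalent probability measure and Girsanov's theorem produces $B^{n+1}$ as a Brownian motion under $\Qro_{n+1}$.

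For the dynamics in the inductive step, I would start from
\begin{align*}
dV_t = \bigl[\theta_t \kappa_t - (\kappa_t - n\lambda_t \rho_t) V_t\bigr] dt + \lambda_t \sqrt{V_t}\, dB_t^n
\end{align*}
(the CIR form after distributing the drift factor), and substitute the Girsanov relation $dB_t^n = dB_t^{n+1} + \rho_t \sqrt{V_t}\,dt$ obtained from $B^{n+1}_t = B^n_t - \int_0^t \rho_u \sqrt{V_u}\,du$. Rearranging yields
\begin{align*}
dV_t &= \bigl[\theta_t \kappa_t - (\kappa_t - n\lambda_t \rho_t) V_t + \lambda_t \rho_t V_t\bigr] dt + \lambda_t \sqrt{V_t}\, dB_t^{n+1}\\
&= \bigl[\theta_t \kappa_t - (\kappa_t - (n+1)\lambda_t \rho_t) V_t\bigr] dt + \lambda_t \sqrt{V_t}\, dB_t^{n+1},
\end{align*}
which I then re-factor as a CIR with parameters $\bigl(\kappa_t - (n+1)\lambda_t \rho_t,\, \frac{\theta_t \kappa_t}{\kappa_t - (n+1)\lambda_t \rho_t},\, \lambda_t\bigr)$, closing the induction. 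As a sanity check, one could alternatively dispense with induction by iterating to obtain $B_t = B_t^n + n\int_0^t \rho_u \sqrt{V_u}\,du$ and substituting directly into the original SDE \eqref{HESTON}, but the inductive formulation is cleaner because the sequence of measures is itself defined iteratively.

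The main obstacle is the martingality of $\xi^{(n)}$ under $\Qro_n$ at every step $n$, which is what makes the induction non-trivial: Lemma~\ref{lemma:martingalexi} only yields martingality for processes whose dynamics satisfy Assumption~\ref{ass:martingale}, so one genuinely needs the inductive hypothesis on the dynamics of $V$ in order to invoke that lemma at the next step. Once this is secured, the rest of the argument is a routine Girsanov substitution followed by algebraic rearrangement.
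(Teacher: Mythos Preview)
Your proposal is correct and follows essentially the same route as the paper: invoke the general Girsanov sequence of Lemma~\ref{lem:girsseqmeas} and then substitute $dB_t^{n-1} = dB_t^{n} + \rho_t\sqrt{V_t}\,dt$ into the $V$-dynamics to read off the new CIR parameters. The paper's own proof is two sentences and does not spell out the inductive martingality check you perform via Assumption~\ref{ass:martingale} and Lemma~\ref{lemma:martingalexi}; your added care there is warranted but not a different method.
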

\begin{proof}
This lemma is simply obtained through \Cref{lem:girsseqmeas}, then expressing $V$ under the new measures $\Qro_n$. 
\end{proof}
Thus the variance process $V$ is a CIR process under all measures considered.

\subsubsection{Pricing under the Heston framework}
\label{pricingH} The second-order approximation of the put option price in the Heston framework is given by the following theorem. 
\begin{theorem}[Second-order Heston put option price]
\label{thm:hestonprice}
Let $\hat x = S_0$ and 
\begin{align*}
	\hat y = \int_0^T ( 1 - \rho_t^2) \Ex(V_t) \dd t = \int_0^T (1- \rho_t^2) \left \{ v_0 e^{- \int_0^t \kappa_z \dd z } + \int_0^t e^{- \int_u^t \kappa_z \dd z } \kappa_u \theta_u \dd u \right \} \dd t.
\end{align*}
The second-order approximation to the put option price in the Heston model, denoted by $\text{Put}_{\text{H}}^{(2)}$, is
\begin{align}
\begin{split}
\text{Put}^{(2)}_{\text{H}} &=  \text{Put}_{\text{BS}} ( \hat x, \hat y) \\ &+ \frac{1}{2} \partial_{xx} \text{Put}_{\text{BS}} ( \hat x, \hat y)  S_0^2 \Ex (\xi _T - 1)^2 + \frac{1}{2}\partial_{yy} \text{Put}_{\text{BS}} ( \hat x, \hat y)   \Ex \left (\int_0^T (1 - \rho_t^2) (V_t - \Ex(V_t)) \dd t \right )^2 \\ 
&+\partial_{xy} \text{Put}_{\text{BS}} ( \hat x, \hat y) S_0 \Ex \left \{ (\xi_T - 1) \left (\int_0^T (1 - \rho_t^2) (V_t - \Ex(V_t)) \dd t \right ) \right \}, \label{hprice2}
\end{split}
\end{align}
where
\begin{align*}
 \hspace{-.8cm} \Ex (\xi _T - 1)^2 &\approx e^{\int_0^T \rho_t^2  \Ex_{\Qro_2} (V_t) \dd t } \left  \{ 1 +  \int_0^T \rho_t^2 \int_0^t \rho_s^2 \Cov_{\Qro_2} (V_s, V_t) \dd s \dd t \right  \} -1,\\
 \hspace{-.8cm} \Ex_{\Qro_2}(V_t) &= v_0 e^{-\int_0^t \kappa_z - 2 \lambda_z \rho_z \dd z } + \int_0^t e^{-\int_u^t \kappa_z - 2\lambda_z \rho_z \dd z } \kappa_u \theta_u \dd u, \\
 \hspace{-.8cm} \Cov_{\Qro_2}(V_s, V_t) &= e^{-\int_s^t \kappa_z - 2\lambda_z \rho_z \dd z } \\ &\cdot \int_0^s \lambda_u^2 e^{-2\int_u^s \kappa_z - 2 \lambda_z \rho_z \dd z } \left [ v_0 e^{-\int_0^u \kappa_z - 2 \lambda_z \rho_z \dd z } + \int_0^u e^{-\int_p^u \kappa_z - 2 \lambda_z \rho_z \dd z } \kappa_p \theta_p \dd p \right ] \dd u,
\end{align*}
\begin{align*}
	\Ex \left (\int_0^T (1 - \rho_t^2) (V_t - \Ex(V_t)) \dd t \right )^2 
	 &= 2 \int_0^T (1 - \rho_t^2) \Bigg ( \int_0^t (1 - \rho_s^2)   \\ &\hspace{-3cm}\cdot \left [  e^{-\int_s^t \kappa_z \dd z } \int_0^s \lambda_u^2 e^{-2 \int_u^s \kappa_z \dd z} \left \{ v_0 e^{- \int_0^u \kappa_z \dd z } + \int_0^u e^{- \int_p^u \kappa_z \dd z }\kappa_p \theta_p \dd p \right \} \dd u \right ]     \dd s \Bigg ) \dd t,
\end{align*}
and
\begin{align*}
&\Ex \left \{  ( \xi_T - 1) \left (\int_0^T (1 - \rho_t^2) (V_t - \Ex(V_t)) \dd t \right ) \right \} \\
&= \int_0^T (1 - \rho_t^2) \Bigg \{ v_0 \left (e^{-\int_0^t \kappa_z -  \lambda_z \rho_z  \dd z}  - e^{-\int_0^t \kappa_z  \dd z } \right ) + \int_0^t \left (e^{-\int_u^t \kappa_z -  \lambda_z \rho_z  \dd z}  - e^{-\int_u^t \kappa_z  \dd z } \right )   \kappa_u \theta_u \dd u \Bigg \} \dd t.
\end{align*}
\begin{proof}
Use \Cref{prop:secondorderput} and adapt \Cref{sec:mom} to the Heston framework. The moments of $V$ correspond to the moments of a CIR process, which are obtained from \Cref{CIRappend}.
\end{proof}
\end{theorem}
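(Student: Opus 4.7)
The plan is to specialise \Cref{prop:secondorderput} to the Heston dynamics \eqref{HESTON} with $\sigma_t = V_t$, and then evaluate the three expectations \eqref{mom1}--\eqref{mom3} in (essentially) closed form using the fact established in the preceding lemma: the variance process remains a CIR process under every auxiliary measure $\Qro_n$ that appears in the calculation, with the time-dependent parameters $(\kappa_t - n\lambda_t \rho_t,\ \theta_t \kappa_t/(\kappa_t - n\lambda_t \rho_t),\ \lambda_t)$. The remark preceding the theorem verifies \Cref{ass:martingale}, hence \Cref{lemma:martingalexi} yields $\Ex(\xi_T) = 1$ and the hypotheses of \Cref{prop:secondorderput} are met; this produces the first line of \eqref{hprice2}, and the expansion point $\hat y$ follows by substituting the CIR mean $\Ex(V_t)$ under $\Qro$ (read off from \Cref{appen:moments}) into $(1-\rho_t^2)\Ex(V_t)$ and integrating in $t$.

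I would then treat each of the three expectations by following the template of \Cref{sec:mom}. For $\Ex(\xi_T - 1)^2$, \Cref{sec:mom1} reduces the problem to $\Ex_{\Qro_2}\bigl(\exp\{\int_0^T \rho_t^2 V_t \,\dd t\}\bigr) - 1$, after which I would apply the second-order Taylor expansion of the exponential around $\int_0^T \rho_t^2 \Ex_{\Qro_2}(V_t)\,\dd t$ and substitute the CIR mean and covariance of $V$ under $\Qro_2$ with shifted parameters $(\kappa_t - 2\lambda_t \rho_t,\ \theta_t \kappa_t/(\kappa_t - 2\lambda_t \rho_t),\ \lambda_t)$. For the second expectation, \Cref{sec:mom2} already gives $2\int_0^T (1 - \rho_t^2) \int_0^t (1-\rho_s^2) \Cov(V_s, V_t)\,\dd s\,\dd t$, into which I would plug the CIR covariance under $\Qro$. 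For the mixed expectation, \Cref{sec:mom3} reduces it to $\int_0^T (1-\rho_t^2)\bigl(\Ex_{\Qro_1}(V_t) - \Ex(V_t)\bigr)\,\dd t$, and I would substitute the two CIR means: the one under $\Qro$ and the one under $\Qro_1$, the latter having drift parameter $\kappa_t - \lambda_t \rho_t$ and level $\theta_t \kappa_t/(\kappa_t - \lambda_t \rho_t)$.

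The main obstacle is mostly bookkeeping: tracking which parameter shift $\kappa_t - n\lambda_t \rho_t$ applies to each term and combining the nested exponentials of integrated drifts $e^{-\int_u^t (\kappa_z - n\lambda_z \rho_z)\,\dd z}$ so that the resulting expressions match those stated. The only genuinely analytic step is the second-order Taylor expansion used in the calculation of $\Ex(\xi_T - 1)^2$, which is precisely where the symbol $\approx$ in the statement originates; all remaining ingredients are direct substitutions of CIR moment formulas from \Cref{appen:moments} into the measure-change identities of \Cref{sec:mom} and the general template of \Cref{prop:secondorderput}.
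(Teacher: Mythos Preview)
Your proposal is correct and follows essentially the same approach as the paper: specialise \Cref{prop:secondorderput} to $\sigma=V$, reduce the three expectations via the change-of-measure calculations in \Cref{sec:mom}, and then substitute the CIR moment formulas from \Cref{CIRappend} with the appropriately shifted drift parameters $\kappa_t - n\lambda_t\rho_t$ under each $\Qro_n$. The paper's own proof is in fact just a two-line pointer to \Cref{prop:secondorderput}, \Cref{sec:mom}, and \Cref{CIRappend}; your write-up is a more detailed version of the same argument.
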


\subsection{GARCH diffusion model}
Suppose the spot  $S$ with variance $V$ follows the GARCH diffusion dynamics 
\begin{align}
\begin{split}
	\dd S_t &= S_t ( (r_t^d - r_t^f ) \dd t + \sqrt{V_t} \dd W_t ), \quad S_0, \\
	\dd V_t &= \kappa_t (\theta_t - V_t) \dd t + \lambda_t V_t  \dd B_t , \quad V_0 = v_0, \label{GARCH}\\
	\dd \langle W, B \rangle_t &= \rho_t \dd t, 
\end{split}
\end{align}
where $(\kappa_t)_{0 \leq t \leq T}, (\theta_t)_{0 \leq t \leq T}$ and $(\lambda_t)_{0 \leq t \leq T}$ are time-dependent, deterministic, strictly positive and bounded on $[0, T]$. It is evident that the variance process $V$ in \cref{GARCH} is an IGa$(v_0; \kappa_t, \theta_t, \lambda_t)$.

\begin{remark}
Unlike the Heston model, the GARCH diffusion model does not satisfy \Cref{ass:martingale}, since
\begin{align*}
	\sup_{t\in [0,T]} \limsup_{x \to \infty} \frac{\rho_t \lambda_t x \sqrt{x} + \kappa_t (\theta_t - x)}{x} = 	\sup_{t\in [0,T]} \limsup_{x \to \infty} \left (\rho_t \lambda_t\sqrt{x} - \kappa_t + \frac{\kappa_t \theta_t}{x} \right ) = \infty.
\end{align*}
Thus, there is no guarantee that $\xi$ is indeed a martingale. Instead, we will now assume that $\xi$ is a martingale and show that, even if it is indeed a martingale, it would provide no benefit in our expansion procedure.
\end{remark}

\begin{lemma}
\label{igalem}
Let $(\Qro_n)_{n \geq 0}$ be a sequence of probability measures equivalent to $\Qro$, defined by the Radon-Nikodym derivatives
\begin{align*}
	\frac{\dd \Qro_{n+1} }{ \dd \Qro_{n}} := \xi_T^{{(n)}} :=  \exp \left \{ \int_0^T \rho_u \sqrt{V_u} \dd B^{n}_u - \frac{1}{2} \int_0^T \rho_u^2 		V_u \dd u \right \},\quad \xi_T^{(0)} := \xi_T, \quad n \geq 0,
\end{align*}
where $\Qro_0 := \Qro$ and $B^0 := B$. Under $\Qro_n$, $B_t^n := B_t^{n-1} - \int_0^t \rho_u \sqrt {V_u} \dd u $ is a Brownian motion. For $ n \geq 0$, the dynamics of $V$ under the measure $\Qro_n$ are
\begin{align*}
	\dd V_t &=  \kappa_t \left (\theta_t  - V_t + \frac{n \lambda_t \rho_t }{\kappa_t} V_t^{3/2} \right ) \dd t  + \lambda_t V_t \dd B^n_t.
\end{align*} 
\end{lemma}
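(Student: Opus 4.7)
The plan is to proceed by induction on $n$, mirroring the change-of-measure argument from the preceding Heston lemma and leveraging \Cref{lem:girsseqmeas} applied in the GARCH setting, while acknowledging that, as flagged in the remark immediately preceding this lemma, we must formally assume each $\xi_T^{(n)}$ is a true $\Qro_n$-martingale so that Girsanov's theorem applies. The base case $n=0$ is immediate: under $\Qro_0=\Qro$ the SDE for $V$ in \cref{GARCH} already has the stated form with no $V_t^{3/2}$ drift term, and $B^0=B$ by definition.

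For the inductive step, we assume that under $\Qro_{n-1}$ one has
\begin{align*}
\dd V_t = \kappa_t \left(\theta_t - V_t + \frac{(n-1)\lambda_t\rho_t}{\kappa_t} V_t^{3/2}\right) \dd t + \lambda_t V_t \dd B^{n-1}_t,
\end{align*}
with $B^{n-1}$ a $\Qro_{n-1}$-Brownian motion. Invoking \Cref{lem:girsseqmeas} with the Dol\'eans--Dade exponential $\xi_T^{(n-1)}$ introduces the $\Qro_n$-Brownian motion $B^n_t = B^{n-1}_t - \int_0^t \rho_u \sqrt{V_u} \dd u$. Substituting $\dd B^{n-1}_t = \dd B^n_t + \rho_t\sqrt{V_t} \dd t$ back into the SDE and collecting terms contributes an additional $\lambda_t \rho_t V_t^{3/2} \dd t$ to the drift, which upgrades the coefficient of $V_t^{3/2}$ from $(n-1)\lambda_t\rho_t$ to $n\lambda_t\rho_t$, exactly as claimed. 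Thus each iteration of the measure change increments the non-linear drift coefficient by one unit of $\lambda_t\rho_t$.

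The main obstacle is precisely the martingality issue highlighted just above: since \Cref{ass:martingale} fails for the GARCH diffusion, \Cref{lemma:martingalexi} does not deliver that $\xi^{(0)}$, let alone the iterated exponentials $\xi^{(n)}$ for $n\geq 1$, are true martingales. The calculation must therefore be understood as carried out under the working hypothesis that each $\xi_T^{(n)}$ is a true $\Qro_n$-martingale; the substantive content of the lemma is then that, even granting this, the transformed drift acquires the non-affine $V_t^{3/2}$ term, which foreshadows the subsequent observation that the change of measure fails to yield a tractable class for $V$ under $\Qro_n$. The remaining pieces --- verifying that the new drift absorbs the cross-variation contribution $\dd\langle \lambda_t V_t \dd B^{n-1}, \rho_t\sqrt{V_t}\dd u\rangle$ correctly and that $B^n$ is adapted to the original filtration --- are routine bookkeeping.
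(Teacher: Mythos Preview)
Your proof is correct and follows essentially the same approach as the paper: the paper's one-line proof simply invokes \Cref{lem:girsseqmeas} and then re-expresses $V$ under each $\Qro_n$, which is precisely what your induction spells out in detail. One small remark: the ``cross-variation contribution'' you mention at the end is actually zero, since $\rho_t\sqrt{V_t}\,\dd t$ has finite variation, so the substitution $\dd B^{n-1}_t = \dd B^n_t + \rho_t\sqrt{V_t}\,\dd t$ into the diffusion term is purely a drift adjustment with no It\^o correction.
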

\begin{proof}
This lemma is simply obtained through \Cref{lem:girsseqmeas}, then expressing $V$ under the new measures $\Qro_n$. 
\end{proof}

\begin{remark}
\label{noexplicitsoliga}
Let $\Qro_n$ be defined as in \Cref{igalem}. Under the measures $\Qro_n$, $n \geq 1$, $V$ has no known expression for its solution, nor known expression for its moments.
\end{remark}  
\begin{proof}[Validity of \Cref{noexplicitsoliga}]
The SDE in \Cref{igalem} is a linear diffusion type SDE. From \Cref{appen:SDElinear}, it is known that if an explicit solution exists, it is given by 
\begin{align*}
	V_t = Y_t/F_t, 
\end{align*}
where $F$ is a GBM$(1; \lambda_t^2, -\lambda_t)$ and $Y$ is the solution to the integral equation (written in differential form)
\begin{align*}
	\dd Y_t = \left (\kappa_t \theta_t F_t - \kappa_t Y_t + \frac{n \lambda_t \rho_t}{\kappa_t} Y_t^{3/2} F_t^{-1/2} \right ) \dd t.
\end{align*}
Define $A_t := \kappa_t \theta_t F_t$ and $C_t := \frac{n \lambda_t \rho_t}{\kappa_t} F_t^{-1/2} $. Note that $A_t$ and $C_t$ are both non-differentiable in $t$. Thus 
\begin{align*}
	\dd Y_t = \left (A_t - \kappa_t Y_t + C_t Y_t^{3/2} \right ) \dd t.
\end{align*}
As far as we know, there is no explicit solution to these types of integral equations in the literature, even when $A$ and $C$ are constants. As for explicit moments, it is unclear how to approach this problem. There seems to be no approach to this problem in the literature, especially in the case of time-dependent parameters, see for example \citep{kloeden2013numerical} Chapter 4.4 for a comprehensive list of explicitly solvable SDEs. Furthermore, as an explicit solution does not exist, we cannot use the method of approximating moments via the SDE's solution.
\end{proof}

\subsubsection{Pricing under the GARCH diffusion framework: $\rho = 0$}
In the GARCH diffusion model, \Cref{ass:martingale} is violated, and thus there is no guarantee $\xi$ is a martingale. Additionally, assuming $\xi$ is a martingale is not helpful, as the change of measure technique gives an intractable dynamic for $V$; we cannot appeal to it for calculating expectations. However, in the case of $\rho = 0$ a.e., this implies $\xi_T = 1$ $\Qro$ a.s., and one will notice that the terms in the expansion requiring a change of measure will disappear. Of course, the cost is the additional restrictive assumption that spot and volatility movements are uncorrelated. We hope to mitigate this issue in future work by combining this approach with small correlation expansion methods, see \citep{Antonelli09,Antonelli10}.

\begin{theorem}[Second-order GARCH put option price]
\label{thm:GARCHprice}
Assume $\rho = 0$ a.e. Let $\hat x = S_0$ and $\hat y = \int_0^T \Ex(V_t) \dd t  $. Then the second-order put option price in the GARCH diffusion model, denoted by $\text{Put}^{(2)}_{\text{GARCH}}$, is
\begin{align}
\begin{split}
\text{Put}^{(2)}_{\text{GARCH}} &=  \text{Put}_{\text{BS}} ( \hat x, \hat y) + \partial_{yy} \text{Put}_{\text{BS}} ( \hat x, \hat y)  \int_0^T\left ( \int_0^t \Cov(V_s, V_t) \dd s \right ) \dd t. \label{Garprice2}
\end{split}
\end{align}
Both $\Cov(V_s, V_t)$ and $\Ex(V_t)$ are given in \Cref{appen:IGa}.
\end{theorem}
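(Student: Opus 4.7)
The plan is to obtain Theorem 4.2 as a direct specialisation of Proposition 2.1 (together with the computations in Section 3.2) to the degenerate case $\rho \equiv 0$, in which all correlation-induced terms vanish.

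First I would observe that when $\rho_t = 0$ for Lebesgue-a.e.\ $t \in [0,T]$, the very definition
\begin{align*}
\xi_t = \exp\left\{\int_0^t \rho_u \sqrt{V_u}\, \dd B_u - \tfrac{1}{2}\int_0^t \rho_u^2 V_u \,\dd u\right\}
\end{align*}
forces $\xi_t \equiv 1$ almost surely. In particular, $\xi$ is trivially a martingale, so although Assumption~A fails in the GARCH diffusion model in general, here it is not needed: the mixing representation of Proposition 2.1 applies without appealing to Lemma 2.1. This is the one conceptual point worth emphasising, and it is really the reason the restriction $\rho = 0$ is imposed; without it, the $\xi_T$-dependent terms cannot be handled via the change-of-measure machinery of Section~3 (cf.\ Remark~4.2 and Remark~\ref{noexplicitsoliga}).

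Next I would substitute $\xi_T \equiv 1$ and $1 - \rho_t^2 \equiv 1$ into \eqref{price2}. The terms
\begin{align*}
S_0^2\, \Ex(\xi_T - 1)^2 \quad \text{and} \quad S_0\, \Ex\left\{(\xi_T - 1)\int_0^T (1-\rho_t^2)(V_t - \Ex V_t)\,\dd t\right\}
\end{align*}
both vanish identically, eliminating the $\partial_{xx}$ and $\partial_{xy}$ contributions. The expansion point collapses to $\hat y = \int_0^T \Ex(V_t)\,\dd t$, exactly as stated.

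The only surviving second-order term is $\tfrac{1}{2}\partial_{yy}\text{Put}_{\text{BS}}(\hat x,\hat y)\cdot \Ex(\int_0^T (V_t - \Ex V_t)\,\dd t)^2$. To evaluate the expectation I would apply the identity $(\int_0^T f(t)\,\dd t)^2 = 2\int_0^T f(t)(\int_0^t f(s)\,\dd s)\,\dd t$ with $f(t) = V_t - \Ex V_t$, then take the expectation inside via Fubini (justified by boundedness of the parameters and finite second moments of $V$, as recorded in Appendix~C), producing $2\int_0^T \int_0^t \Cov(V_s,V_t)\,\dd s\,\dd t$. The factor $2$ cancels the $\tfrac{1}{2}$ from the Taylor expansion, yielding exactly \eqref{Garprice2}. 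The explicit formulas for $\Ex(V_t)$ and $\Cov(V_s,V_t)$ for the IGa$(v_0;\kappa_t,\theta_t,\lambda_t)$ variance process are then simply quoted from Appendix~C. There is no genuine obstacle in this proof — the work is essentially bookkeeping — so the write-up reduces to a clean specialisation argument.
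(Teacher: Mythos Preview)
Your proposal is correct and follows essentially the same route as the paper's own proof: specialise Proposition~\ref{prop:secondorderput} to $\rho = 0$ a.e.\ so that the $\partial_{xx}$ and $\partial_{xy}$ terms drop out, and then rewrite the surviving variance term via the identity $\Ex\big(\int_0^T (V_t - \Ex V_t)\,\dd t\big)^2 = 2\int_0^T\int_0^t \Cov(V_s,V_t)\,\dd s\,\dd t$. Your additional remark that $\xi_T \equiv 1$ renders Assumption~\ref{ass:martingale} unnecessary is a helpful clarification, but the underlying argument is the same bookkeeping specialisation the paper gives.
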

\begin{proof}
Use \Cref{prop:secondorderput} under the assumption of $\rho = 0$ a.e. Then notice that
\begin{align*}
 \Ex \left (\int_0^T  (V_t - \Ex(V_t)) \dd t \right )^2 &= 2 \int_0^T\left ( \int_0^t \Cov(V_s, V_t) \dd s \right ) \dd t.
\end{align*} 
\end{proof}

\section{Error analysis}
\noindent
\label{sec:erroranalysis2}We present an explicit bound on the error term in our expansion in terms of higher order moments of the corresponding variance process. Specifically, this means bounding the remainder term in the second-order expansion of the function Put$_{\text{BS}}$, and for the case when $\rho \neq 0$, the error term associated with the expansion of $e^{\int_0^T \rho_u^2 \sigma_u \dd u }$. 

We will need explicit expressions for the error terms. These are given by Taylor's theorem, which we will present here to fix notation. We only consider the results up to second-order. 

\begin{theorem}[Taylor's theorem for $f: \reals \to \reals$]
\label{tay}
Let $A \subseteq \reals, B \subseteq \reals$ and $f : A \to B$ be a $C^{3}$ function in a closed interval about the point $a \in A$. Then
\begin{align*}
	f(x)&= f(a)+ f'(a)(x-a) + \frac{1}{2} f''(a) (x-a)^2 + R(x),
\end{align*}
where the remainder $R$ is given by
\begin{align*}
	R(x) =\frac{1}{2} \int_a^x (x - u)^2 f'''(u) \dd u =  \frac{1}{2} (x -a)^3 \int_0^1 (1 - u)^2 f'''(a + u(x - a)) \dd u.
\end{align*}
\end{theorem}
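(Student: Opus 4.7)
The plan is to derive both forms of the integral remainder via two successive integrations by parts, followed by a linear change of variables. The starting point is the fundamental theorem of calculus: since $f \in C^3$ on a closed interval containing both $a$ and $x$, we have $f(x) - f(a) = \int_a^x f'(u) \dd u$. I would then perform a first integration by parts using the antiderivative $V(u) = -(x-u)$ (chosen so that $V(x) = 0$, which eliminates the upper boundary term), yielding
\begin{align*}
\int_a^x f'(u) \dd u = (x-a) f'(a) + \int_a^x (x-u) f''(u) \dd u.
\end{align*}

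A second integration by parts, now with $V(u) = -\frac{1}{2}(x-u)^2$ and $U(u) = f''(u)$, produces
\begin{align*}
\int_a^x (x-u) f''(u) \dd u = \frac{(x-a)^2}{2} f''(a) + \frac{1}{2} \int_a^x (x-u)^2 f'''(u) \dd u.
\end{align*}
Chaining these two identities together with the fundamental theorem of calculus yields the Taylor expansion with the first form of the remainder, $R(x) = \frac{1}{2} \int_a^x (x-u)^2 f'''(u) \dd u$.

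For the second form, I would apply the linear change of variables $u = a + t(x-a)$, under which $\dd u = (x-a) \dd t$, the factor $(x-u)^2$ becomes $(1-t)^2 (x-a)^2$, and the interval $[a,x]$ maps to $[0,1]$. Substitution then immediately gives $R(x) = \frac{1}{2}(x-a)^3 \int_0^1 (1-t)^2 f'''(a + t(x-a)) \dd t$, as required.

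No substantial obstacle is anticipated in this argument; the $C^3$ assumption ensures that $f'$, $f''$ and $f'''$ are continuous on the closed interval between $a$ and $x$, which is all that is needed to justify the two integrations by parts and the substitution. The only minor point of care is the case $x < a$, which is handled by noting that both sides of each identity above change sign symmetrically under swapping the limits, so the formulas remain valid without modification.
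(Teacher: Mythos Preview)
Your argument is correct and is the standard derivation of the integral (Lagrange) form of the Taylor remainder via repeated integration by parts followed by a linear reparametrisation. The paper, however, does not actually prove this theorem: it is stated purely to fix notation, with the remark ``These are given by Taylor's theorem, which we will present here to fix notation,'' and no proof is supplied. So there is nothing to compare against; your proof simply fills in what the paper leaves as a classical fact.
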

We will prefer the integration bounds to be from $0$ to $1$ rather than $a$ to $x$, as $a$ and $x$ will correspond to random variables.

\begin{theorem}[Taylor's theorem for $g: \reals^2 \to \reals$]
\label{tay2}
Let $A \subseteq \reals^2, B \subseteq \reals$ and $g : A \to B$ be a $C^3$ function in a closed ball about the point $(a, b) \in A$. Then 
\begin{align*}
g(x, y) &= g(a, b) + g_x (a, b)(x-a) + g_y(a, b) (y-b) \\&+ \frac{1}{2} g_{xx} (a,b) (x-a)^2 + \frac{1}{2} g_{yy}(a, b) (y-b)^2  + g_{xy} (a, b)(x-a)(y-b) + R(x,y),
\end{align*}
where the remainder $R$ is given by
\begin{align*}
	R(x,y) &= \sum_{|\alpha| = 3} \frac{|\alpha|}{\alpha_1! \alpha_2 ! }E_{\alpha}(x, y) ( x- a)^{\alpha_1} (y - b)^{\alpha_2}, \\
	E_\alpha (x, y) &= \int_0^1 ( 1 - u)^{2} \frac{\partial^{3} }{\partial x^{\alpha_1} \partial y^{\alpha_2}} g(a + u(x - a), b  + u(y - b)) \dd u, 
\end{align*}
with $\alpha := (\alpha_1, \alpha_2)$ and $|\alpha| := \alpha_1 + \alpha_2$.
\end{theorem}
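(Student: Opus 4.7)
The plan is to reduce the two-variable claim to the one-variable Theorem~\ref{tay} via the standard linear-path trick, then expand everything in sight by the chain rule and the binomial theorem. First I introduce the auxiliary function $h : [0, 1] \to \reals$ defined by
\[
h(t) := g\bigl(a + t(x-a),\, b + t(y-b)\bigr).
\]
Because $g$ is $C^3$ in a closed ball about $(a,b)$ and the segment from $(a,b)$ to $(x,y)$ lies inside that ball, $h$ is $C^3$ on $[0,1]$ with $h(0) = g(a,b)$ and $h(1) = g(x,y)$. Applying Theorem~\ref{tay} to $h$ at the expansion point $0$ evaluated at $1$ gives, using $(1-0)^3 = 1$ and $0 + u(1-0) = u$,
\[
g(x,y) = h(0) + h'(0) + \tfrac{1}{2} h''(0) + \tfrac{1}{2} \int_0^1 (1-u)^2 h'''(u)\, du.
\]

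Next I compute $h^{(k)}$ for $k = 1, 2, 3$ by the chain rule. Writing $\Delta_x := x - a$ and $\Delta_y := y - b$, iteration of the chain rule yields, for any $u \in [0,1]$,
\[
h^{(k)}(u) = \sum_{\alpha_1 + \alpha_2 = k} \binom{k}{\alpha_1}\, \partial_x^{\alpha_1} \partial_y^{\alpha_2} g\bigl(a + u\Delta_x,\, b + u\Delta_y\bigr)\, \Delta_x^{\alpha_1} \Delta_y^{\alpha_2}.
\]
Evaluating at $u = 0$ for $k = 1, 2$ produces exactly the first-order and second-order terms of the claimed expansion (in particular the cross-term receives coefficient $\tfrac{1}{2}\binom{2}{1} = 1$, matching $g_{xy}(a,b)(x-a)(y-b)$).

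For the remainder, substituting the $k = 3$ formula into $\tfrac{1}{2} \int_0^1 (1-u)^2 h'''(u)\, du$ and pulling the finite sum outside the integral yields, after using $\tfrac{1}{2}\binom{3}{\alpha_1} = \tfrac{3}{\alpha_1!\alpha_2!} = \tfrac{|\alpha|}{\alpha_1!\alpha_2!}$,
\[
R(x,y) = \sum_{|\alpha|=3} \frac{|\alpha|}{\alpha_1!\alpha_2!}\, \Delta_x^{\alpha_1} \Delta_y^{\alpha_2} \int_0^1 (1-u)^2\, \partial_x^{\alpha_1} \partial_y^{\alpha_2} g\bigl(a + u\Delta_x,\, b + u\Delta_y\bigr)\, du,
\]
which is the stated expression with $E_\alpha$ as defined. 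I do not anticipate a real obstacle: this is essentially a bookkeeping exercise once Theorem~\ref{tay} and the chain rule are granted. The only points to handle carefully are the closed-ball assumption, which guarantees $h$ is well defined and $C^3$ on $[0,1]$, and the combinatorial identification of $\tfrac{1}{2}\binom{3}{\alpha_1}$ with $|\alpha|/(\alpha_1!\alpha_2!)$; the interchange of the finite sum with the integral is trivial.
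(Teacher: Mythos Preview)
Your proof is correct: the reduction to the one-variable Theorem~\ref{tay} via the path $h(t)=g(a+t(x-a),b+t(y-b))$, the chain-rule computation of $h^{(k)}$, and the combinatorial identification $\tfrac{1}{2}\binom{3}{\alpha_1}=|\alpha|/(\alpha_1!\alpha_2!)$ all check out. The paper itself does not prove this theorem; it merely states it as a standard result in order to fix notation for the remainder term, so there is no paper proof to compare against.
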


\subsection{Expression for error term}
The representation for the total error generated by the expansion can be summarised by the following theorem.

\begin{theorem}[Total expansion error]
\label{totalerrorpart2}
As a functional of the underlying variance process $\sigma$, let $\EE_{\text{BS}}(\sigma)$ and $\mathcal{\tilde E}(\sigma)$ correspond to the error induced by the expansion of $\text{Put}_{\text{BS}}$ and $e^{\int_0^T \rho_u^2 \sigma_u \dd u}$ respectively. The error generated by Taylor expansions for a general variance process $\sigma$ is given by
\begin{align*}
\mathcal{E}(\sigma) = \mathcal{E}_{\text{BS}}(\sigma) + \mathcal{ \tilde E  }(\sigma),
\end{align*}
where 
\begin{align*}
&\mathcal{E}_{\text{BS}}(\sigma) =  \sum_{|\alpha| = 3} \frac{|\alpha|}{\alpha_1! \alpha_2 ! }E_{\alpha}\left ( S_0 \xi_T, \int_0^T \sigma_t  ( 1 - \rho_t^2) \dd t \right ) S_0^{\alpha_1} (\xi_T - 1)^{\alpha_1} \left (\int_0^T (1 - \rho_u^2) (\sigma_u - \Ex(\sigma_u)) \dd u\right )^{\alpha_2}, \\
&E_{\alpha}\left ( S_0 \xi_T, \int_0^T \sigma_t  ( 1 - \rho_t^2) \dd t \right ) = \int_0^1 ( 1 - u)^{2} \frac{\partial^{3} }{\partial x^{\alpha_1} \partial y^{\alpha_2}} \text{Put}_{\text{BS}}\left (F(u), G(u) \right ) \dd u, \\
&F(u) := S_0 + uS_0 ( \xi_T - 1), \\
&G(u) := \int_0^T ( 1 - \rho_t^2) \Ex(\sigma_t) \dd t  + u \left ( \int_0^T (1 - \rho_t^2) (\sigma_t - \Ex(\sigma_t)) \dd t \right ),
\end{align*} 
and 
 \begin{align*}
 \mathcal{\tilde E} (\sigma) &=\frac{1}{4} \partial_{xx} \text{Put}_{\text{BS}}(\hat x , \hat y) S_0^2 \xi_T^2 e^{-\int_0^T \rho_u^2 \sigma_u \dd u }\left ( \int_0^T \rho_u^2 (\sigma_u - \Ex_{\Qro_2}(\sigma_u) )\dd u \right )^3 \\ &\cdot \int_0^1 (1 - u)^2 e^{\int_0^T \rho_m^2 \Ex_{\Qro_2}(\sigma_m) \dd m } e^{u \int_0^T \rho_m^2 (\sigma_m  - \Ex_{\Qro_2}(\sigma_m)) \dd m } \dd u. 
\end{align*}
\end{theorem}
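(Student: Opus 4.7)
The plan is to recognise that $\text{Put}^{(2)}$ is built from only two approximations, so the total error is the sum of two Taylor remainders. The first approximation is the second-order Taylor expansion of $\text{Put}_{\text{BS}}(x,y)$ around $(\hat x,\hat y)$ carried out in the proof of \Cref{prop:secondorderput}; its remainder produces $\mathcal{E}_{\text{BS}}(\sigma)$. The second is the second-order expansion of $e^{\int_0^T \rho_t^2 \sigma_t \dd t}$ under $\Qro_2$ used in \Cref{sec:mom1} to approximate $\Ex(\xi_T^2)$; its remainder, after being multiplied by the prefactor $\tfrac12\partial_{xx}\text{Put}_{\text{BS}}(\hat x,\hat y)S_0^2$ and transported back to $\Qro$, produces $\mathcal{\tilde E}(\sigma)$. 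Since all the other expectations \cref{mom2}--\cref{mom3} are handled exactly in \Cref{sec:mom2,sec:mom3}, no further error arises.

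For the first piece, I would apply \Cref{tay2} to $g=\text{Put}_{\text{BS}}$ with base point $(\hat x,\hat y)=(S_0,\int_0^T(1-\rho_t^2)\Ex(\sigma_t)\dd t)$ and increments $x-\hat x = S_0(\xi_T-1)$ and $y-\hat y = \int_0^T(1-\rho_t^2)(\sigma_t-\Ex(\sigma_t))\dd t$. The integral-form multi-index remainder in \Cref{tay2} then reads off verbatim as $\mathcal{E}_{\text{BS}}(\sigma)$, with $F(u)$ and $G(u)$ being exactly the convex combinations of the expansion point and the argument.

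For the second piece, set $x:=\int_0^T \rho_t^2 \sigma_t \dd t$ and $a:=\int_0^T \rho_t^2 \Ex_{\Qro_2}(\sigma_t)\dd t$, and apply \Cref{tay} to $f(\cdot)=e^{\cdot}$ at the point $a$. This yields $e^x = e^a + e^a(x-a)+\tfrac12 e^a(x-a)^2 + \tfrac12(x-a)^3\int_0^1(1-u)^2 e^{a+u(x-a)}\dd u$. Taking $\Ex_{\Qro_2}$ recovers the approximation of $\Ex_{\Qro_2}(e^{\int_0^T \rho_t^2\sigma_t\dd t})$ used in \Cref{sec:mom1}, so the induced error in $\Ex(\xi_T-1)^2$ is the $\Qro_2$-expectation of the final remainder. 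This error enters $\text{Put}^{(2)}$ weighted by $\tfrac12\partial_{xx}\text{Put}_{\text{BS}}(\hat x,\hat y)S_0^2$, producing an extra factor of $\tfrac12$ and hence the constant $\tfrac14$ appearing in the statement. Finally, using the composition of Radon--Nikodym derivatives from \Cref{lem:girsseqmeas} and \Cref{rem:girsseqmeas}, namely $\tfrac{\dd\Qro_2}{\dd\Qro}=\xi_T^{(0)}\xi_T^{(1)}=\xi_T^2\, e^{-\int_0^T \rho_u^2 \sigma_u\dd u}$, one rewrites $\Ex_{\Qro_2}(\cdot)$ as an expectation under $\Qro$ of the integrand multiplied by this density. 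Substituting the explicit expressions for $a$ and $x-a$ into the resulting integrand gives exactly $\mathcal{\tilde E}(\sigma)$ as stated, and summing both contributions yields $\mathcal{E}(\sigma)=\mathcal{E}_{\text{BS}}(\sigma)+\mathcal{\tilde E}(\sigma)$.

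The argument is essentially a bookkeeping exercise; the only subtle points are keeping track of which Taylor expansion lives under which measure, correctly composing the Radon--Nikodym derivatives to pull the exponential remainder back to $\Qro$, and combining the two factors of $\tfrac12$ (one from the Put$_{\text{BS}}$ prefactor, one from Taylor's remainder for $e^x$) to obtain the $\tfrac14$ in $\mathcal{\tilde E}$. No analytic difficulty arises, since both Taylor remainders are in integral form and no bound on them is claimed at this stage.
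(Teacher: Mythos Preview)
Your proposal is correct and follows essentially the same approach as the paper: apply \Cref{tay2} to $\text{Put}_{\text{BS}}$ for the first remainder, and \Cref{tay} to the exponential for the second, then attach the $\tfrac12\partial_{xx}\text{Put}_{\text{BS}}(\hat x,\hat y)S_0^2$ prefactor and identify $\xi_T^2 e^{-\int_0^T\rho_u^2\sigma_u\dd u}$ as the $\Qro\to\Qro_2$ density. The only cosmetic difference is that the paper works pathwise from the start---writing $\xi_T^2=(\xi_T^2 e^{-\int_0^T\rho_u^2\sigma_u\dd u})\,e^{\int_0^T\rho_u^2\sigma_u\dd u}$ and expanding the second factor directly---whereas you pass through $\Ex_{\Qro_2}$ and then pull the integrand back; the resulting random variable $\mathcal{\tilde E}(\sigma)$ is the same.
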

\begin{proof}
First, we deal with the error term associated with the function Put$_\text{BS}$, that is, $\EE_{\text{BS}}(\sigma)$. Recall the expansion of Put$_\text{BS}$ around the point $(\hat x, \hat y) := ( S_0 , \int_0^T \Ex(\sigma_t)  ( 1 - \rho_t^2) \dd t )$ evaluated at $( S_0 \xi_T, \int_0^T \sigma_t  ( 1 - \rho_t^2) \dd t )$ for a general variance process $\sigma$:
\begin{align*}
&\text{Put}_{\text{BS}} \left ( S_0 \xi_T, \int_0^T \sigma_t  ( 1 - \rho_t^2) \dd t \right ) =\  \text{Put}_{\text{BS}} ( \hat x, \hat y) \\&
+ \partial_x \text{Put}_{\text{BS}} ( \hat x, \hat y) S_0 ( \xi_T - 1) + \partial_y \text{Put}_{\text{BS}} ( \hat x, \hat y)   \left (\int_0^T (1 - \rho_t^2) (\sigma_t - \Ex(\sigma_t)) \dd t \right ) \\
&+ \frac{1}{2} \partial_{xx} \text{Put}_{\text{BS}} ( \hat x, \hat y)  S_0^2 (\xi _T - 1)^2 + \frac{1}{2}\partial_{yy} \text{Put}_{\text{BS}} ( \hat x, \hat y)    \left (\int_0^T (1 - \rho_t^2) (\sigma_t - \Ex(\sigma_t)) \dd t \right )^2 \\ 
&+\partial_{xy} \text{Put}_{\text{BS}} ( \hat x, \hat y) S_0 ( \xi_T - 1) \left (\int_0^T (1 - \rho_t^2) (\sigma_t - \Ex(\sigma_t)) \dd t \right ) + \mathcal{E}_{\text{BS}}(\sigma).
\end{align*}
Using \Cref{tay2} for the function Put$_{\text{BS}}$, this gives the error term as
\begin{align*}
&\mathcal{E}_{\text{BS}}(\sigma) =  \sum_{|\alpha| = 3} \frac{|\alpha|}{\alpha_1! \alpha_2 ! }E_{\alpha}\left ( S_0 \xi_T, \int_0^T \sigma_t  ( 1 - \rho_t^2) \dd t \right ) S_0^{\alpha_1} (\xi_T - 1)^{\alpha_1} \left (\int_0^T (1 - \rho_u^2) (\sigma_u - \Ex(\sigma_u)) \dd u\right )^{\alpha_2}, \\
&E_{\alpha}\left ( S_0 \xi_T, \int_0^T \sigma_t  ( 1 - \rho_t^2) \dd t \right ) = \int_0^1 ( 1 - u)^{2} \frac{\partial^{3} }{\partial x^{\alpha_1} \partial y^{\alpha_2}} \text{Put}_{\text{BS}}\left (F(u), G(u) \right ) \dd u, \\
&F(u) := S_0 + uS_0 ( \xi_T - 1), \\
&G(u) := \int_0^T ( 1 - \rho_t^2) \Ex(\sigma_t) \dd t  + u \left ( \int_0^T (1 - \rho_t^2) (\sigma_t - \Ex(\sigma_t)) \dd t \right ).
\end{align*}
We now investigate the error term associated with the calculation of $\Ex \xi_T^2$, that is, $\mathcal{ \tilde E}(\sigma)$. Let us look at this term without the expectation. 
\begin{align*}
\xi_T^2 =\left (\xi_T^2 e^{-\int_0^T \rho_u^2 \sigma_u \dd u } \right ) e^{\int_0^T \rho_u^2 \sigma_u \dd u }. 
\end{align*}
We expand $e^{\int_0^T \rho_u^2 \sigma_u \dd u }$ around the expectation of the exponential's argument under $\Qro_2$. Note that $\xi_T^2 e^{-\int_0^T \rho_u^2 \sigma_u \dd u }$ is the Radon-Nikodym derivative which changes measure from $\Qro $ to $\Qro_2$. Expanding to second-order gives 
\begin{align*}
 e^{\int_0^T \rho_u^2 \sigma_u \dd u } &= e^{\int_0^T \rho_u^2 \Ex_{\Qro_2}(\sigma_u) \dd u } \left ( 1 + \int_0^T \rho_u^2 (\sigma_u - \Ex_{\Qro_2}(\sigma_u)) \dd u  + \frac{1}{2} \left (\int_0^T \rho_u^2 (\sigma_u - \Ex_{\Qro_2}(\sigma_u) )\dd u \right )^2 \right) \\&+ \frac{1}{2} \left ( \int_0^T \rho_u^2 (\sigma_u - \Ex_{\Qro_2}(\sigma_u) \dd u \right )^3  \int_0^1 ( 1- u)^2 e^{\int_0^T \rho_m^2 \Ex_{\Qro_2}(\sigma_m) \dd m } e^{u \int_0^T \rho_m^2 (\sigma_m  - \Ex_{\Qro_2}(\sigma_m) )\dd m } \dd u. 
 \end{align*}
 Finally, the coefficient in front of $\xi_T^2$ in the pricing formula is $\frac{1}{2} \partial_{xx} \text{Put}_{\text{BS}} (\hat x , \hat y ) S_0^2$. Thus, the error term $\mathcal{\tilde E}(\sigma)$ can be written as
 \begin{align*}
 \mathcal{\tilde E} (\sigma) &= \frac{1}{4} \partial_{xx} \text{Put}_{\text{BS}}(\hat x , \hat y) S_0^2 \xi_T^2 e^{-\int_0^T \rho_u^2 \sigma_u \dd u }\left ( \int_0^T \rho_u^2 (\sigma_u - \Ex_{\Qro_2}(\sigma_u) )\dd u \right )^3 \\ &\cdot \int_0^1 (1 - u)^2 e^{\int_0^T \rho_m^2 \Ex_{\Qro_2}(\sigma_m) \dd m } e^{u \int_0^T \rho_m^2 (\sigma_m  - \Ex_{\Qro_2}(\sigma_m)) \dd m } \dd u.
\end{align*}
\end{proof}

\begin{corollary}[Total expansion error: $\rho = 0$]
\label{totalerrorrho0part2}
The error due to Taylor expansions for a general variance process $\sigma$ with $\rho = 0$ a.e., denoted by $\mathcal{E}_0(\sigma)$, is given by 
\begin{align*}
\mathcal{E}_0(\sigma) &= \frac{1}{2}  \left ( \int_0^T (\sigma_t - \Ex(\sigma_t)) \dd t \right )^3  \int_0^1 (1 - u)^2 \partial_{yyy} \text{Put}_{\text{BS}} \left (S_0,   \tilde G(u) \right ) \dd u, \\
\tilde G(u) &:= \int_0^T \Ex(\sigma_t) \dd t  + u \left ( \int_0^T (\sigma_t - \Ex(\sigma_t))  \dd t\right ),
\end{align*}
which is just $\mathcal{E}_{\text{BS}}(\sigma)$ when $\rho = 0$ a.e.
\end{corollary}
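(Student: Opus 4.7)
The plan is to deduce the corollary as a direct specialization of Theorem \ref{totalerrorpart2}. First I would observe that when $\rho_t = 0$ for a.e.\ $t \in [0,T]$, the SDE $\dd \xi_t = \rho_t \sqrt{\sigma_t}\, \xi_t \dd B_t$ with $\xi_0 = 1$ has constant solution $\xi_t \equiv 1$, so $\xi_T = 1$ $\Qro$-a.s. This immediately kills the error term $\mathcal{\tilde E}(\sigma)$: the factor $\bigl(\int_0^T \rho_u^2 (\sigma_u - \Ex_{\Qro_2}(\sigma_u)) \dd u\bigr)^3$ in the expression given by Theorem \ref{totalerrorpart2} vanishes identically (equivalently, the exponential $e^{\int_0^T \rho_u^2 \sigma_u \dd u}$ whose Taylor expansion produced $\mathcal{\tilde E}$ is identically equal to $1$, so no remainder is generated).

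Next, I would simplify the Black-Scholes remainder $\mathcal{E}_{\text{BS}}(\sigma)$ under $\rho \equiv 0$. In the sum over multi-indices $\alpha = (\alpha_1, \alpha_2)$ with $|\alpha| = 3$, every term carries a factor of $(\xi_T - 1)^{\alpha_1}$, which vanishes whenever $\alpha_1 \geq 1$. Only the multi-index $\alpha = (0, 3)$ contributes. The corresponding multinomial coefficient is $\frac{|\alpha|}{\alpha_1! \alpha_2!} = \frac{3}{0!\, 3!} = \frac{1}{2}$, and the surviving stochastic factor is $\bigl(\int_0^T (\sigma_u - \Ex(\sigma_u)) \dd u\bigr)^3$.

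It then remains to inspect the integral $E_{(0,3)}$. The expansion path $F(u) = S_0 + u S_0 (\xi_T - 1)$ collapses to $F(u) \equiv S_0$ under $\xi_T = 1$, while $G(u) = \int_0^T (1 - \rho_t^2) \Ex(\sigma_t) \dd t + u \int_0^T (1 - \rho_t^2)(\sigma_t - \Ex(\sigma_t)) \dd t$ becomes exactly $\tilde G(u) = \int_0^T \Ex(\sigma_t) \dd t + u \int_0^T (\sigma_t - \Ex(\sigma_t)) \dd t$. Substituting these simplifications together with $\frac{\partial^3}{\partial x^0 \partial y^3} = \partial_{yyy}$ yields the claimed formula.

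There is no real obstacle here, as the result is a straightforward specialization of Theorem \ref{totalerrorpart2}; the only care needed is bookkeeping the multinomial factor and verifying that the Taylor path $(F(u), G(u))$ collapses to $(S_0, \tilde G(u))$. The final assertion that $\mathcal{E}_0(\sigma)$ coincides with $\mathcal{E}_{\text{BS}}(\sigma)$ evaluated at $\rho = 0$ is then immediate from the vanishing of $\mathcal{\tilde E}(\sigma)$ established in the first step.
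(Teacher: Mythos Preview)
Your proposal is correct and follows the same route the paper intends: the corollary is stated immediately after Theorem~\ref{totalerrorpart2} with no separate proof, precisely because it is the direct specialization you describe. Your bookkeeping of the vanishing of $\mathcal{\tilde E}(\sigma)$, the survival of only the $\alpha=(0,3)$ term in $\mathcal{E}_{\text{BS}}(\sigma)$, the coefficient $\tfrac{|\alpha|}{\alpha_1!\alpha_2!}=\tfrac{1}{2}$, and the collapse of $(F(u),G(u))$ to $(S_0,\tilde G(u))$ is exactly what is needed.
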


\subsection{Bounding error term}
The hope now is to be able to bound $\Ex (\mathcal{E}(\sigma))$ in terms of the higher order moments of the variance process $\sigma$. To do this, we will show that the partial derivatives $\frac{\partial^3 \text{Put}_{\text{BS}}}{\partial x^{\alpha_1} \partial y^{\alpha_2}}(F(u), G(u))$ for $u \in (0,1)$ where $\alpha_1 + \alpha_2 = 3$, are functions of $T$ and $K$ which are bounded on $\reals^2_+$. 

\begin{lemma}
\label{parblow}
Consider the third-order partial derivatives of $\text{Put}_{\text{BS}}$, $\frac{\partial^3 \text{Put}_{\text{BS}}}{\partial x^{\alpha_1} \partial y^{\alpha_2}}$, where $\alpha_1 + \alpha_2 = 3$. Let $\hat f(x) := \ln(x/K) + \int_0^T \left (r_t^d - r_t^f \right ) \dd t$. Then
\begin{align*}
	 \lim_{y \downarrow 0} \left | \frac{\partial^3 \text{Put}_{\text{BS}}}{\partial x^{\alpha_1} \partial y^{\alpha_2}} \right |_{\hat f(x) = 0}  = \infty.
\end{align*}
Furthermore, this is the only case where the partial derivatives explode.
\end{lemma}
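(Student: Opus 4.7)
The strategy is to put every third-order partial of $\text{Put}_{\text{BS}}$ into the canonical form $\phi(d_+)\cdot R_{\alpha_1,\alpha_2}(x,\sqrt y,m)$ with $m:=\hat f(x)$ and $R_{\alpha_1,\alpha_2}$ a rational function in $x$ and $\sqrt y$ with polynomial dependence on $m$, and then to compare polynomial growth of $R$ against Gaussian decay of $\phi$ in every boundary regime.

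First, use the classical Black--Scholes identity $xe^{-\int_0^T r_t^f\dd t}\phi(d_+) = Ke^{-\int_0^T r_t^d\dd t}\phi(d_-)$ to rewrite every sum of $\phi(d_\pm)$ terms produced by differentiation with a single $\phi(d_+)$ factor. Combined with $\partial_x d_\pm = 1/(x\sqrt y)$ and $\partial_y d_\pm = -m/(2y^{3/2})\pm 1/(4\sqrt y)$, a straightforward induction on the total differentiation order yields
\begin{equation*}
\frac{\partial^{\alpha_1+\alpha_2}\text{Put}_{\text{BS}}}{\partial x^{\alpha_1}\partial y^{\alpha_2}}=\phi(d_+)\,R_{\alpha_1,\alpha_2}(x,\sqrt y,m),
\end{equation*}
where $R_{\alpha_1,\alpha_2}$ is a finite linear combination of monomials $x^{-i}y^{-j/2}m^{k}$ with $i,j,k$ bounded in terms of $\alpha_1,\alpha_2$. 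In particular $R_{\alpha_1,\alpha_2}$ is smooth on the open domain $(0,\infty)^2$, so no blow-up can occur at any interior point; singularities can only arise from the boundary of $(0,\infty)^2$.

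Next, using $\phi(d_\pm) = \tfrac{1}{\sqrt{2\pi}}\exp\bigl(-m^2/(2y)\mp m/2-y/8\bigr)$, I would analyse boundary asymptotics: (i) if $y\downarrow 0$ while $|m|$ stays bounded below by some $\delta>0$, the factor $e^{-m^2/(2y)}$ decays faster than any negative power of $y$ and drowns out the polynomial singularity of $R_{\alpha_1,\alpha_2}$, so the partial tends to $0$; (ii) if $y\to\infty$, the factor $e^{-y/8}$ dominates any polynomial growth in $y$ or $m$; (iii) if $|m|\to\infty$ (i.e. $x\to 0$ or $x\to\infty$), the factor $e^{-m^2/(2y)}\sim e^{-c(\log x)^2/y}$ beats both polynomial growth in $m$ and the factor $x^{-i}$ since $(\log x)^2$ grows faster than $|\log x|$. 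Thus on every boundary approach that does not simultaneously satisfy $\hat f(x)\to 0$ and $y\downarrow 0$, the partial remains bounded (in fact tends to $0$).

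Finally, at $m=0$ with $y\downarrow 0$ we have $d_\pm=\pm\sqrt y/2\to 0$, so $\phi(d_\pm)\to 1/\sqrt{2\pi}>0$, and it suffices to prove $|R_{\alpha_1,\alpha_2}(x,\sqrt y,0)|\to\infty$. By direct computation from the standard Black--Scholes Greeks, the four third-order partials $\partial_{xxx}\text{Put}_{\text{BS}}$, $\partial_{xxy}\text{Put}_{\text{BS}}$, $\partial_{xyy}\text{Put}_{\text{BS}}$, $\partial_{yyy}\text{Put}_{\text{BS}}$ at $m=0$ have leading singularities of order $y^{-1/2}$, $y^{-1/2}$, $y^{-3/2}$, $y^{-5/2}$ respectively as $y\downarrow 0$. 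The main obstacle is verifying that the leading coefficients are nonzero, i.e.\ that no cancellation occurs among the chain-rule terms. This is handled by tracking the dominant contribution at $m=0$ through the inductive derivation of Step~1: the summand $-m/(2y^{3/2})$ in $\partial_y d_\pm$ vanishes when $m=0$, leaving a single unambiguous $\pm 1/(4\sqrt y)$ contribution per $y$-differentiation, while each $x$-derivative contributes $1/(x\sqrt y)$ without sign ambiguity; the absence of competing terms of the same order precludes cancellation of the leading coefficient. Combined with the previous step, this shows that the only direction of divergence is precisely $(\hat f(x),y)\to(0,0^+)$ along $\hat f(x)=0$, completing the proof.
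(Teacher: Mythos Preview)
Your argument is correct and follows essentially the same route as the paper's: factor each third-order partial as $\phi(d_+)$ times a rational expression in $x,\sqrt y$ (polynomial in $d_\pm$, equivalently in $m$), then exploit the explicit Gaussian structure $\phi(d_+)\propto\exp\bigl(-m^2/(2y)-m/2-y/8\bigr)$ to show that every boundary limit vanishes except when $m=0$ and $y\downarrow 0$, where the rational factor blows up against a bounded $\phi$. One minor slip that does not affect the conclusion: the leading singularity of $\partial_{xxy}\text{Put}_{\text{BS}}$ at $m=0$ is $y^{-3/2}$ rather than $y^{-1/2}$, since $d_-d_+-1\to -1$ while the prefactor is of order $y^{-3/2}$.
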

\begin{proof}
In the following, we will repeatedly let $F$ be an arbitrary polynomial of some degree, as well as $A$ to be an arbitrary constant. That is, they may be different on each use. From \Cref{appen:greeks}, it can be seen that as a function of $x$ and $y$, the third-order partial derivatives are of the form 
\begin{align}
	A \frac{\phi(d_+)}{x^n y^{m/2}}F(d_+, d_-, \sqrt{y}),  \quad n \in \mathbb{Z}, m \in \mathbb{N}, \label{pderivform}
\end{align}
where we recall 
\begin{align*}
	d_\pm = d_\pm(x,y) &= \frac{\ln(x/K) + \int_0^T \left ( r_t^d - r_t^f \right ) \dd t }{\sqrt{y}} \pm \frac{1}{2} \sqrt{y}, \\
	\phi(x) &= \frac{1}{\sqrt{2 \pi}} e^{-x^2/2}.
\end{align*}
Written in this form \cref{pderivform}, it is evident that the partial derivatives could only blow up if either $x$ or $y$ tend to $0$ or infinity. We need only look at these limits independently of the other variable. In the following, we shall say $f = \littleo{g}$ if and only if $\lim_{x \to \infty} \frac{f(x)}{g(x)} = 0$ and $f = \littleoo{g}$ if and only if $\lim_{x \downarrow 0} \frac{f(x)}{g(x)} = 0$. 
\begin{enumerate}[label = (\arabic*), ref = \arabic*]
\item  For fixed $x$: From \cref{pderivform} the partial derivatives are of the form $A \frac{\phi(d_+)F(d_+, d_-, \sqrt{y})}{y^{m/2} }$. 
It can be shown that 
\begin{align*}
	 \phi(d_+) = A e^{-D_2 \frac{1}{y} - D_1 y},
\end{align*}
where $D_2 = \frac{1}{2} (\hat f(x))^2$ and $D_1 = 1/8$. Hence both $D_2$ and $D_1$ are non-negative. However, there will be two cases to consider, when $D_2 > 0$ or $D_2 = 0$.
	\begin{enumerate}[label = (\alph*), ref = \alph*]
	\item Suppose $D_2 > 0$, then $\hat f (x) \neq 0$. As $F$ is a polynomial in $d_+, d_-$, and $\sqrt{y}$, we can say that $F(d_+, d_-, \sqrt{y}) = \littleoo{1/y^{M_0/2}}$ and $F(d_+, d_-, \sqrt{y}) = \littleo{y^{M/2}}$ for some $M, M_0 \in \mathbb{N}$. Thus 
\begin{align*}
	\left | \frac{\phi(d_+)F(d_+, d_-, \sqrt{y})}{y^{m/2} } \right | =  |A| \frac{e^{-D_2 \frac{1}{y} - D_1 y} \littleoo{1/y^{M_{0}/2}}}{y^{m/2}}
\end{align*}
and also 
\begin{align*}
	\left | \frac{\phi(d_+)F(d_+, d_-, \sqrt{y})}{y^{m/2} } \right | =  |A| \frac{e^{-D_2 \frac{1}{y} - D_1 y} \littleo{y^{M/2}}}{y^{m/2}}.
\end{align*}
Then as $y \downarrow 0$ or $y \to \infty$, the partial derivatives tend to $0$.
	\item Suppose $D_2 = 0$, then $\hat f(x) = 0$. 	Thus $d_+ = \frac{1}{2} \sqrt{y}$ and $\phi(d_+) = A e^{-D_1 y}$. Evidently, $F(d_+, d_-, \sqrt{y}) = \sum_{i =0}^N C_i y^{i/2}$ for some $N \in \mathbb{N}$ and constants $C_0, \dots, C_N$. Thus
	\begin{align*}
		\left | \frac{\phi(d_+)F(d_+, d_-, \sqrt{y})}{y^{m/2} } \right | = |A| \frac{e^{-D_1 y} | \sum_{i =0}^NC_i y^{i/2} |}{y^{m/2}}. 
	\end{align*}
	This quantity tends to 0 as $y \to \infty$, as the exponential decay makes the polynomial growth/decay irrelevant. However, when $y \downarrow 0$, then this limit depends on the polynomial $F$. If $N > m$ and if one of the $C_0, C_1, \dots, C_m$ are non-zero then this quantity tends to $\infty$ as $y \downarrow 0$. If $N < m$, then this quantity tends to $\infty$ as $y \downarrow 0$. For each of the partial derivatives, it can be shown that either of these cases are satisfied. Thus the partial derivatives tend to $\infty$ when $y \downarrow 0$. 
	\end{enumerate}
To conclude, for fixed $x$, if $\hat f(x) = 0$, then the partial derivatives tend to $0$ if $y \to \infty$, and to $\infty$ if $y \downarrow 0$. When $\hat f(x) \neq 0$, the partial derivatives tend to $0$ if $y \downarrow 0$ or $y \to \infty$. 
\item For fixed $y$:  From \cref{pderivform}, the partial derivatives are of the form $\frac{\phi(d_+) F(d_+, d_-)}{ x^n }$. It can be shown that 
\begin{align*}
	\phi(d_+) = A x^{-E_2 \ln(x) - E_1},
\end{align*}
where $E_2 > 0$ and $E_1 \in \reals$. Furthermore, as $F$ is a polynomial in $d_+$ and $d_-$, then
\begin{align*}
	|F(d_+, d_-) | \leq \sum_{i = 0}^N |C_i| \left | \ln^i(x) \right |,
\end{align*}
for some $N \in \mathbb{N}$ and constants $C_0, \dots C_N$.
It is clear $F = \littleo{x}$ and $F = \littleoo{\ln^{N+1}(x)}$.
Thus 
\begin{align*}
	\left | \frac{\phi(d_+) F(d_+, d_-)}{ x^n } \right | = |A| x^{-E_2 \ln(x) - E_1 - n}\littleo{x}.
\end{align*}
Then as $x \to \infty$, this quantity tends to $0$. In addition
\begin{align*}
	\left | \frac{\phi(d_+) F(d_+, d_-)}{ x^n } \right | = |A| x^{-E_2 \ln(x) - E_1 - n}\littleoo{\ln^{N+1}(x)}.
\end{align*}
Then as $x \downarrow 0$ this quantity tends to 0. So for fixed $y$, the partial derivatives tend to 0 as $x \downarrow 0$ or $x \to \infty$.
\end{enumerate}  
\end{proof}

We will however, be concerned with the behaviour of $ u \mapsto \frac{\partial^{3} }{\partial x^{\alpha_1} \partial y^{\alpha_2}} \text{Put}_{\text{BS}} (F(u) , G(u))$, meaning we will have to consider both arguments simultaneously, as they are both linear functions of $u$. 

\begin{lemma}
\label{parblowu}
Consider the third-order partial derivatives of $\text{Put}_{\text{BS}}$, $\frac{\partial^3 \text{Put}_{\text{BS}}}{\partial x^{\alpha_1} \partial y^{\alpha_2}}$, where $\alpha_1 + \alpha_2 = 3$ as well as the linear functions $h_1, h_2 :[0,1] \to \reals_+$ such that $h_1(u) = u(d_1 - c_1) + c_1$ and $h_2(u) = u(d_2 - c_2) + c_2$. Assume there exists no point $a \in (0,1)$ such that 
\begin{align*}
	\lim_{u \to a} \frac{ \ln(h_1(u)/K) + \int_0^T (r_t^d - r_t^f) \dd t  }{\sqrt {h_2(u)}} = 0  \quad \text{and} \quad \lim_{u \to a} h_2(u) = 0.
\end{align*} 
Then there exists functions $M_{\alpha}$ bounded on $\reals_+^2$ such that
\begin{align*}
	\sup_{u \in (0,1)} \left | \frac{\partial^3 \text{Put}_{\text{BS}}}{\partial x^{\alpha_1} \partial y^{\alpha_2}} (h_1(u), h_2(u)) \right | =M_{\alpha}(T, K).
\end{align*}
Furthermore, the behaviour of $M_{\alpha}$ for fixed $K$ and $T$ is characterised by the functions $\zeta$ and $\eta$ respectively, where
\begin{align*}
	\zeta(T) =  \hat A  e^{-\int_0^T r_t^f \dd t } e^{ -E_2 \tilde r^2(T) } e^{- E_1 \tilde r(T)} \sum_{i=0}^n c_i \tilde r^i (T),
\end{align*}
with $\tilde r (T) :=  \int_0^T (r_t^d - r_t^f) \dd t$ and $E_2 > 0$, $E_1 \in \reals, \hat A \in \reals$, $n \in \mathbb{N}$ and $c_0, \dots, c_n$ are constants, and
\begin{align*}
	\eta(K) =\tilde A K^{-D_2 \ln(K) + D_1} \sum_{i = 0}^N C_i (-1)^i \ln^i(K), 
\end{align*}
with $D_2 > 0, D_1 \in \reals, \tilde A \in \reals, N \in \mathbb{N}$ and $C_0, \dots, C_N$ are constants.
\begin{proof}
Under the assumptions of $h_1$ and $h_2$, this supremum will be bounded by a direct application of \Cref{parblow}. Next, we need to show that $M_{\alpha}$ are bounded on $\reals_+^2$ and behave as $\zeta$ and $\eta$ for fixed $K$ and $T$ respectively. 

In the following, $A$ denotes an arbitrary constant and $F$ an arbitrary polynomial of some degree. They may be different on each use. 
\begin{enumerate}[label = (\arabic*), ref = \arabic*]
\item Behaviour in $T$: Fix all variables as constant except $T$. Then we can write the partial derivatives as
\begin{align*}
	A e^{-\int_0^T r_t^f \dd t } \phi(d_+) F(d_+, d_-).
\end{align*}
Expanding and collecting terms with $T$, we can write the partial derivatives in the form
\begin{align*}
	 A  e^{-\int_0^T r_t^f \dd t } e^{ -E_2 \tilde r^2(T) } e^{- E_1 \tilde r(T)} \sum_{i=0}^n c_i \tilde r^i (T) = \zeta(T), 
\end{align*}
where $E_2 > 0$, $E_1 \in \reals$, $n \in \mathbb{N}$ and $c_0, \dots, c_n$ are constants. As $\zeta$ is a composition of polynomials and exponentials of $\tilde r(T)$, then it is bounded for any closed interval not containing $0$. Now since $\sup_{t \in [0, T]} (|r_t^d - r_t^f|) =:R < 1$ then $|\tilde r(T)| < R T$. Thus $\tilde r^i(T) = \littleo{T^i}$ and $\tilde r^i(T) = \littleoo{T^i}$. Hence $\zeta$ tends to $0$ as $T \downarrow 0$ or $T \to \infty$. Thus $\zeta$ is bounded on $\reals_+$. 
\item Behaviour in $K$: Now fix all variables as constant except $K$. Then the partial derivatives can be written as 
\begin{align*}
	A \phi(d_+) F(d_+, d_-).
\end{align*}
Expanding and collecting terms with $K$, the partial derivatives can be written in the form
\begin{align*}
	 A   K^{-D_2 \ln(K) + D_1} F(\ln(1/K)) = \eta(K), 
\end{align*}
where $D_2 > 0$ and $D_1 \in \reals$. 
Then writing out the polynomial explicitly
\begin{align*}
	 \eta(K) = A K^{-D_2 \ln(K) + D_1} \sum_{i = 0}^N C_i (-1)^i \ln^i(K), 
\end{align*} 
where $N \in \mathbb{N}$ and $C_0, \dots, C_N$ are constants. Thus
\begin{align*}
	 |\eta(K)| \leq |A|K^{-D_2 \ln(K) + D_1}  \sum_{i = 0}^N |\ln^i(K)|.
\end{align*} 
$\eta$ is bounded for any closed interval not containing $0$ since it is a composition of exponentials and logarithms. Then as $\ln^i(K) = \littleo{K}$ and $\ln^i(K) = \littleoo{\ln^{N+1} (K)}$, $\eta$ tends to $0$ as $K \downarrow 0$ or $K \to \infty$. Thus $\eta$ is bounded on $\reals_+$.
\end{enumerate}
\end{proof}
\end{lemma}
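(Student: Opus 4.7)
The plan is to leverage \Cref{parblow} directly for the boundedness statement, and then carry out two careful algebraic manipulations to extract the $T$- and $K$-dependence. First, I observe that the assumption on $h_1,h_2$ is precisely designed to exclude the unique blow-up scenario identified in \Cref{parblow}, namely the case $h_2(u)\downarrow 0$ while simultaneously $\hat f(h_1(u)) \to 0$. Since $h_1,h_2$ are continuous linear maps into $\reals_+$ and the third-order partial derivatives of $\text{Put}_{\text{BS}}$ are continuous off the singular locus $\{(x,y): y=0,\ \hat f(x)=0\}$, the composition $u\mapsto |\partial^3\text{Put}_{\text{BS}}/(\partial x^{\alpha_1}\partial y^{\alpha_2})(h_1(u),h_2(u))|$ is continuous on the compact interval $[0,1]$, with the endpoint values finite because $h_1,h_2$ take values in $\reals_+$. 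Hence the supremum is attained and finite, defining $M_{\alpha}(T,K)$.

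For the behaviour in $T$ (with $K$ and all other variables held fixed), I would start from the explicit third-order Greeks listed in \Cref{appen:greeks}, each of which has the form $A\,e^{-\int_0^T r_t^f\,\dd t}\phi(d_+)F(d_+,d_-)$ for a constant $A$ and polynomial $F$. Writing $d_\pm = \tilde r(T)/\sqrt{y} \pm \tfrac12\sqrt{y}$ and expanding $\phi(d_+)=\tfrac{1}{\sqrt{2\pi}}e^{-d_+^2/2}$ isolates three distinct $T$-dependent exponential factors: one in $\tilde r^2(T)$ (with coefficient $E_2 = 1/(2y) > 0$), one linear in $\tilde r(T)$, and the overall discount $e^{-\int_0^T r_t^f\,\dd t}$. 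Collecting powers of $\tilde r(T)$ from $F(d_+,d_-)$ gives exactly the advertised form of $\zeta(T)$. Boundedness on $\reals_+$ follows from continuity on $(0,\infty)$ together with the estimate $|\tilde r(T)|\le RT$: as $T\downarrow 0$ the factors $\tilde r^i(T)$ decay polynomially to $0$, and as $T\to\infty$ the Gaussian factor $e^{-E_2\tilde r^2(T)}$ dominates any polynomial growth in $\tilde r(T)$.

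For the behaviour in $K$ (with $T$ and all other variables fixed), the same list of Greeks gives partial derivatives of the form $A\,\phi(d_+)F(d_+,d_-)$, now depending on $K$ only through $\ln(1/K)$. Isolating the $\ln^2(K)$ term inside the exponent of $\phi(d_+)$ produces the leading factor $K^{-D_2\ln(K)+D_1}$ with $D_2 = 1/(2y) > 0$, and the remaining dependence of $F(d_+,d_-)$ is a polynomial in $\ln(1/K)$. This yields the announced form of $\eta(K)$. For boundedness on $\reals_+$: $\eta$ is continuous on $(0,\infty)$, and because $K^{-D_2\ln(K)}$ decays super-polynomially in $|\ln(K)|$ at both $K\downarrow 0$ and $K\to\infty$, it dominates $\sum_i |C_i|\,|\ln(K)|^i$ at both ends, so $\eta(K)\to 0$.

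The main obstacle is the bookkeeping in the two exponent-collection arguments: one must expand $\phi(d_+)$ carefully, separate the three $T$-dependent (respectively two $K$-dependent) exponential factors, and verify that the residual polynomial in $\tilde r(T)$ (respectively $\ln(K)$) has coefficients determined only by the fixed $y$ and by the combinatorics of $F$, with bounded degree across the finitely many third-order Greeks. Once that is done, the asymptotic estimates used to conclude boundedness on $\reals_+$ are routine applications of exponential-vs-polynomial domination, and the boundedness of $M_{\alpha}$ on $\reals_+^2$ follows because the pointwise sup over $u \in (0,1)$ commutes with these factorwise estimates.
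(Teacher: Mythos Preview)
Your proposal is correct and follows essentially the same route as the paper: invoke \Cref{parblow} (together with the continuity/compactness observation) for finiteness of the supremum, then fix all variables but $T$ (respectively $K$), write each third-order Greek as $A\,e^{-\int_0^T r_t^f\,\dd t}\phi(d_+)F(d_+,d_-)$, expand $\phi(d_+)$ to isolate the quadratic-in-$\tilde r(T)$ (respectively quadratic-in-$\ln K$) exponent, and conclude boundedness by exponential-versus-polynomial domination at both ends. Your identification of $E_2=D_2=1/(2y)$ is slightly more explicit than the paper, but the argument is the same.
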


\begin{proposition}
\label{parboundprop1}
There exists functions $M_{\alpha}$ as in \Cref{parblowu} such that 
\begin{align*}
	\sup_{u \in (0,1)} \left | \frac{\partial^{3} }{\partial x^{\alpha_1} \partial y^{\alpha_2}} \text{Put}_{\text{BS}}\left (F(u) , G(u) \right )  \right | \leq M_{\alpha}(T, K) \quad \Qro \text{   a.s.}
\end{align*}
\end{proposition}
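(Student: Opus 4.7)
The plan is to reduce this to \Cref{parblowu} by verifying that its hypothesis is satisfied almost surely along the linear paths $u \mapsto (F(u), G(u))$. Since both $F$ and $G$ are affine functions of $u \in [0,1]$, they fit the template of $(h_1, h_2)$ in \Cref{parblowu} with $c_1 = S_0$, $d_1 = S_0 \xi_T$, $c_2 = \hat y$, and $d_2 = \int_0^T (1-\rho_t^2) \sigma_t\, \dd t$. Therefore the whole task is to rule out, $\Qro$-almost surely, the existence of an $a \in (0,1)$ at which $G(a) = 0$.

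First I would establish positivity of $F$. Rewriting $F(u) = (1-u) S_0 + u S_0 \xi_T$, and noting that $\xi_T > 0$ $\Qro$-a.s.\ (since $\xi_T$ is an exponential), we obtain $F(u) > 0$ for every $u \in [0,1]$, almost surely. Next I would prove strict positivity of $G$. Write
\begin{align*}
G(u) = \int_0^T (1-\rho_t^2)\bigl[(1-u)\Ex(\sigma_t) + u\,\sigma_t\bigr]\, \dd t.
\end{align*}
Since the variance process satisfies $\sigma_t \geq 0$ pathwise (a standing property of the models considered) and $\Ex(\sigma_t) > 0$ for every $t \in [0,T]$ (a direct consequence of the CIR/IGa moment formulas collected in \Cref{appen:moments}, together with $v_0, \theta_t > 0$), the integrand is nonnegative and strictly positive on a set of positive Lebesgue measure in $t$ (since $1-\rho_t^2 > 0$ outside a null set in our setting). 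Consequently $G(u) > 0$ for every $u \in (0,1)$, $\Qro$-almost surely.

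Given this strict positivity, the degenerate situation
\begin{align*}
\lim_{u \to a} \frac{\ln(F(u)/K) + \int_0^T(r_t^d - r_t^f)\, \dd t}{\sqrt{G(u)}} = 0 \quad \text{and} \quad \lim_{u \to a} G(u) = 0
\end{align*}
cannot occur at any $a \in (0,1)$, since $G$ never vanishes on $(0,1)$ on the almost-sure event constructed above. Hence the hypothesis of \Cref{parblowu} is met pathwise on this event, and the lemma supplies a deterministic function $M_\alpha(T,K)$ satisfying
\begin{align*}
\sup_{u \in (0,1)} \left| \frac{\partial^{3} }{\partial x^{\alpha_1} \partial y^{\alpha_2}} \text{Put}_{\text{BS}}\bigl(F(u), G(u)\bigr)\right| \leq M_\alpha(T, K),
\end{align*}
which is precisely the claimed bound.

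The main obstacle, and the one requiring some care, is the almost-sure strict positivity of $G$: one must be confident that no cancellation occurs between $\Ex(\sigma_t)$ and $\sigma_t$ in the integrand for $u \in (0,1)$. The convex-combination rewriting $(1-u)\Ex(\sigma_t) + u\sigma_t$ exploited above turns this into a transparent nonnegativity argument, since $\sigma_t \geq 0$ rules out destructive cancellation; the residual mass contributed by $(1-u)\Ex(\sigma_t)$ alone is already positive for any $u < 1$. Once this is in place, invoking \Cref{parblowu} is immediate.
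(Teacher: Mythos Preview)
Your proposal is correct and follows essentially the same route as the paper: both arguments reduce the claim to \Cref{parblowu} by showing that $G(u)>0$ for all $u\in(0,1)$ $\Qro$-a.s., using the convex-combination (linear-interpolation) structure of $G$ together with nonnegativity of the variance process and strict positivity of $\Ex(\sigma_t)$. Your treatment is slightly more detailed (you also verify $F(u)>0$, which is harmless though not strictly needed for the lemma's hypothesis), but the core idea and structure match the paper's proof.
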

\begin{proof}
Since $F$ and $G$ are linear functions, then from \Cref{parblowu}, this claim is immediately true if we can show that $G$ is strictly greater than $0$ for any $u \in (0,1)$ $\Qro$ a.s. Recall
\begin{align*}
	G(u) =  (1-u) \left ( \int_0^T (1 - \rho_t^2)  \Ex(\sigma_t) \dd t \right ) + u\int_0^T ( 1 - \rho_t^2) \sigma_t \dd t.
\end{align*}
$G$ corresponds to the linear interpolation of $\int_0^T ( 1- \rho_t^2) \Ex(\sigma_t) \dd t$ and $\int_0^T (1 - \rho_t^2) \sigma_t \dd t$. It is clear $\sup_{t \in [0, T]} (1 - \rho_t^2) > 0$. As $\sigma$ corresponds to the variance process, by \Cref{ass:solutionvariance} it is a non-negative process. Hence, the set $ \{t \in [0,T] : \sigma_t > 0 \}$ has non-zero Lebesgue measure. Thus, these integrals are strictly positive and hence $G$ is strictly larger than $0$ for any $u \in (0,1)$ $\Qro$ a.s. 
\end{proof}
We obtain the following corollary. 

\begin{corollary}
\label{parboundcor1}
There exists a function $M$ as in \Cref{parblowu} such that
\begin{align*}
	 	\sup_{u \in (0,1)} \left | \partial_{yyy}\text{Put}_{\text{BS}}\left (S_0, \tilde G(u) \right )  \right | \leq M (T, K) \quad \Qro \text{   a.s.}.
\end{align*}
\begin{proof}
Recall
\begin{align*}
	\tilde G(u) = (1-u) \left ( \int_0^T  \Ex(\sigma_t) \dd t \right ) + u\int_0^T \sigma_t \dd t.
\end{align*}
Then by the same argument in the proof of \Cref{parboundprop1}, $\tilde G$ is strictly greater than $0$ $\Qro$ a.s. Hence by \Cref{parblowu}, the claim is true.
\end{proof}
\end{corollary}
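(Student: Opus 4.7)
The plan is to realise this Corollary as a direct instance of the machinery developed in Lemma~\ref{parblowu} and used in the proof of Proposition~\ref{parboundprop1}, specialised to the case in which the spot argument is the constant $S_0$. I would identify $h_1(u) \equiv S_0$ and $h_2(u) = \tilde G(u) = (1-u)\int_0^T \Ex(\sigma_t)\,\dd t + u\int_0^T \sigma_t\,\dd t$, so that both $h_1$ and $h_2$ are affine on $[0,1]$ and Lemma~\ref{parblowu} is immediately applicable provided its non-degeneracy hypothesis on $(h_1,h_2)$ holds.

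The key verification is that the hypothesis of Lemma~\ref{parblowu} is satisfied. Because $\sigma$ models a variance and is chosen so that $\{t \in [0,T] : \sigma_t > 0\}$ has positive Lebesgue measure $\Qro$ a.s., both endpoints $\int_0^T \Ex(\sigma_t)\,\dd t$ and $\int_0^T \sigma_t\,\dd t$ are strictly positive $\Qro$ a.s. Since $\tilde G$ is a convex combination of these two strictly positive quantities, $\tilde G(u) > 0$ for every $u \in [0,1]$ $\Qro$ a.s. This rules out the only problematic scenario flagged in Lemma~\ref{parblowu}, namely the simultaneous vanishing of $h_2(u)$ and of the $\log$-moneyness ratio appearing in $d_\pm$.

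Having secured non-degeneracy, I would simply invoke Lemma~\ref{parblowu} with this specific choice of $h_1$, $h_2$ and with the multi-index $\alpha = (0,3)$, which produces the bounding function $M(T,K) = M_{(0,3)}(T,K)$ that is bounded on $\reals_+^2$, vanishes at $T \downarrow 0$, $T \to \infty$, $K \downarrow 0$, $K \to \infty$, and displays the prescribed $\zeta$ and $\eta$ behaviour in the respective variables. The only genuine content of the argument is the pathwise lower bound on $\tilde G$, which is essentially a statement that the variance model is non-trivial; the remaining claims are pure quotations of Lemma~\ref{parblowu}. Consequently I expect no real obstacle beyond noting that the $\rho = 0$ simplification $(1-\rho_t^2) \equiv 1$ inside $\tilde G$ still leaves a strictly positive integrand, so the positivity argument from Proposition~\ref{parboundprop1} carries over verbatim.
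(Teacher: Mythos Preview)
Your proposal is correct and follows essentially the same route as the paper: show that $\tilde G(u)>0$ for all $u\in(0,1)$ $\Qro$ a.s.\ by the convex-combination argument used in Proposition~\ref{parboundprop1}, then invoke Lemma~\ref{parblowu} with $h_1\equiv S_0$, $h_2=\tilde G$ and $\alpha=(0,3)$. The only difference is that you spell out the identification of $h_1,h_2$ and the multi-index explicitly, which the paper leaves implicit.
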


\begin{theorem}[Error bounds for general $\sigma$]
\label{thm:errbound}
The error term in the pricing formula is bounded as
\begin{align*}
\left |  \Ex \left (\mathcal{E}(\sigma) \right ) \right | &\leq  \sum_{|\alpha| = 3} C_{\alpha} M_{\alpha} (T, K) T^{\alpha_2-\frac{1}{2}} S_0^{\alpha_1}  \left (\Ex(\xi_T - 1)^{2 \alpha_1} \right )^{1/2} \left \{  \int_0^T (1 - \rho_u^2)^{ 2 \alpha_2} \Ex |\sigma_u - \Ex(\sigma_u)|^{2 \alpha_2} \dd u  \right \}^{1/2} \\
&+ C\tilde M(T, K) S_0^2 T^{5/2} e^{\int_0^T \rho_m^2  \Ex_{\Qro_2}(\sigma_m) \dd m }   \left ( \Ex_{\Qro_2} e^{\int_0^T 2 \rho_m^2 |\sigma_m - \Ex_{\Qro_2}(\sigma_m) |\dd m } \right )^{1/2} \\ &\cdot \left (  \int_0^T \rho_u^{12} \Ex_{\Qro_2} |\sigma_u - \Ex_{\Qro_2}(\sigma_u) |^6\dd u \right )^{1/2},
\end{align*}
where $\tilde M (T, K) = \partial_{xx} \text{Put}_{\text{BS}}(\hat x, \hat y)$ is bounded on $\reals_+^2$ and $C =1/12 ,C_{\alpha} = \frac{1}{3} \frac{|\alpha|}{\alpha_1! \alpha_2!}$ are constants, the latter depending on $\alpha$.
\end{theorem}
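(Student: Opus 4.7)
The plan is to bound $\big|\Ex(\mathcal{E}(\sigma))\big|$ by bounding the two pieces $\Ex(\mathcal{E}_{\text{BS}}(\sigma))$ and $\Ex(\mathcal{\tilde E}(\sigma))$ separately, using (i) the uniform pointwise bounds on the third-order Greeks provided by \Cref{parboundprop1}, (ii) Cauchy--Schwarz to split products of random factors, (iii) Jensen's inequality to move powers inside the $\dd u$-integral, and (iv) the change of measure $\dd \Qro_2/\dd \Qro = \xi_T^2 e^{-\int_0^T \rho_u^2 \sigma_u \dd u}$ to absorb the martingale factor appearing in $\mathcal{\tilde E}$.

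For the Black--Scholes remainder, I would start from the representation
\begin{align*}
\mathcal{E}_{\text{BS}}(\sigma)= \sum_{|\alpha|=3}\frac{|\alpha|}{\alpha_1!\alpha_2!}E_{\alpha}(\cdot,\cdot)\,S_0^{\alpha_1}(\xi_T-1)^{\alpha_1}\!\left(\int_0^T(1-\rho_u^2)(\sigma_u-\Ex(\sigma_u))\dd u\right)^{\!\alpha_2}\!,
\end{align*}
take absolute values inside the expectation, and bound $|E_\alpha|$ pointwise by $M_\alpha(T,K)\int_0^1(1-u)^2\dd u=M_\alpha(T,K)/3$ using \Cref{parboundprop1}. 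Then Cauchy--Schwarz splits the remaining random factor into $\sqrt{\Ex(\xi_T-1)^{2\alpha_1}}$ times the $L^2$ norm of the integral factor, and Jensen's inequality on $[0,T]$ gives
\begin{align*}
\Ex\left|\int_0^T(1-\rho_u^2)(\sigma_u-\Ex(\sigma_u))\dd u\right|^{2\alpha_2}\le T^{2\alpha_2-1}\!\int_0^T(1-\rho_u^2)^{2\alpha_2}\Ex|\sigma_u-\Ex(\sigma_u)|^{2\alpha_2}\dd u,
\end{align*}
which yields the factor $T^{\alpha_2-1/2}$ after taking the square root and gives the first sum in the target bound with $C_\alpha=\frac{1}{3}\frac{|\alpha|}{\alpha_1!\alpha_2!}$.

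For $\mathcal{\tilde E}(\sigma)$ the key step is to recognize that $\xi_T^2 e^{-\int_0^T\rho_u^2\sigma_u\dd u}$ is precisely the Radon--Nikodym derivative $\dd \Qro_2/\dd \Qro$ (by \Cref{lem:girsseqmeas} with $n=0,1$), so taking $|\cdot|$ and then expectation converts the whole $\Qro$-expectation into a $\Qro_2$-expectation with $\tilde M(T,K):=\partial_{xx}\text{Put}_{\text{BS}}(\hat x,\hat y)$ factored out. The deterministic factor $e^{\int_0^T\rho_m^2\Ex_{\Qro_2}(\sigma_m)\dd m}$ comes out, and the bound $\int_0^1(1-u)^2 e^{uX}\dd u\le \frac{1}{3}e^{|X|}$ handles the residual $\dd u$-integral. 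A Cauchy--Schwarz on the resulting $\Qro_2$-expectation produces $\sqrt{\Ex_{\Qro_2}e^{2\int_0^T\rho_m^2|\sigma_m-\Ex_{\Qro_2}(\sigma_m)|\dd m}}$ and the $L^2$-norm of $\left|\int_0^T\rho_u^2(\sigma_u-\Ex_{\Qro_2}(\sigma_u))\dd u\right|^3$; Jensen's inequality on the latter (with exponent $6$) provides the $T^{5/2}$ factor and the $\rho^{12}$ moment integrand, giving the second summand with $C=\frac{1}{4}\cdot\frac{1}{3}=\frac{1}{12}$.

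The only delicate point is the change-of-measure step for $\mathcal{\tilde E}$: one must verify that the random factor outside the Radon--Nikodym derivative is non-negative (it is, being a cube times $(1-u)^2$ times an exponential, up to a sign that is absorbed into $|\cdot|$) and that Cauchy--Schwarz may be applied under $\Qro_2$ with both resulting $L^2$-norms finite --- which is a tacit moment/integrability assumption that should be flagged and which ultimately controls whether the bound is meaningful for a given variance process $\sigma$ (e.g.\ finite sixth moments and finite exponential moments of $\int_0^T\rho_m^2|\sigma_m-\Ex_{\Qro_2}(\sigma_m)|\dd m$ under $\Qro_2$). Everything else is essentially assembly from \Cref{parboundprop1}, \Cref{totalerrorpart2}, Cauchy--Schwarz, and Jensen.
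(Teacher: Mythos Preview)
Your proposal is correct and follows essentially the same route as the paper: bound $|E_\alpha|$ by $\tfrac{1}{3}M_\alpha(T,K)$ via \Cref{parboundprop1}, then use Cauchy--Schwarz together with the Jensen-type integral inequality $\bigl(\int_0^T|f|\,\dd u\bigr)^p\le T^{p-1}\int_0^T|f|^p\,\dd u$ for the $\mathcal{E}_{\text{BS}}$ piece, and for $\mathcal{\tilde E}$ change measure to $\Qro_2$, bound $\int_0^1(1-u)^2e^{uX}\dd u\le\tfrac{1}{3}e^{|X|}$, and apply Cauchy--Schwarz plus Jensen under $\Qro_2$. The only cosmetic difference is that the paper applies Jensen with exponent $\alpha_2$ before Cauchy--Schwarz and then Jensen again with exponent $2$, whereas you apply Cauchy--Schwarz first and then Jensen once with exponent $2\alpha_2$; both yield the same $T^{\alpha_2-1/2}$ factor and integrand.
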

\begin{proof}
First, by \Cref{parboundprop1}, we have that $E_{\alpha}(S_0 \xi_T , \int_0^T \sigma_t(1 - \rho_t^2)) \leq \frac{1}{3} M_{\alpha}(T, K)$. By \Cref{totalerrorpart2}, the error is decomposed as $\EE(\sigma) = \EE_{\text{BS}}(\sigma) + \mathcal{\tilde E}(\sigma)$. We will make use of the well known integral inequality 
\begin{align}
	\left ( \int_0^T |f(u) | \dd u \right )^p \leq T^{p-1} \int_0^T |f(u)|^p \dd u, \quad p \geq 1. \label{intinequality}
\end{align} 
For the term $\EE_{\text{BS}}(\sigma)$, we have 
\begin{align*}
	|\EE_{\text{BS}}(\sigma)| \leq \sum_{|\alpha| = 3} C_{\alpha} M_{\alpha} (T, K)  S_0^{\alpha_1} |\xi_T - 1|^{\alpha_1} T^{ \alpha_2-1} \left ( \int_0^T (1 - \rho_u^2)^{\alpha_2}  |\sigma_u - \Ex(\sigma_u) |^{\alpha_2} \right ), 
\end{align*}
where we have used the integral inequality \cref{intinequality}. Applying the Cauchy-Schwarz inequality, we obtain
\begin{align*}
	\Ex |\EE_{\text{BS}}(\sigma)| &\leq \sum_{|\alpha| = 3} C_{\alpha} M_{\alpha} (T, K)  S_0^{\alpha_1}  T^{ \alpha_2-1} \left (\Ex |\xi_T - 1|^{2\alpha_1}\right )^{1/2} \left \{ \Ex \left ( \int_0^T (1 - \rho_u^2)^{\alpha_2}  |\sigma_u - \Ex(\sigma_u) |^{\alpha_2} \right )^2 \right \}^{1/2} \\
	&\leq \sum_{|\alpha| = 3} C_{\alpha} M_{\alpha} (T, K)  S_0^{\alpha_1}  T^{ \alpha_2-1} \left (\Ex |\xi_T - 1|^{2\alpha_1}\right )^{1/2} T^{1/2} \\ &\cdot \left \{ \Ex \left ( \int_0^T (1 - \rho_u^2)^{2 \alpha_2}  |\sigma_u - \Ex(\sigma_u) |^{2 \alpha_2} \right ) \right \}^{1/2}, 
\end{align*}
where we have used the integral inequality \cref{intinequality} for the second inequality. \\[.5cm]
For the term $\mathcal{\tilde E}(\sigma)$, notice that 
\begin{align*}
 \Ex (\mathcal{\tilde E} (\sigma)) &= \Ex_{\Qro_2} \left (\mathcal{\tilde E}(\sigma)  \xi_T^{-2} e^{\int_0^T \rho_u^2 \sigma_u \dd u }\ \right ) = \frac{1}{4} \partial_{xx} \text{Put}_{\text{BS}}(\hat x , \hat y) S_0^2 \Ex_{\Qro_2} \Bigg \{ \left ( \int_0^T \rho_u^2 (\sigma_u - \Ex_{\Qro_2}(\sigma_u) )\dd u \right )^3 \\ &\cdot \int_0^1 (1 - u)^2 e^{\int_0^T \rho_m^2 \Ex_{\Qro_2}(\sigma_m) \dd m } e^{u \int_0^T \rho_m^2 (\sigma_m  - \Ex_{\Qro_2}(\sigma_m)) \dd m } \dd u \Bigg \}. 
\end{align*}
Now for $u \in (0, 1)$, $e^{u \int_0^T \rho_m^2 (\sigma_m  - \Ex_{\Qro_2}(\sigma_m)) \dd m } \leq e^{u \int_0^T \rho_m^2 |\sigma_m  - \Ex_{\Qro_2}(\sigma_m)| \dd m }$. Thus
\begin{align*}
	\sup_{u \in (0,1) } e^{u \int_0^T \rho_m^2 (\sigma_m  - \Ex_{\Qro_2}(\sigma_m)) \dd m } \leq  e^{\ \int_0^T \rho_m^2 |\sigma_m  - \Ex_{\Qro_2}(\sigma_m)| \dd m }.
\end{align*}
Hence 
\begin{align*}
	\int_0^1 (1 - u)^2 e^{\int_0^T \rho_m^2 \Ex_{\Qro_2}(\sigma_m) \dd m } e^{u \int_0^T \rho_m^2 (\sigma_m  - \Ex_{\Qro_2}(\sigma_m)) \dd m } \dd u \leq \frac{1}{3} e^{\int_0^T \rho_m^2 \left ( \Ex_{\Qro_2}(\sigma_m) + |\sigma_m - \Ex_{\Qro_2}(\sigma_m) | \right )\dd m }. 
\end{align*} 
Thus 
\begin{align*}
 	\Ex |\mathcal{\tilde E} (\sigma)| &\leq C \partial_{xx} \text{Put}_{\text{BS}}(\hat x , \hat y) S_0^2  e^{\int_0^T \rho_m^2 \Ex_{\Qro_2}(\sigma_m)\dd m} \\& \cdot\Ex_{\Qro_2} \Bigg \{ \left ( \int_0^T \rho_u^2 (\sigma_u - \Ex_{\Qro_2}(\sigma_u) )\dd u \right )^3 e^{\int_0^T \rho_m^2  |\sigma_m - \Ex_{\Qro_2}(\sigma_m) |\dd m } \Bigg \}. 
\end{align*}
Finally, using the Cauchy-Schwarz inequality and the integral inequality \cref{intinequality}, we obtain 
\begin{align*}
 	\Ex |\mathcal{\tilde E} (\sigma)| &\leq C \partial_{xx} \text{Put}_{\text{BS}}(\hat x , \hat y) S_0^2  e^{\int_0^T \rho_m^2 \Ex_{\Qro_2}(\sigma_m)\dd m}T^{5/2} \Bigg ( \int_0^T \rho_u^{12} \Ex_{\Qro_2} |\sigma_u - \Ex_{\Qro_2}(\sigma_u) |^6 \dd u \Bigg )^{1/2} \\&\cdot \left (  \Ex_{\Qro_2} \left (e^{\int_0^T 2 \rho_m^2  |\sigma_m - \Ex_{\Qro_2}(\sigma_m) |\dd m } \right ) \right )^{1/2}. 
\end{align*}
Furthermore, notice that $ \partial_{xx} \text{Put}_{\text{BS}}(\hat x , \hat y) = \tilde M(T, K)$, where $\tilde M(T, K)$ is a function which behaves like $M(T, K)$.
\end{proof}

\begin{corollary}[Error bounds for general $\sigma$: $\rho = 0$]
\label{cor:errboundrho0}
For $\rho = 0$ a.e., the error term in the pricing formula is bounded as 
\begin{align*}
	\left | \Ex \left (\mathcal{E}_0(\sigma) \right ) \right | &\leq C M(T, K)T^2 \int_0^T \Ex | \sigma_t - \Ex (\sigma_t) |^3 \dd t,
\end{align*}
where $C = 1/6$ is a constant.
\end{corollary}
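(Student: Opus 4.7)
The plan is to start directly from the explicit representation of $\mathcal{E}_0(\sigma)$ given by Corollary \ref{totalerrorrho0part2}, take absolute values inside the expectation, and systematically extract the deterministic bound. Specifically, I would write
\begin{align*}
|\mathcal{E}_0(\sigma)| \leq \frac{1}{2}\left|\int_0^T (\sigma_t - \Ex(\sigma_t))\,\dd t\right|^3 \int_0^1 (1-u)^2 \left|\partial_{yyy}\text{Put}_{\text{BS}}(S_0, \tilde G(u))\right|\dd u,
\end{align*}
then invoke \Cref{parboundcor1} to pull the supremum of the partial derivative outside, yielding the deterministic factor $M(T,K)$ in place of the integrand; the remaining $\int_0^1 (1-u)^2\,\dd u = 1/3$ combines with the prefactor $1/2$ to produce the constant $1/6$.

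The next step is to deal with the cubic random factor. I would apply the standard integral inequality $\left(\int_0^T|f(u)|\dd u\right)^p \leq T^{p-1}\int_0^T|f(u)|^p\,\dd u$ (the same Jensen-type bound labelled \cref{intinequality} in the preceding proof) with $p=3$, giving
\begin{align*}
\left|\int_0^T (\sigma_t - \Ex(\sigma_t))\,\dd t\right|^3 \leq T^2 \int_0^T |\sigma_t - \Ex(\sigma_t)|^3\,\dd t.
\end{align*}
Combining these two bounds yields a pathwise estimate of the form $|\mathcal{E}_0(\sigma)| \leq \tfrac{1}{6} M(T,K)\,T^2 \int_0^T |\sigma_t - \Ex(\sigma_t)|^3\,\dd t$ which holds $\Qro$ almost surely.

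Finally, I would take expectations on both sides, apply the triangle inequality $|\Ex(\mathcal{E}_0(\sigma))| \leq \Ex|\mathcal{E}_0(\sigma)|$, and use Tonelli's theorem (non-negativity of the integrand after taking absolute values justifies the interchange) to swap the expectation with the outer $\dd t$ integral, producing the claimed bound. The only technical nuance worth noting is the $\Qro$-a.s.\ qualifier on \Cref{parboundcor1}: since the bound $M(T,K)$ is deterministic and holds almost surely, it passes through the expectation without issue. No change of measure is needed here, which is what makes the $\rho=0$ case substantially cleaner than the general case; the main obstacle in the general theorem, namely controlling the Dol\'eans-Dade exponential factor via a change to $\Qro_2$ and bounding $\Ex_{\Qro_2}\exp(\int_0^T 2\rho_m^2 |\sigma_m - \Ex_{\Qro_2}(\sigma_m)|\,\dd m)$, simply does not arise, so the proof reduces to a clean application of the two preparatory lemmas together with Jensen's integral inequality.
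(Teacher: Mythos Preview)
Your proposal is correct and follows essentially the same approach as the paper's own proof: start from the explicit representation in \Cref{totalerrorrho0part2}, bound the partial derivative via \Cref{parboundcor1}, and apply the integral inequality \cref{intinequality} with $p=3$. Your write-up is simply more explicit than the paper's (computing $\int_0^1(1-u)^2\,\dd u=1/3$, noting Tonelli, etc.), but the argument is identical.
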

\begin{proof}
From \Cref{totalerrorrho0part2}, we have
\begin{align*}
\mathcal{E}_0(\sigma) &= \frac{1}{2}  \left ( \int_0^T (\sigma_t - \Ex(\sigma_t)) \dd t \right )^3  \int_0^1 (1 - u)^2 \partial_{yyy} \text{Put}_{\text{BS}} \left (S_0,   \tilde G(u) \right ) \dd u.
\end{align*} 
Notice that $( \int_0^T| f(t)| \dd t )^3 \leq T^2 \int_0^T |f(t)|^3 \dd t$ and use \Cref{parboundcor1}. Then the result is immediate.
\end{proof}

\begin{remark}
\Cref{thm:errbound} and \Cref{cor:errboundrho0} present bounds on the error generated by the expansion procedure in terms of higher-order moments of the variance process. Note that there are two components in the expansion procedure that mainly contribute to the error, the third-order partial derivatives of $\text{Put}_{\text{BS}}$ and the higher-order moments of the variance process. This is a clear consequence of how the remainder term behaves in Taylor's theorem; namely, the remainder is governed by the chosen function and the difference in the expansion and evaluation point. \Cref{parboundprop1} and \Cref{parboundcor1} demonstrate that $\frac{\partial^3 \text{Put}_{\text{BS}}}{\partial x^{\alpha_1} \partial y^{\alpha_2}}(F(u), G(u))$ for $u \in (0,1)$ are dominated by functions such as $M(T, K)$, which itself are bounded on $\reals^2_+$. The behaviour of $M(T,K)$ for a fixed $K$ and fixed $T$ is governed by the functions $\zeta(T)$ and $\eta(K)$ respectively. Interestingly, the error bound as a function of strike $K$ is purely governed by the function $\eta(K)$, and thus tends to zero for very small or very large strikes. However, the total behaviour in $T$ of the error bound depends not just on $\zeta(T)$, but on other terms as well, most notably the moments of the underlying variance process.

Indeed, in order to go a step further, one can consider a specific example of a variance process, and then attempt to bound the moments that appear in \Cref{thm:errbound} and \Cref{cor:errboundrho0}. However, for the purposes of application, it is perhaps more beneficial to instead consider a numerical sensitivity analysis in order to investigate the behaviour of the error with respect to parameter adjustments. Indeed, we have performed this in \Cref{sec:numerical2} for the Heston and GARCH diffusion models.
\end{remark}

 \section{Closed-form formulas and fast calibration}
 \noindent
 \label{sec:fastcal2}
In this section, under the assumption of piecewise-constant parameters, we provide closed-form formulas for the price of a put option as well as a fast calibration scheme. To do this, we appeal to the results from \Cref{sec:pricingspecmodels2} on the approximation of prices of put options in the various models. We recognise that these prices are in terms of specific iterated integrals. Our goal is to show that these iterated integrals obey a convenient recursive property when parameters are piecewise-constant. To this end, define the integral operator 
\begin{align}
	\omega_T^{(k, l)} := \int_0^T l_u e^{\int_0^uk_z \dd z } \dd u. \label{int}
\end{align}
In addition, we define the $n$-fold integral operator using the following recurrence:
\begin{align}
	\omega_T^{(k^{(n)}, l^{(n)}), (k^{(n-1)}, l^{(n-1)}), \dots , (k^{(1)}, l^{(1)})}  := \omega_T^{\big (k^{(n)}, l^{(n)} w_\cdot ^{(k^{(n-1)}, l^{(n-1)}) ,\cdots, (k^{(1)}, 	l^{(1)})}\big ) }, \quad n \in \mathbb{N}. \label{intit}
\end{align}
The rest of the section is dedicated to making these integral operators closed-form when parameters are piecewise-constant.

Let $ \mathcal{T} = \{ 0 = T_0, T_1, \dots, T_{N-1}, T_N = T \}$, where $T_i < T_{i+1}$ be a collection of maturity dates on $[0,T]$, with $\Delta T_i := T_{i+1} - T_i$ and $\Delta T_0 \equiv 1$. When the dummy functions are piecewise-constant, that is, $l^{(n)}_t = l^{(n)}_i$ on $ t \in [T_{i}, T_{i+1})$ and similarly for $k^{(n)}$, then we can recursively calculate the integral operators \cref{int} and \cref{intit}. Define 
\begin{align*}
e_t^{(k^{(n)}, \dots, k^{(1)})} &:= e^{\int_0^t \sum^n_{j=1} k_z^{(j)} \dd z}, \\
\varphi_{T_i, t}^{(k,p)} &:= \int_{T_i}^t \gamma_i^p(u) e^{\int_{T_i}^u k_z \dd z } \dd u,
\end{align*}
where $\gamma_i(u) :=( u - T_i)/\Delta T_i$ and $p \in \mathbb{N} \cup \{ 0 \}$. In addition, define the $n$-fold extension of $\varphi_{T_i, \cdot}^{(\cdot, \cdot)}$ recursively by
\begin{align*}
	\varphi_{{T_i}, t}^{(k^{(n)}, p_n), \dots, (k^{(2)}, p_2),( k^{(1)}, p_1)} := \int_{T_i}^t \gamma_i^{p_n} (u) e^{\int_{T_i}^u k_z^{(n)} \dd z } \varphi_{{T_i}, u}		^{(k^{(n-1)},p_{n-1}), \dots, (k^{(2)}, p_2), ( k^{(1)}, p_1)} \dd u,
\end{align*}
where $p_n \in \mathbb{N} \cup \{0 \}$.
With the assumption that the dummy functions are piecewise-constant, we can obtain the integral operator at time $T_{i+1}$ expressed by terms at time $T_i$.

\begin{align*} 
\omega_{T_{i+1}}^{(k^{(1)}, l^{(1)})} &= \omega_{T_{i}}^{(k^{(1)}, l^{(1)})} + l_i^{(1)} e_{T_i}^{(k^{(1)})} \varphi_{T_i, T_{i+1}}^{(k^{(1)},0)}, \\
\omega_{T_{i+1}}^{(k^{(2)}, l^{(2)}), (k^{(1)}, l^{(1)})} &= \omega_{T_{i}}^{(k^{(2)}, l^{(2)}), ( k^{(1)}, l^{(1)})}\\ &+  l_i^{(2)} e_{T_i}^{(k^{(2)})} \varphi_{T_i, T_{i+1}} ^{(k^{(2)},0)} \omega_{T_{i}}^{(k^{(1)}, l^{(1)})} \\&+ l_i^{(2)} l_i^{(1)} e_{T_i}^{(k^{(2)}, k^{(1)})} \varphi_{T_i, T_{i+1}}^{(k^{(2)}, 0) ,( k^{(1)},0)}, \\
\omega_{T_{i+1}}^{(k^{(3)}, l^{(3)}),(k^{(2)}, l^{(2)}), (k^{(1)}, l^{(1)})} &=  \omega_{T_{i}}^{(k^{(3)}, l^{(3)}),(k^{(2)}, l^{(2)}), (k^{(1)}, l^{(1)})} \\&+ l_i^{(3)}  e_{T_i}^{(k^{(3)})} \varphi_{T_i, T_{i+1}}^{(k^{(3)}, 0)} \omega_{T_i} ^{(k^{(2)}, l^{(2)}), (k^{(1)}, l^{(1)})} \\
&+ l_i^{(3)} l_i^{(2)} e_{T_i}^{(k^{(3)}, k^{(2)})} \varphi_{T_i, T_{i+1}}^{(k^{(3)},0), ( k^{(2)},0)} \omega_{T_i}^{(k^{(1)}, l^{(1)})} \\&+ l_i^{(3)} l_i^{(2)} l_i^{(1)} e_{T_i}^{(k^{(3)}, k^{(2)}, k^{(1)})} \varphi_{T_i, T_{i+1}}^{(k^{(3)},0), ( k^{(2)}, 0), ( k^{(1)}, 0)}, \\
\omega_{T_{i+1}}^{(k^{(4)}, l^{(4)}), \dots , (k^{(1)}, l^{(1)})} &=  \omega_{T_{i}}^{(k^{(4)}, l^{(4)}), (k^{(3)}, l^{(3)}),(k^{(2)}, l^{(2)}), (k^{(1)}, l^{(1)})} \\&+ l_i^{(4)}  e_{T_i}^{(k^{(4)})} \varphi_{T_i, T_{i+1}}^{(k^{(4)},0)} \omega_{T_i} ^{(k^{(3)}, l^{(3)}), (k^{(2)}, l^{(2)}), (k^{(1)}, l^{(1)})} \\
&+ l_i^{(4)} l_i^{(3)} e_{T_i}^{(k^{(4)}, k^{(3)})} \varphi_{T_i, T_{i+1}}^{(k^{(4)},0), ( k^{(3)}, 0)} \omega_{T_i}^{(k^{(2)}, l^{(2)}),(k^{(1)}, l^{(1)})} \\&+ l_i^{(4)} l_i^{(3)} l_i^{(2)} e_{T_i}^{(k^{(4)}, k^{(3)}, k^{(2)})} \varphi_{T_i, T_{i+1}}^{(k^{(4)}, 0), (k^{(3)}, 0), ( k^{(2)}, 0)} \omega_{T_i}^{(k^{(1)}, l^{(1)})} \\&+ l_i^{(4)} l_i^{(3)} l_i^{(2)} l_i^{(1)}e_{T_i}^{(k^{(4)}, k^{(3)}, k^{(2)}, k^{(1)})} \varphi_{T_i, T_{i+1}}^{(k^{(4)}, 0),(k^{(3)}, 0), (k^{(2)}, 0),( k^{(1)}, 0)}, 		\\
\omega_{T_{i+1}}^{(k^{(5)}, l^{(5)}),\dots , (k^{(1)}, l^{(1)})} &=  \omega_{T_{i}}^{(k^{(5)}, l^{(5)}),(k^{(4)}, l^{(4)}), (k^{(3)}, l^{(3)}),(k^{(2)}, l^{(2)}), (k^{(1)}, l^{(1)})} \\&+ l_i^{(5)}  e_{T_i}^{(k^{(5)})} \varphi_{T_i, T_{i+1}}^{(k^{(5)},0)} \omega_{T_i} ^{(k^{(4)}, l^{(4)}),(k^{(3)}, l^{(3)}), (k^{(2)}, l^{(2)}), (k^{(1)}, l^{(1)})} \\
&+ l_i^{(5)} l_i^{(4)} e_{T_i}^{(k^{(5)}, k^{(4)})} \varphi_{T_i, T_{i+1}}^{(k^{(5)},0), ( k^{(4)}, 0)} \omega_{T_i}^{(k^{(3)}, l^{(3)}),(k^{(2)}, l^{(2)}),(k^{(1)}, l^{(1)})} \\&+ l_i^{(5)} l_i^{(4)} l_i^{(3)} e_{T_i}^{(k^{(5)}, k^{(4)}, k^{(3)})} \varphi_{T_i, T_{i+1}}^{(k^{(5)}, 0), (k^{(4)}, 0), ( k^{(3)}, 0)} \omega_{T_i}^{(k^{(2)}, l^{(2)}),(k^{(1)}, l^{(1)})} \\&+ l_i^{(5)} l_i^{(4)} l_i^{(3)} l_i^{(2)}e_{T_i}^{(k^{(5)}, k^{(4)}, k^{(3)}, k^{(2)})} \varphi_{T_i, T_{i+1}}^{(k^{(5)}, 0),(k^{(4)}, 0), (k^{(3)}, 0),( k^{(2)}, 0)}\omega_{T_i}^{(k^{(1)}, l^{(1)})} \\
&+l_i^{(5)} l_i^{(4)} l_i^{(3)} l_i^{(2)} l_i^{(1)}e_{T_i}^{(k^{(5)}, k^{(4)}, k^{(3)}, k^{(2)}, k^{(1)})} \varphi_{T_i, T_{i+1}}^{(k^{(5)}, 0),(k^{(4)}, 0), (k^{(3)}, 0),( k^{(2)}, 0), (k^{(1)}, 0)}.
\end{align*}


The only terms here that are not explicit are the functions $e_\cdot^{(\cdot, \dots , \cdot)}$ and $\varphi_{T_i, \cdot}^{(\cdot, \cdot), \dots , (\cdot, \cdot)}$. For $t \in (T_i, T_{i+1}]$, we can derive the following: 
\begin{align*}
e_t^{(k^{(n)}, \dots , k^{(1)})} = 
	e_{T_i}^{(k^{(n)}, \dots , k^{(1)})} e^{ \Delta T_i \gamma_i(t) \sum_{j=1}^n k_i^{(j)}  }= e^{\sum_{m=0}^{i-1}\Delta T_m \sum_{j=1}^n k_m^{(j)} } e^{ \Delta 		T_i \gamma_i(t) \sum_{j=1}^n k_i^{(j)}},
\end{align*}
where $e_0^{(k^{(n)}, \dots, k^{(1)})} = 1$. Using integration by parts and basic integration properties, we obtain
\begin{align*}
 \varphi_{T_i, t}^{(k, p)} &= 
 	\begin{cases} \frac{1}{k_i} \left ( \gamma_i^p(t) e^{k_i \Delta T_i \gamma_i(t) } - \frac{p}{\Delta T_i} \varphi_{T_i, t}^{(k, p-1)} \right ), \quad &k_i \neq 0, p \geq 1, \\[.2cm] 
 	\frac{1}{k_i} \left ( e^{k_i \Delta T_i \gamma_i(t) } - 1 \right ), \quad &k_i \neq 0, p =0, \\[.2cm] 
 	\frac{1}{p+1} \Delta T_i \gamma_i^{p+1}(t),  &k_i = 0, p \geq 0.
 	\end{cases}
\end{align*}
In addition, for $n \geq 2$,
\begin{align*}
\varphi_{{T_i}, t}^{(k^{(n)}, p_n),  \dots, (k^{(1)}, p_1)} &= 
	\begin{cases} \frac{1}{k_i^{(n)}} \Big (\gamma_i^{p_n}(t) e^{ k_i^{(n)} \Delta T_i \gamma_i(t)} \varphi_{{T_i}, t}^{(k^{(n-1)}, p_{n-1}), \dots,( k^{(1)},p_1)} \\-  \frac{p_n}{\Delta T_i} \varphi_{{T_i}, t}^{(k^{(n)}, p_n-1), (k^{(n-1)}, p_{n-1}), \dots, (k^{(1)}, p_1)}  \\-  \varphi_{{T_i}, t}^{(k^{(n)} + k^{(n-1)}, p_n + p_{n-1} ),  (k^{(n-2)}, p_{n-2} ), \dots, (k^{(1)}, p_1)} \Big ), 		&k_i^{(n)} \neq 0, p_n \geq 1, \\[.3cm]
	\frac{1}{k_i^{(n)}} \Big (e^{ k_i^{(n)} \Delta T_i \gamma_i(t)} \varphi_{{T_i}, t}^{(k^{(n-1)}, p_{n-1}), \dots,( k^{(1)},p_1)}  \\-  \varphi_{{T_i}, t}^{(k^{(n)} + k^{(n-1)},  p_{n-1} ),  (k^{(n-2)}, p_{n-2} ), \dots, (k^{(1)}, p_1)} \Big ), 		&k_i^{(n)} \neq 0, p_n =0, \\[.3cm]
	\frac{\Delta T_i }{p_n + 1} \Big (  \gamma_i^{p_n +1 } (t)  \varphi_{{T_i}, t}^{(k^{(n-1)}, p_{n-1}), \dots,( k^{(1)}, p_1)} \\- \varphi_{T_i, t} ^{(k^{(n-1)}, p_n + p_{n-1} + 1 ), (k^{(n-2)}, p_{n-2}), \dots , (k^{(1)}, p_1)} \Big ), &k_i^{(n)} = 0, p_n \geq 0.
\end{cases}
\end{align*}

\subsection{Closed-form prices}
We now express the second-order put option prices under the Heston and GARCH diffusion models in terms of the integral operators \cref{int} and \cref{intit}. These are given by the following propositions. By doing so, we essentially obtain a closed-form expression for the put option prices when parameters are piecewise-constant. This is a consequence of the relationships derived in this section. Since we are essentially expressing the results from \Cref{sec:pricingspecmodels2} with new notation, we omit the proofs for these propositions.

\begin{proposition}[Second-order Heston explicit put option price]
\label{prop:hestonCFA}
The second-order put option price in the Heston model can be written as
\begin{align*}
\text{Put}^{(2)}_{\text{H}} &=  \text{Put}_{\text{BS}} ( \hat x, \hat y) \\ &+ \frac{1}{2} \partial_{xx} \text{Put}_{\text{BS}} ( \hat x, \hat y)  S_0^2 \Ex (\xi _T - 1)^2 + \frac{1}{2}\partial_{yy} \text{Put}_{\text{BS}} ( \hat x, \hat y)   \Ex \left (\int_0^T (1 - \rho_t^2) (V_t - \Ex(V_t)) \dd t \right )^2 \\ 
&+\partial_{xy} \text{Put}_{\text{BS}} ( \hat x, \hat y) S_0 \Ex \left \{  ( \xi_T - 1) \left (\int_0^T (1 - \rho_t^2) (V_t - \Ex(V_t)) \dd t \right ) \right \},
\end{align*}
where
\begin{align*}
\Ex(\xi_T - 1 )^2 &\approx \left (\exp \Bigg \{v_0 \omega_T^{(- (\kappa - 2 \lambda \rho), \rho^2 )} + \omega_T^{(-(\kappa - 2\lambda \rho), \rho^2), (\kappa - 2\lambda \rho, \kappa \theta)} \Bigg \} \right ) \\
&\cdot \Bigg \{1+v_0 \omega_T^{(-(\kappa - 2 \lambda \rho), \rho^2), (-(\kappa - 2\lambda \rho), \rho^2), (\kappa - 2 \lambda \rho, \lambda^2 )} \\ &+ \omega_T ^{( - (\kappa - 2\lambda \rho ) , \rho^2 ), (-(\kappa - 2 \lambda \rho), \rho^2), (\kappa - 2 \lambda \rho, \lambda^2), (\kappa - 2 \lambda \rho , \kappa \theta)} \Bigg \} - 1. 
\end{align*}
\begin{align*}
\Ex \left (\int_0^T (1 - \rho_t^2) (V_t - \Ex(V_t)) \dd t \right )^2 &= 2 v_0 \omega_T^{(- \kappa, 1 - \rho^2), (-\kappa, 1 - \rho^2), ( \kappa, \lambda^2)} \\&+ 2 \omega_T^{(-\kappa, 1 - \rho^2), (-\kappa, 1 - \rho^2), ( \kappa, \lambda^2), ( \kappa, \kappa \theta) }, \\
\Ex \left \{  ( \xi_T - 1) \left (\int_0^T (1 - \rho_t^2) (V_t - \Ex(V_t)) \dd t \right ) \right \} &=  v_0 \left ( \omega_T^{(- ( \kappa - \lambda \rho ), 1 - \rho^2)} - \omega_T^{(- \kappa , 1 - \rho^2)}\right ) \\&+ \omega_T^{(-(\kappa - \lambda \rho), 1- \rho^2), (\kappa - \lambda \rho, \kappa \theta)}  \\&-  \omega_T^{(-\kappa, 1- \rho^2), (\kappa, \kappa \theta)}. 
\end{align*}
Furthermore $\hat x = S_0$ and $\hat y = v_0 \omega_T^{(- \kappa, 1 - \rho^2)} + \omega_T^{(-\kappa, 1- \rho^2), ( \kappa, \kappa \theta)}.$
\end{proposition}

\begin{proposition}[Second-order GARCH explicit put option price]
\label{prop:garchCFA}
The second-order put option price in the GARCH diffusion model can be written as 
\begin{align*}
\text{Put}^{(2)}_{\text{GARCH}} &=  \text{Put}_{\text{BS}} ( \hat x, \hat y) + \partial_{yy} \text{Put}_{\text{BS}} ( \hat x, \hat y) \int_0^T\left ( \int_0^t \Cov(V_s, V_t) \dd s \right ) \dd t,
\end{align*}
where
\begin{align*}
  \int_0^T\left ( \int_0^t \Cov(V_s, V_t) \dd s \right ) \dd t &= \Big ( v_0^2 \omega_T^{(-\kappa, 1),  (-\kappa, 1) ,(\lambda^2, \lambda^2)}  + 2v_0 		\omega_T^{(-\kappa, 1), (-\kappa, 1),  (\lambda^2, \lambda^2), (-(\lambda^2 - \kappa ), \kappa \theta)}  \\
	&+ 2 \omega_T^{(-\kappa, 1),(-\kappa,1), (\lambda^2, \lambda^2 ), (-(\lambda^2 - \kappa),\kappa \theta ), (\kappa,\kappa \theta ) } \Big ).
\end{align*}
Furthermore $\hat x = S_0$ and 
\begin{align*}
	\hat y = \int_0^T \Ex(V_t) \dd t = v_0 \omega_T^{(-\kappa, 1 )} +  \omega_T^{(-\kappa, 1), (\kappa, \kappa \theta)}.
\end{align*}
\end{proposition}
\begin{remark}[Fast calibration]
 Since the integral operators obey a recursion, we can exploit dynamic programming to greatly speed up the calibration procedure.
To implement our fast calibration scheme, one executes the following algorithm. Let $\mu_t \equiv \mu = (\mu^{(1)}, \mu^{(2)}, \dots, \mu^{(n)})$ be an arbitrary set of parameters and denote by $\omega_t$ an arbitrary integral operator.
\begin{itemize}
\item Calibrate $\mu$ over $[0,T_1)$ to obtain $\mu_0$. This involves computing $\omega_{T_1}$.
\item Calibrate $\mu$ over $[T_1, T_2)$ to obtain $\mu_1$. This involves computing $\omega_{T_2}$ which is in terms of $\omega_{T_1}$, the latter already being computed in the previous step.
\item Repeat until time $T_N$.
\end{itemize}
\end{remark}

\section{Numerical tests and sensitivity analysis}

\newcommand{\wo}{\textcolor{white}{0}}	
\newcommand\red[1]{\color{red}#1}         

\noindent
\label{sec:numerical2}
In this section, we will numerically investigate the accuracy of our closed-form approximation formula in the Heston and GARCH diffusion models (\Cref{prop:hestonCFA} and \Cref{prop:garchCFA} respectively) by considering the sensitivity of our approximation formula whilst varying one parameter at a time, with the others all being fixed. Namely, for an arbitrary set of piecewise-constant parameters $(\kappa_t, \theta_t, \lambda_t , \rho_t) \equiv (\kappa, \theta, \lambda, \rho) =:\mu $, we vary only one of $\kappa$, $\theta$, $\lambda$, $\rho$ at a time and keep the rest fixed. Then, we will compute the difference (signed error) in implied volatilities via our approximation formula as well as the Monte-Carlo for maturity times $T \in \{1/12, 3/12, 6/12, 1\} \equiv \{1\text{M}, 3\text{M}, 6\text{M}, 1\text{Y}\}$ and strikes corresponding to Put 10, 25, and ATM deltas. Thus, the signed error of the implied volatility for a given parameter set $\mu$, maturity $T$, and strike $K$ is
\begin{align*}
	\text{Error}(\mu, T, K) = \sigma_{\text{IM-Approx}}(\mu, T, K)  - \sigma_{\text{IM-Monte}}(\mu, T, K).
\end{align*}

The Monte-Carlo simulation is implemented by taking advantage of a variation of the mixing solution methodology from \Cref{appen:mixing}. Namely, consider the log-spot $X_t = \ln S_t$ and log-strike $k = \ln K$. Let $\sigma$ be the variance process. Then
\begin{align*}
	\text{Put} &= e^{-\int_0^T r_t^d \dd t } \Ex  ( e^k- e^{X_T})_+  \\
			&= \Ex \left (  e^{-\int_0^T r_t^d \dd t } \Ex \big  [ (e^k - e^{X_T})_+ | \FF_T^B \big ] \right ) \\
			&= \Ex \left [ P_{\text{BS}} \left ( x_0 - \int_0^T \frac{1}{2} \rho_t ^2 \sigma_t \dd t + \int_0^T \rho_t \sqrt{\sigma_t}  \dd B_t , \int_0^T \sigma_t ( 1 - \rho_t^2) \dd t \right ) \right ],
\end{align*}
where 
\begin{align}
\begin{split}
	P_{\text{BS}}(x, y) &:= e^k e^{-\int_0^T r_t^d \dd t }  \NN ( -d^{\ln}_-) -   e^x  e^{-\int_0^T r_t^f \dd t } \NN (-d^{\ln}_+ ),  \\
	d^{\ln}_{\pm}:= d^{\ln}_{\pm}(x,y) &:= \frac{ x - k + \int_0^T ( r_t^d - r_t^f ) \dd t}{\sqrt{y}} \pm \frac{1}{2} \sqrt{y}.
\end{split}
\end{align}
We will not provide a proof of this result, as it can be derived by adapting the proof of \Cref{thm:mixing} to the case of modelling the log-spot in a straightforward manner. Using this relationship for Monte-Carlo simulation, there is no need to simulate $S$, one need only simulate $\sigma$. This reduces the runtime as well as standard error of the procedure.

We will employ the usual Euler-Maruyama method to simulate the variance process. For all our simulations, we use 2,000,000 Monte Carlo paths, and 24 time steps per day, where a year is comprised of 252 trading days. This is to reduce the Monte-Carlo and discretisation errors sufficiently well.

\begin{remark}
The code utilised to obtain the numerical results in this section is available at GitHub \citep{das2021}. In particular, what is provided is: 
\begin{itemize}
\item A routine which computes our closed-form approximation of put option prices for both the Heston model and GARCH diffusion model (the GARCH diffusion model with $\rho = 0$ a.e.) with piecewise-constant parameter inputs.
\item A routine which implements the Monte-Carlo simulation via the mixing solution methodology for the pricing of put option prices in the Heston model and GARCH diffusion model with piecewise-constant parameter inputs.
\item A routine which compares the accuracy and runtimes of the aforementioned methods.
\end{itemize}
\end{remark}

\subsection{Heston sensitivity analysis}

\begin{remark}
A piecewise-constant parameter which is piecewise-constant over $n$ intervals will be called an $n$-piece parameter or a parameter with $n$ pieces.
\end{remark}

\label{hestonsense} We start from a ``safe'' parameter set, which are constant parameters calibrated by Bloomberg USD/JPY FX option price data \citep{bloombergdata}. The safe parameter set is 

\begin{center}
\begin{tabular}{llll} 
\toprule
$S_0$ & \ $v_0$ & $r^d$ & $r^f$ \\
\midrule
$100$ & \ $0.36\%$ &  0.02 & 0 \\
 \bottomrule \\
\end{tabular}
\end{center}
with


\begin{center}
\begin{tabular}{l c cccc}
\toprule

 $T$	 && 		$\kappa$ & $\theta$ & $\lambda$ & $\rho$   \\
\midrule
1M   && 		$5.00$   & 1.90\%   & 0.414  & -0.391  \\
  
3M   && 		$5.00$  	& 1.10\%   & 0.414  & -0.391 \\

6M   &&	 	$5.00$   & 0.90\%   & 0.414  & -0.391  \\

1Y   && 		$5.00$   & 0.90\%   & 0.414  & -0.391  \\
\bottomrule
\end{tabular}
\end{center}


However, we would like to consider piecewise-constant parameters, as our closed-form approximation method is designed in order to take advantage of piecewise-constant parameter inputs. To do so, we will make our parameters piecewise-constant over three time intervals, the length of the first, second and third time interval having proportions $1/4$, $1/4$, $1/2$ of the maturity time $T$ respectively. For example, if $T = 1/12$ then a 3-piece parameter is piecewise-constant on the time intervals $[0, 1/48), [1/48, 2/48), [2/48, 4/48)$. To choose the ``safe'' values of the 3-piece parameters, we will simply (albeit somewhat artificially) perturb the value of each safe parameter over each time interval, yielding the following table of ``safe'' 3-piece parameter sets:

\begin{center}
\begin{tabular}{lcc c cccc cc}
\toprule
$T$	 & Piece & Proportion	&& 		$\kappa$ & $\theta$ & $\lambda$ & $\rho$ & 	$r^d$ & $r^f$   \\

\midrule
1M   		& 1 & 1/4 		&&	 	4.80   & 1.70\%  & 0.394 & -0.371 & 			1\% & 0 \\
   		& 2 & 1/4 	   	&&	 	5.20   & 2.10\%  & 0.434 & -0.411 &			3\% & 0 \\
   		& 3 & 1/2 		&&	 	5.00   & 1.90\%  & 0.414 & -0.391 & 			2\% & 0 \\

\midrule
3M   		& 1 & 1/4  	&&	 	4.80   & 0.90\%  & 0.394 & -0.371 &			1\% & 0   \\
   		& 2 & 1/4	   	&&	 	5.20   & 1.30\%  & 0.434 & -0.411 & 			3\% & 0   \\
   		& 3 & 1/2   	&&	 	5.00   & 1.10\%  & 0.414 & -0.391 & 			2\% & 0  \\

\midrule
6M   		& 1 & 1/4   	&&	 	4.80   & 0.70\%  & 0.394 & -0.371 & 			1\% & 0	\\
   		& 2 & 1/4	   	&&	 	5.20   & 1.10\%  & 0.434 & -0.411 & 			3\% & 0 	 \\
   		& 3 & 1/2 		&&	 	5.00   & 0.90\%  & 0.414 & -0.391 & 			2\% & 0 	 \\
    
\midrule
1Y   		& 1 & 1/4  	&&	 	4.80   & 0.70\%  & 0.394 & -0.371 &  			1\% & 0 	\\
   		& 2 & 1/4	   	&&	 	5.20   & 1.10\%  & 0.434 & -0.411 & 			3\% & 0	 \\
   		& 3 & 1/2   	&&	 	5.00   & 0.90\%  & 0.414 & -0.391 & 			2\% & 0	 \\

\bottomrule
\end{tabular}
\end{center}

In our numerical analysis, we vary one of the 3-piece parameters $(\kappa, \theta, \lambda, \rho)$ with the rest fixed, and then compute implied volatilities via both the closed-form approximation formula as well as the Monte-Carlo method as described above. Specifically, we select a 3-piece parameter from $(\kappa, \theta, \lambda, \rho)$, and start at 40\% of its safe 3-piece parameter value, then increase the value of each piece in increments of 20\%, all the way up to 160\%, whilst keeping the other three 3-piece parameters fixed at their safe value. We then repeat this process with each of the other three 3-piece parameters. The relevant tables are \Cref{table:varyingkappaheston}, \Cref{table:varyingthetaheston}, \Cref{table:varyinglambdaheston}, and \Cref{table:varyingrhoheston}, for the analysis of $\kappa$, $\theta$, $\lambda$, and $\rho$ respectively. Note that the $\rho$ table values are reversed, as $\rho$ is negative, and thus 160\% of the safe 3-piece $\rho$ is approximately $(-0.594, -0.658, -0.626)$. This ensures that the parameter values are increasing for all tables.

\begin{remark}
The Feller condition is 
\begin{align*}
	2\kappa \theta > \lambda^2.
\end{align*}
\Crefrange{table:varyingkappaheston}{table:varyingrhoheston} use red text to indicate when the Feller condition is not satisfied. Note that in application, this condition is almost always violated. That is, parameters calibrated from market data almost always violate the Feller condition, see for example \citep{clark2011foreign,da2011riding,ribeiro2013approximation}. In fact, this is a reason why practitioners now favour non-affine models.
\end{remark}



\begin{table}[H]
\caption{Signed error of implied volatilities in basis points, computed in the Heston model with $\kappa$ varying from 40\% to 160\% of its ``safe'' 3-piece value.}
\label{table:varyingkappaheston}
\begin{center}
\begin{tabular}{ll lllllll} 
\toprule
	&	{$\kappa$} & \ 40\% \wo & \ 60\% \wo  & \ 80\% & \ 100\% & \ 120\% & \ 140\% & \ 160\% \\
\midrule

ATM 		& {1M}   &	\red{\ 0.77} \wo 	&\red{\ 1.08} \wo 	&\red{\ 1.23} \wo	&\ 1.32		&\  1.37		&\ 1.40 		&\ 1.41   \\

 		& {3M}  &\red{-26.35}		&\red{-15.43}		&\red{-9.00}		&\red{-4.96}	&\red{-2.30}	&\red{-0.45}	&\red{\ 0.87} \\
 	
 		& {6M}  &\red{-59.94}		&\red{-31.56}		&\red{-16.97}		&\red{-8.82}	&\red{-3.96}	&\red{-0.93}	&\red{\ 1.00} \\
 	
 		& {1Y}  &\red{-70.41}		&\red{-29.66}		&\red{-12.76}		&\red{-4.92}	&\red{-1.05}	&\red{\ 0.95}	&\red{\ 1.98} \\ 
 \midrule

 Put 25 	& {1M}  &\red{\ 12.81}		&\red{\ 11.01}		&\red{\ 9.65}		&\ 8.61		&\ 7.79 		&\ 7.13 		&\ 6.59  \\
 
		& {3M}  &\red{\ 14.05}		&\red{\ 12.89}		&\red{\ 12.08}		&\red{\ 11.43}	&\red{\ 10.89}	&\red{\ 10.41}	&\red{\ 9.98}  \\
	
 		& {6M}  &\red{\ 3.83}		&\red{\ 9.08}		&\red{\ 11.35}		&\red{\ 12.14}	&\red{\ 12.17}	&\red{\ 11.82}	&\red{\ 11.29}  \\
 	
 		& {1Y}  &\red{-7.47}			&\red{\ 6.88}		&\red{\ 11.14}		&\red{\ 11.86}	&\red{\ 11.34}	&\red{\ 10.42}	&\red{\ 9.44} \\
 \midrule

Put 10 	& {1M}  &\red{\ 34.21}		&\red{\ 27.41}		&\red{\ 22.52}		&\ 18.91		&\ 16.19 		&\ 14.08	 	&\ 12.40  \\

		& {3M}  &\red{\ 86.32}		&\red{\ 62.63}		&\red{\ 47.89}		&\red{\ 38.17}	&\red{\ 31.40}	&\red{\ 26.50}	&\red{\ 22.81} \\
	
		& {6M}  &\red{\ 115.94}		&\red{\ 78.44}		&\red{\ 57.53}		&\red{\ 44.60}	&\red{\ 35.91}	&\red{\ 29.68}	&\red{\ 25.04} \\
	
 		& {1Y}   &\red{\ 105.65}		&\red{\ 66.94}		&\red{\ 47.08}		&\red{\ 35.05}	&\red{\ 27.06}	&\red{ \ 21.45}	&\red{\ 17.38} \\
 \bottomrule
\end{tabular}
\end{center}
\end{table}

\label{sec:varyingthetaheston}
\wo

\begin{table}[H]
\caption{Signed error of implied volatilities in basis points, computed in the Heston model with $\theta$ varying from 40\% to 160\% of its ``safe'' 3-piece value.}
\label{table:varyingthetaheston}
\begin{center}
\begin{tabular}{ll lllllll} 
\toprule
	&	{$\theta$} & \ 40\% & \ 60\%  & \ 80\%  & \ 100\% & \ 120\% & \ 140\% & \ 160\% \\
\midrule

 ATM	 	& {1M}  &\red{-0.12} \wo		&\red{ \ 0.22} \wo	&\red{ \ 0.32} \wo	&\ 0.32		&\ 0.25		&\ 0.14		&\ 0.01	\\
 
		& {3M}  &\red{-20.59}		&\red{-12.98}		&\red{-8.66}		&\red{-6.05}	&\red{-4.41}	&\red{-3.35}	&\red{-2.65} \\
		
 		& {6M}  &\red{-32.70}		&\red{-20.25}		&\red{-13.51}		&\red{-9.46}	&\red{-6.90}	&\red{-5.19}	&\red{-4.02} \\
 		
 		& {1Y}   &\red{-21.50}		&\red{-12.02}		&\red{-7.09}		&\red{-4.24}	&\red{-2.45}	&\red{-1.28}	&\red{-0.47} \\
 \midrule

Put 25 	& {1M}  &\red{\ 14.04}		&\red{\ 11.50}		&\red{\ 9.69}		&\ 8.35		&\ 7.31		&\ 6.47		&\ 5.77 \\

 		& {3M}  &\red{\ 24.69}		&\red{\ 17.26}		&\red{\ 13.10}		&\red{\ 10.49}	&\red{\ 8.68}	&\red{\ 7.33}	&\red{\ 6.26} \\
 		
 		& {6M}  &\red{\ 31.34}		&\red{\ 20.93}		&\red{\ 15.91}		&\red{\ 13.07}	&\red{\ 11.24}	&\red{\ 9.94}	&\red{\ 8.93} \\
 		
		& {1Y}   &\red{\ 28.59}		&\red{\ 19.26}		&\red{\ 15.15}		&\red{\ 12.80}	&\red{\ 11.30}	&\red{\ 10.23}	&\red{\ 9.41} \\
\midrule

Put 10	& {1M}  &\red{\ 33.55}		&\red{\ 26.93}		&\red{\ 22.07}		&\ 18.45		&\ 15.69		&\ 13.53		&\ 11.79 \\

 		& {3M}  &\red{\ 84.65}		&\red{\ 63.16}		&\red{\ 48.78}		&\red{\ 38.90}	&\red{\ 31.87}	&\red{\ 26.69}	&\red{\ 22.76} \\
 		
 		& {6M}  &\red{\ 103.04}		&\red{\ 74.71}		&\red{\ 56.46}		&\red{\ 44.40}	&\red{\ 36.11}	&\red{\ 30.16}	&\red{\ 25.72} \\
 		
 		& {1Y}   &\red{\ 81.69}		&\red{\ 58.22}		&\red{\ 43.82}		&\red{\ 34.74}	&\red{\ 28.72}	&\red{\ 24.48}	&\red{\ 21.36} \\
 \bottomrule
\end{tabular}
\end{center}
\end{table}

\label{sec:varyinglambdaheston}
\wo

\begin{table}[H]
\caption{Signed error of implied volatilities in basis points, computed in the Heston model with $\lambda$ varying from 40\% to 160\% of its ``safe'' 3-piece value.}
\label{table:varyinglambdaheston}
\begin{center}
\begin{tabular}{ll lllllll} 
\toprule
	&	{$\lambda$} & \ 40\% & \ 60\%  & \ 80\%  & \ 100\% & \ 120\% & \ 140\% & \ 160\% \\
\midrule

 ATM		& {1M}  &\ 0.04 \wo  		&\ 0.59 \wo	 &\ 0.78 \wo 		&\ 0.15 		&\red{-2.00}	&\red{-6.32} 	&\red{-13.33}	\\
 
		& {3M}  &-0.07  		&\ 0.15		& \red{-1.68}		&\red{-7.00}	&\red{-17.03}	&\red{-32.53}	&\red{-54.16} \\
		
 		& {6M}  &\ 0.98			&\ 0.74		&\red{-2.26}		&\red{-9.55}	&\red{-22.26}	&\red{-41.14}	&\red{-66.66} \\
 		
 		& {1Y}  &\ 2.37			&\ 2.28		&\red{\ 0.41}		&\red{-4.38}	&\red{-12.88}	&\red{-25.70}	&\red{-43.27} \\
 \midrule

 Put 25 	& {1M}  &\ 1.31 		&\ 3.36 		&\ 5.61  			&\ 7.94  		&\red{\ 10.00}	&\red{\ 11.34}	&\red{\ 11.55} \\
 
 		& {3M}  &\ 2.50			&\ 5.92		&\red{\ 9.21}		&\red{\ 11.35}	&\red{\ 11.31}	&\red{\ 8.24}	&\red{\ 1.66} \\
 		
 		& {6M}  &\ 4.09			&\ 7.93		&\red{\ 11.34}		&\red{\ 13.02}	&\red{\ 11.87}	&\red{\ 7.04}	&\red{-2.11} \\
 		
		& {1Y}  &\ 5.10			&\ 8.18		&\red{\ 11.05}		&\red{\ 12.73}	&\red{\ 12.43}	&\red{\ 9.50}	&\red{\ 3.39} \\
\midrule

 Put 10 	& {1M}  &\ 0.47			&\ 4.78  		&\ 10.97 			&\ 19.32 		&\red{\ 29.81} 	&\red{\ 42.31} 	&\red{\ 56.54} \\

		& {3M}  &\ 2.10			&\ 10.46		&\red{\ 22.57}		&\red{\ 38.09}	&\red{\ 56.38}	&\red{\ 76.63}	&\red{\ 98.37} \\
		
		& {6M}  &\ 4.70			&\ 14.43		&\red{\ 28.08}		&\red{\ 44.95}	&\red{\ 64.13}	&\red{\ 84.85}	&\red{\ 106.37} \\
		
 		& {1Y}  &\ 5.13			&\ 12.44		&\red{\ 22.53}		&\red{\ 34.81}	&\red{\ 48.54}	&\red{\ 63.19}	&\red{\ 78.41} \\
 \bottomrule
\end{tabular}
\end{center}
\end{table}

\label{sec:varyingrhoheston}
\wo

\begin{table}[H]
\caption{Signed error of implied volatilities in basis points, computed in the Heston model with $\rho$ varying from 160\% to 40\% of its ``safe'' 3-piece value. }
\label{table:varyingrhoheston}
\begin{center}
\begin{tabular}{ll lllllll} 
\toprule
	&	{$\rho$} & \ 160\% & \ 140\%  & \ 120\%  & \ 100\% & \ 80\% & \ 60\% & \ 40\% \\
\midrule

ATM		& {1M}  &\ 29.14		&\ 14.93 		&\ 6.37 		&\ 1.16		&-2.02 \wo	&-3.96 \wo	&-5.10 \wo \\

 		& {3M}  &\red{\ 27.01}	&\red{\ 10.49}	&\red{-0.06}	&\red{-6.97}	&\red{-11.56}	&\red{-14.60}	&\red{-16.52} \\
 		
 		& {6M}  &\red{\ 24.82}	&\red{\ 8.30}	&\red{-2.34}	&\red{-9.41}	&\red{-14.19}	&\red{-17.41}	&\red{-19.49} \\
 		
 		& {1Y}  &\red{\ 25.35}	&\red{\ 10.58}	&\red{\ 1.42}	&\red{-4.47}	&\red{-8.34}	&\red{-10.91}	&\red{-12.54} \\
\midrule

 Put 25	& {1M}  &\ 44.10	    	&\ 28.02		&\ 16.46		&\ 8.28 		&\ 2.52 		&-1.49		&-4.19 \\
 
	 	& {3M}  &\red{\ 63.28}	&\red{\ 40.72}	&\red{\ 23.85}	&\red{\ 11.21}	&\red{\ 1.67}	&\red{-5.51}	&\red{-10.85} \\
	 	
 		& {6M}  &\red{\ 68.04}	&\red{\ 44.03}	&\red{\ 26.03}	&\red{\ 12.39}	&\red{\ 1.96}	&\red{-6.03}	&\red{-12.12} \\
 		
		& {1Y}  &\red{\ 60.63}	&\red{\ 39.51}	&\red{\ 24.27}	&\red{\ 13.08}	&\red{\ 4.73}	&\red{-1.55}	&\red{-6.27} \\
\midrule

 Put 10 	& {1M}  &\ 28.21		&\ 28.60		&\ 24.52		&\ 18.96 		&\ 13.15		&\ 7.68		&\ 2.81 \\
 
 		& {3M}  &\red{\ 64.31}	&\red{\ 58.82}	&\red{\ 49.05}	&\red{\ 38.07}	&\red{\ 27.19}	&\red{\ 17.01}	&\red{\ 7.78} \\
 		
 		& {6M}  &\red{\ 76.48}	&\red{\ 68.62}	&\red{\ 56.94}	&\red{\ 44.37}	&\red{\ 32.11}	&\red{\ 20.69}	&\red{\ 10.33} \\
 		
 		& {1Y}  &\red{\ 64.00}      &\red{\ 55.57}	&\red{\ 44.91}	&\red{\ 34.16}	&\red{\ 24.11}	&\red{\ 15.04}	&\red{\ 7.05} \\
 \bottomrule
\end{tabular}
\end{center}
\end{table}

The sensitivity analysis is consistent with what we expect. For example, for large maturity $T$, it can be easily shown that the component-wise variance of the difference in the expansion and evaluation point increases. Moreover, for large vol-of-vol $\lambda$ or large correlation $|\rho|$, one would expect the variance of the difference in the expansion and evaluation point to increase. Thus, when these parameters are large, we expect our closed-form approximation formula to be less accurate. Indeed, the numerical results confirm this behaviour. However, there seems to be a peculiarity when comparing the numerical results pertaining to the maturities $T = 1/2$ and $T = 1$. Specifically, it seems that the approximation formula for $T = 1$ is often more accurate than for $T = 1/2$. By studying the error bound from \Cref{thm:errbound}, one explanation for this behaviour is that the terms that are exponentially decaying in $T$ from the function $\zeta(T)$ (recall this is the function $M(T,K)$ for fixed $K$, see \Cref{parblowu}) may start to contribute more around this region of $T$. Of course, this is just conjecture, as such a claim is dependent on all the other terms in the error bound \Cref{thm:errbound} (most notably, the behaviour of the moments of the underlying variance process). However, it seems like a reasonable claim, as the other terms in the error bound are all increasing functions of $T$. A precise quantification of this claim would require further investigation of the moments of the underlying variance process. In terms of application, for realistic parameter values (that is, values around the ``safe'' 3-piece parameter sets) we see that the magnitude of error is around 10-50bps, which would be deemed reasonable for use in industry.

\subsection{GARCH diffusion model sensitivity analysis}
We start from the same ``safe'' 3-piece parameter set from \Cref{hestonsense}, albeit with $\rho = 0$ on each piece. In our analysis, we vary one of the 3-piece parameters $(\kappa, \theta, \lambda)$ at a time and keep the rest fixed. We employ the same strategy as in the Heston framework, we start at 40\% of the safe 3-piece parameter value, and increase it in increments of 20\%, all the way up to 160\%, whilst the rest are fixed at their safe value. The relevant tables are \Cref{table:varyingkappaGARCH}, \Cref{table:varyingthetaGARCH}, and \Cref{table:varyinglambdaGARCH} for the analysis of $\kappa$, $\theta$, and $\rho$ respectively.



\begin{table}[H]
\caption{Signed error of implied volatilities in basis points, computed in the GARCH diffusion model with $\kappa$ varying from 40\% to 160\% of its ``safe'' 3-piece value.}
\label{table:varyingkappaGARCH}
\begin{center}
\begin{tabular}{ll lllllll} 
\toprule
	&	{$\kappa$} & \ 40\% & \ 60\%  & \ 80\%  & \ 100\% & \ 120\% & \ 140\% & \ 160\% \\
\midrule

 ATM		& {1M} 	&-1.33 \wo	&-1.70 \wo	&-2.06 \wo	&-2.35	&-2.58	&-2.77	&-2.93 \\
 		& {3M}  	&-2.69		&-3.10		&-3.30		&-3.31	&-3.22	&-3.06	&-2.87 \\
 		& {6M}  	&-3.32		&-3.16		&-2.74		&-2.25	&-1.78	&-1.36	&-1.02 \\
 		& {1Y}  	&-2.79		&-1.80		&-1.04		&-0.54	&-0.25	&-0.08	&\ 0.00 \\
 \midrule

 Put 25 	& {1M}  	&-1.36		&-1.73		&-2.09		&-2.38	& -2.61	&-2.81	&-2.96 \\
 		& {3M}  	&-2.71		&-3.12		&-3.31		&-3.32	& -3.23	&-3.07	&-2.87 \\
		& {6M}  	&-3.30		&-3.13		&-2.72		&-2.23	&-1.75	&-1.34	&-1.00 \\
		& {1Y}   	&-2.77		&-1.77		&-1.02		&-0.52	&-0.23	&-0.07	&\ 0.02 \\
 \midrule

 Put 10	& {1M}  	&-1.37		&-1.74		&-2.10		&-2.39	&-2.62	&-2.81	&-2.97 \\
 		& {3M}  	&-2.75		&-3.16		&-3.35		&-3.37	&-3.27	&-3.11	&-2.91 \\
 		& {6M}  	&-3.33		&-3.17		&-2.75		&-2.25	&-1.78	&-1.36	&-1.01 \\
 		& {1Y}   	&-2.76		&-1.76		&-1.00		&-0.51	&-0.22	&-0.06	&\ 0.02 \\
 \bottomrule
\end{tabular}
\end{center}
\end{table}

\label{sec:varyingthetaGARCH}
\wo

\begin{table}[H]
\caption{Signed error of implied volatilities in basis points, computed in the GARCH diffusion model with $\theta$ varying from 40\% to 160\% of its ``safe'' 3-piece value.}
\label{table:varyingthetaGARCH}
\begin{center}
\begin{tabular}{ll lllllll} 
\toprule
	&	{$\theta$} & \ 40\% & \ 60\%  & \ 80\%  & \ 100\% & \ 120\% & \ 140\% & \ 160\% \\
\midrule

 ATM		& {1M}  	&-1.09 \wo	&-1.51 \wo	&-1.95 \wo	&-2.35	&-2.71	&-3.04	&-3.35 \\
 		& {3M}  	&-1.43		&-2.14		&-2.78		&-3.34	&-3.85	&-4.31	&-4.73 \\
 		& {6M}  	&-0.67		&-1.29		&-1.82		&-2.27	&-2.67	&-3.02	&-3.35 \\
 		& {1Y}       &\ 0.47		&\ 0.08		&-0.22		&-0.46	&-0.66	&-0.84	&-1.00 \\
 \midrule

 Put 25 	& {1M}  	&-1.11		&-1.54		&-1.98		&-2.38	&-2.74	&-3.07	&-3.38 \\
 		& {3M}  	&-1.42		&-2.13		&-2.77		&-3.34	&-3.84	&-4.30	&-4.73 \\
 		& {6M}  	&-0.65		&-1.26		&-1.79		&-2.24	&-2.63	&-2.99	&-3.31 \\
		& {1Y}      &\ 0.43		&\ 0.04		&-0.26		&-0.51	&-0.72	&-0.90	&-1.07 \\
\midrule

Put 10 	& {1M} 	&-1.13		&-1.56		&-2.00		&-2.39	&-2.75	&-3.09	&-3.40 \\
		& {3M}  	&-1.43		&-2.14		&-2.78		&-3.35	&-3.85	&-4.31	&-4.74 \\
 		& {6M}  	&-0.69		&-1.30		&-1.82		&-2.27	&-2.66	&-3.02	&-3.34 \\
 		& {1Y}  	&\ 0.40		&\ 0.02		&-0.29		&-0.53	&-0.75	&-0.93	&-1.10 \\
 \bottomrule
\end{tabular}
\end{center}
\end{table}

\label{sec:varyinglambdaGARCH}
\wo

\begin{table}[H]
\caption{Signed error of implied volatilities in basis points, computed in the GARCH diffusion model with $\lambda$ varying from 40\% to 160\% of its ``safe'' 3-piece value.}
\label{table:varyinglambdaGARCH}
\begin{center}
\begin{tabular}{ll lllllll} 
\toprule
	&	{$\lambda$} & \ 40\% & \ 60\%  & \ 80\%  & \ 100\% & \ 120\% & \ 140\% & \ 160\% \\
\midrule

 ATM 	& {1M}  	&-2.51 \wo	&-2.36 \wo	&-2.36 \wo	&-2.36	&-2.36	&-2.36	&-2.35 \\
 		& {3M}  	&-3.43		&-3.33		&-3.32		&-3.32	&-3.31	&-3.31	&-3.31 \\
 		& {6M}  	&-2.33		&-2.26		&-2.26		&-2.26	&-2.26	&-2.26	&-2.27 \\
 		& {1Y}  	&-0.55		&-0.51		&-0.50		&-0.50	&-0.50	&-0.50	&-0.51 \\
 \midrule

 Put 25 	& {1M}  	&-2.51		&-2.36		&-2.36		&-2.35	&-2.35	&-2.35	&-2.34 \\
 		& {3M}  	&-3.42		&-3.31		&-3.31		&-3.30	&-3.29	&-3.29	&-3.28 \\
 		& {6M}  	&-2.32		&-2.26		&-2.25		&-2.25	&-2.25	&-2.25	&-2.25 \\
 		& {1Y}  	&-0.57		&-0.54		&-0.55		&-0.56	&-0.57	&-0.59	&-0.61 \\
\midrule

 Put 10 	& {1M}  	&-2.51		&-2.36		&-2.36		&-2.36	&-2.36	&-2.35	&-2.35 \\
 		& {3M}  	&-3.43		&-3.33		&-3.32		&-3.32	&-3.32	&-3.31	&-3.31 \\
 		& {6M}  	&-2.33		&-2.26		&-2.26		&-2.26	&-2.26	&-2.26	&-2.25 \\
 		& {1Y}  	&-0.55		&-0.50		&-0.50		&-0.49	&-0.49	&-0.48	&-0.48 \\
 \bottomrule
\end{tabular}
\end{center}
\end{table}

The error in implied volatility for the closed-form approximation formula in the GARCH diffusion model  behaves well, with most being less than 3bp in magnitude. In contrast to the Heston model analysis, this is most likely due to the fact that the correlation $\rho$ is assumed to be 0 always, and hence one would expect the variance of the difference in the expansion and evaluation point to be smaller. Otherwise, the approximation behaves as we expect, with errors being larger for large maturity $T$ and vol-of-vol $\lambda$, as the variance of the difference in the expansion and evaluation point will grow with these parameters. Again, there is an anomaly when comparing the numerical results pertaining to the maturities $T = 1/2$ and $T = 1$, as the latter seems to be more accurate. For similar reasons to the Heston model case, we attribute this behaviour to the function $\zeta(T)$ (see \Cref{parblowu} and \Cref{cor:errboundrho0}).

\subsection{Computational runtime analysis}
The use of our closed-form approximation formula yields approximations of put option prices essentially instantaneously. However, it is worth investigating the speed up/slow down when considering $n$-piece parameters for different $n$. We will consider pricing a put option with maturities 6M and 1Y, whilst varying the number of pieces the parameters possess in both the Heston and GARCH diffusion models va our closed-form approximation formulas. The values of the parameters do not affect the runtime, and thus we have not provided them in our numerical tests. Each runtime we display will be the average of 100 runtimes with identical parameters.

\begin{table}[H]
\caption{Runtimes for pricing a put option with maturities 6M, 1Y in the Heston model via our closed-form approximation method, where \#pieces of the parameters vary. The length of each time-interval corresponding to a piece is equal. Each runtime is the average of 100 runtimes (with identical parameters).}
\label{table:Hestonruntime}

\begin{center}
\begin{tabular}{cc c ll}
\toprule

& 	 \#Pieces && 		Run(ms) &  Slow down($\times$) \\
\midrule

6M 	& 1	&&	 	0.77   & 1.00  \\
 	& 2	&&	 	1.43   & 1.87  \\
 	& 3	&&	 	2.48   & 3.24  \\
 	& 4	&&	 	4.70   & 6.13  \\
	& 5	&&	 	7.87   & 10.27  \\
 	& 6	&&	 	11.05   & 14.43  \\
 	& 7	&&	 	16.98   & 22.18  \\
 	& 8	&&	 	22.21   & 29.00  \\
 	& 9	&&	 	31.53   & 41.17  \\
 	& 10	&&	 	40.73   & 53.19  \\
\midrule

1Y 	& 1	&&	 	0.80   & 1.00  \\
 	& 2	&&	 	1.71   & 2.13  \\
 	& 3	&&	 	2.65   & 3.30  \\
 	& 4	&&	 	4.60   & 5.74  \\
	& 5	&&	 	6.59   & 8.21  \\
 	& 6	&&	 	10.94   & 13.64  \\
 	& 7	&&	 	15.80   & 19.69  \\
 	& 8	&&	 	22.17   & 27.65  \\
 	& 9	&&	 	29.75   & 37.09  \\
 	& 10	&&	 	40.08   & 49.97  \\
\bottomrule
\end{tabular}
\end{center}
\end{table}

\begin{table}[H]
\caption{Runtimes for pricing a put option with maturities 6M, 1Y in the GARCH diffusion model via our closed-form approximation method, where the \#pieces of parameters vary. The length of each time-interval corresponding to a piece is equal. Each runtime is the average of 100 runtimes (with identical parameters).}
\label{table:GARCHruntime}

\begin{center}
\begin{tabular}{cc c ll}
\toprule

& 	 \#Pieces && 		Run(ms) &  Slow down($\times$) \\
\midrule

6M 	& 1	&&	 	0.69   & 1.00  \\
 	& 2	&&	 	1.37   & 1.98  \\
 	& 3	&&	 	2.81   & 4.05  \\
 	& 4	&&	 	4.56   & 6.58  \\
	& 5	&&	 	8.30   & 11.96  \\
 	& 6	&&	 	13.07   & 18.84  \\
 	& 7	&&	 	20.28   & 29.22  \\
 	& 8	&&	 	30.56   & 44.03  \\
 	& 9	&&	 	44.38   & 63.94  \\
 	& 10	&&	 	62.86   & 90.57  \\
\midrule

1Y 	& 1	&&	 	0.67   & 1.00  \\
 	& 2	&&	 	1.45   & 2.17  \\
 	& 3	&&	 	2.63   & 3.93  \\
 	& 4	&&	 	5.08   & 7.59  \\
	& 5	&&	 	8.75   & 13.07  \\
 	& 6	&&	 	12.85   & 19.19  \\
 	& 7	&&	 	20.79   & 31.04  \\
 	& 8	&&	 	29.50   & 44.05  \\
 	& 9	&&	 	44.24   & 66.07  \\
 	& 10	&&	 	65.43   & 97.71  \\
\bottomrule
\end{tabular}
\end{center}
\end{table}

The Heston model runtime analysis is given in \Cref{table:Hestonruntime}, whilst the GARCH diffusion model runtime analysis is given in \Cref{table:GARCHruntime}. Unsurprisingly, the maturity value does not really affect the runtime, as varying the maturity does not contribute to any computational burden in the formula. For 1-piece parameters (that is, constant parameters), the runtimes of the approximation formulas for both the Heston and GARCH diffusion models are almost the same (just less than a millisecond). However, for parameters with a large number of pieces, the runtime of the GARCH diffusion model formula is longer than for the Heston model. For example, for 10-piece parameters, running the Heston model closed-form approximation formula takes around 40 milliseconds, whereas the GARCH diffusion model one takes around 65 milliseconds. This may seem surprising, as one may expect that the GARCH diffusion model closed-form approximation formula would be less costly to run, since here we are in the simpler setting where $\rho = 0$ a.e.. However, the reason why the GARCH diffusion model's closed-form approximation formula is computationally more costly is because it possesses a 5-fold integral operator term, whereas the Heston model closed-form approximation formula has only 4-fold and below integral operators.

\section{Conclusion}
\noindent
\label{sec:conclusion}
We have derived closed-form approximation formulas for European put option prices in the context of stochastic volatility models with time-dependent parameters. Our method involves a second-order Taylor expansion of the mixing solution, followed by simplification of a number of expectations via the use of change of measure techniques. Such a method has been considered by Drimus with respect to the Heston model with constant parameters \citep{drimus2011closed}. We extend this method as we consider time-dependent parameters, which requires an additional approximation of one of the expectations. Furthermore, we obtain the expression for the error term induced by the expansion. We give general bounds on the induced error term in terms of higher order moments of the underlying variance process. Under the Heston framework, we find that all terms induced by the expansion are able to be expressed as iterated integrals. Furthermore, we attempt to generalise this method to the GARCH diffusion model. We show that obtaining an expression for the solution or moments of the resulting variance process after the change of measure is non-trivial to achieve. By assuming $\rho = 0$ a.e. in the GARCH diffusion model, we are able to address this problem, albeit with the added assumption of uncorrelated spot and volatility movements. Additionally, we show that the iterated integrals obey a convenient recursive property when parameters are piecewise-constant. By doing so, the approximation formulas become closed-form. In addition, we devise a fast calibration scheme which exploits the recursive property of our integral operators. Lastly, we perform a numerical error and sensitivity analysis to investigate the quality of our approximation in the Heston and GARCH diffusion models. We find that the error is well within the range appropriate for application purposes and behaves as we expect for certain parameter values, such as long maturity, large vol-of-vol and large correlation. 

The purely probabilistic mixing solution approach, which is the backbone of our expansion method, is very appealing due to its generality and ability to handle time-dependent parameters. Further research would be needed to combine it, with no correlation restriction, with the type of non-affine stochastic volatility models favoured by practitioners.

\renewcommand{\bibname}{References}
\bibliographystyle{plainurl}
\bibliography{References}
\appendix

\section{Mixing solution}
\noindent
\label{appen:mixing}
In this appendix, we give a derivation of the result referred to as the mixing solution by \citep{hull1987pricing}. This result is crucial for the expansion methodology implemented in \Cref{sec:prelims2}. Hull and White first established the expression for the case of independent Brownian motions $W$ and $B$. Later on, this was extended for the correlated Brownian motions case, see \citep{Willard97,romano1997contingent}. 

Under a domestic risk-neutral measure $\Qro$, suppose that the spot $S$ with variance $\sigma$ follows the dynamics
\begin{align*}
	\dd S_t &= S_t ( (r_t^d - r_t^f ) \dd t + \sqrt{\sigma_t }\dd W_t ), \quad S_0, \\
	\dd \sigma_t &= \alpha(t, \sigma_t) \dd t + \beta(t, \sigma_t) \dd B_t, \quad \sigma_0,  \\
	\dd \langle W, B \rangle_t &= \rho_t \dd t.
\end{align*}

\begin{theorem}[Mixing solution]
\label{thm:mixing}
Let $\text{Put} = e^{-\int_0^T r_t^d \dd t } \Ex  ( K- S_T)_+$. Then
\begin{align*}
	\text{Put} &= \Ex \left (  e^{-\int_0^T r_t^d \dd t } \Ex \big  [ (K - S_T)_+ | \FF_T^B \big ] \right ) \\
	&= \Ex \left ( \text{Put}_{\text{BS}} \left ( S_0 \xi_T, \int_0^T \sigma_t  ( 1 - \rho_t^2) \dd t \right ) \right ),
\end{align*}
where $\text{Put}_{\text{BS}}$ is given in \cref{PBS}.
\end{theorem}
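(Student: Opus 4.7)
The plan is to condition on the filtration $\FF_T^B$ generated by the Brownian motion $B$ driving the variance, and then recognise that what remains inside the conditional expectation is, up to a multiplicative $\FF_T^B$-measurable factor, a Black--Scholes put with a Gaussian log-spot of known (random) variance. First I would invoke the tower property to write
\begin{align*}
\text{Put} = e^{-\int_0^T r_t^d \dd t}\Ex(K-S_T)_+ = \Ex\Big\{e^{-\int_0^T r_t^d \dd t}\Ex\big[(K-S_T)_+ \,\big|\, \FF_T^B\big]\Big\},
\end{align*}
which gives the first equality of the theorem essentially for free.

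Next I would substitute the decomposition $W_t = \int_0^t \rho_u \dd B_u + \int_0^t \sqrt{1-\rho_u^2}\dd Z_u$ (with $Z \perp B$) into the explicit solution for $S_T$ already derived in the paper, namely
\begin{align*}
S_T = S_0\,\xi_T \exp\!\left\{\int_0^T(r_t^d-r_t^f)\dd t - \tfrac{1}{2}\int_0^T \sigma_t(1-\rho_t^2)\dd t + \int_0^T\sqrt{\sigma_t(1-\rho_t^2)}\dd Z_t\right\}.
\end{align*}
The key structural observation is that $\xi_T$ and the random coefficient $\int_0^T \sigma_t(1-\rho_t^2)\dd t$ are both $\FF_T^B$-measurable (since $\sigma$ is adapted to the filtration generated by $B$), while the stochastic integral $\int_0^T \sqrt{\sigma_t(1-\rho_t^2)}\dd Z_t$ is a Wiener integral against the independent Brownian motion $Z$.

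The crucial step, and the one I expect to require the most care, is to argue that conditional on $\FF_T^B$ the random variable $\int_0^T \sqrt{\sigma_t(1-\rho_t^2)}\dd Z_t$ is Gaussian with mean zero and variance $\int_0^T \sigma_t(1-\rho_t^2)\dd t$. One clean way is a ``freezing'' argument: by independence of $Z$ and $\FF_T^B$, one computes the conditional characteristic function $\Ex\!\left[\exp\!\big(i\lambda \int_0^T\sqrt{\sigma_t(1-\rho_t^2)}\dd Z_t\big)\,\big|\,\FF_T^B\right]$ and shows, via an approximation by simple $\FF^B$-measurable integrands and the independent-increments property of $Z$, that it equals $\exp\!\big(-\tfrac{\lambda^2}{2}\int_0^T \sigma_t(1-\rho_t^2)\dd t\big)$. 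Given this, conditional on $\FF_T^B$, $S_T$ has exactly the law of a geometric Brownian motion terminal value with initial spot $S_0\xi_T$ and integrated variance $\int_0^T \sigma_t(1-\rho_t^2)\dd t$.

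Finally I would invoke the classical Black--Scholes put formula (with time-dependent but deterministic rates $r^d,r^f$) to evaluate
\begin{align*}
e^{-\int_0^T r_t^d \dd t}\Ex\!\big[(K-S_T)_+\,\big|\,\FF_T^B\big] = \text{Put}_{\text{BS}}\!\left(S_0\xi_T,\int_0^T \sigma_t(1-\rho_t^2)\dd t\right),
\end{align*}
and take outer expectation to conclude. Integrability of the integrand under the outer expectation follows from the bound $(K-S_T)_+ \le K$ together with $\Ex\xi_T = 1$ (\Cref{lemma:martingalexi}), so Fubini/tower application is legitimate.
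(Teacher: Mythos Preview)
Your proposal is correct and follows essentially the same route as the paper: condition on $\FF_T^B$, use the decomposition $W_t=\int_0^t\rho_u\,\dd B_u+\int_0^t\sqrt{1-\rho_u^2}\,\dd Z_u$ together with the $\FF_T^B$-measurability of $\xi_T$ and $\int_0^T\sigma_t(1-\rho_t^2)\,\dd t$ to identify $S_T\mid\FF_T^B$ as log-normal, and then evaluate the inner conditional expectation as a Black--Scholes put. The only cosmetic difference is that the paper carries out the Black--Scholes computation explicitly rather than invoking it, while you are a bit more careful than the paper in justifying the conditional Gaussianity of the $\dd Z$-integral.
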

\begin{proof}
By writing the driving Brownian motion of the spot as $W_t = \int_0^t \rho_u  \dd B_u +  \int_0^t \sqrt{1 - \rho_u^2} \dd Z_u$, where $Z$ is a Brownian motion under $\Qro$ which is independent of $B$, this gives the explicit pathwise unique strong solution of $S$ as 
\begin{align*}
	S_T &= S_0 \xi_T \exp \left \{   \int_0^T (r_t^d - r_t^f ) \dd t - \frac{1}{2} \int_0^T \sigma_t (1 - \rho_t^2) \dd t + \int_0^T  \sqrt{ \sigma_t  (1 - \rho_t^2)} \dd Z_t \right \}, \\
	\xi_t &:= \exp \left \{ \int_0^t \rho_u \sqrt{ \sigma_u}  \dd B_u - \frac{1}{2} \int_0^t \rho_u^2  \sigma_u \dd u \right \}. 
\end{align*}
First, notice that both $ \sigma $ and $\xi$ are adapted to the filtration $(\FF_t^B)_{0 \leq t \leq T}$. Thus, it is evident that $S_T|\FF_T^B$ will have a log-normal distribution, namely
\begin{align*}
	S_T | \FF_T^B &\sim \mathcal{LN}\left ( \tilde \mu (T) , \tilde \sigma^2(T) \right ), \\
	\tilde \mu(T) &:=  \ln(S_0 \xi_T) +  \int_0^T (r_t^d - r_t^f ) \dd t - \frac{1}{2} \int_0^T  \sigma_t  (1 - \rho_t^2) \dd t,\\
	\tilde \sigma^2(T) &:=  \int_0^T  \sigma_t  (1 - \rho_t^2) \dd t.
\end{align*}
Hence, the calculation of $e^{-\int_0^T r_t^d \dd t } \Ex \big [(K - S_T)_+ | \FF_T^B \big ]$ will result in a Black-Scholes like formula. 
\begin{align*}
&e^{-\int_0^T r_t^d \dd t } \Ex \big [(K - S_T)_+ | \FF_T^B \big ] \\ &= K e^{ -\int_0^T r_t^d \dd t } \NN \left  (\frac{ \ln(K) - \tilde \mu(T) }{\tilde \sigma(T)}\right ) - e^{-\int_0^T r_t^d \dd t } e^{\tilde \mu(T) + \frac{1}{2} \tilde \sigma^2(T) } \NN \left  ( \frac{ \ln(K) - \tilde \mu(T)-  \tilde \sigma^2(T)}{\tilde \sigma(T)}\right ) \\
&= K e^{ -\int_0^T r_t^d \dd t }  \NN \left  ( \frac{ \ln(K) - \tilde \mu(T) - \frac{1}{2}   \tilde \sigma^2(T)}{\tilde \sigma(T)} + \frac{1}{2} \tilde \sigma(T)\right ) \\&- S_0 \xi_T e^{-\int_0^T r_t^f \dd t }\NN \left  (\frac{\ln(K) - \tilde \mu(T)  - \frac{1}{2} \tilde \sigma^2(T)}{\tilde \sigma(T)} - \frac{1}{2} \tilde \sigma(T)\right ) \\
&= K e^{ -\int_0^T r_t^d \dd t } \NN \left  ( \frac{\ln(K/S_0 \xi_T )  -\int_0^T (r_t^d - r_t^f) \dd t}{\tilde \sigma(T)} + \frac{1}{2} \tilde \sigma(T)\right ) \\&- S_0 \xi_T e^{-\int_0^T r_t^f \dd t }\NN \left  ( \frac{\ln(K/S_0 \xi_T )  -\int_0^T (r_t^d - r_t^f) \dd t}{\tilde \sigma(T)} - \frac{1}{2} \tilde \sigma(T)\right ).
\end{align*}
It is immediate that $ e^{-\int_0^T r_t^d \dd t } \Ex \big [(K- S_T)_+ | \FF_T^B \big ] = \text{Put}_{\text{BS}} \left ( S_0 \xi_T, \tilde \sigma^2(T) \right )$.
\end{proof}

\section{$\text{Put}_{\text{BS}}$ partial derivatives}
\noindent
\label{appen:greeks}
This appendix contains some partial derivatives for the Black-Scholes put option formula $\text{Put}_{\text{BS}}$, given in \cref{PBS}. One can think of these partial derivatives as being analogous to Black-Scholes Greeks. However, these are slightly different as our Black-Scholes formulas are parameterised with respect to integrated variance rather than volatility.

\subsection{First-order $\textnormal{Put}_{\textnormal{BS}}$}
\begin{align*}
\partial_{x} \text{Put}_{\text{BS}} &= e^{- \int_0^T r_u^f \dd u } \left (\NN(d_+) - 1  \right ), \\
\partial_{y} \text{Put}_{\text{BS}} &= \frac{x e^{- \int_0^T r_u^f \dd u } \phi(d_+) }{2 \sqrt{y}}.
\end{align*}

\subsection{Second-order $\textnormal{Put}_{\textnormal{BS}}$}
\begin{align*}
\partial_{xx} \text{Put}_{\text{BS}} &= \frac{e^{- \int_0^T r_u^f \dd u } \phi(d_+) }{x \sqrt{y}}, \\
\partial_{yy} \text{Put}_{\text{BS}} &= \frac{xe^{-\int_0^T r_u^f \dd u } \phi( d_+)}{4 y^{3/2}} (d_-d_+ - 1),  \\
\partial_{xy} \text{Put}_{\text{BS}} &= (-1)\frac{ e^{ - \int_0^T  r_u^f \dd u } \phi( d_+) d_-}{2 y}.
\end{align*}

\subsection{Third-order $\textnormal{Put}_{\textnormal{BS}}$}
\begin{align*}
\partial_{xxx} \text{Put}_{\text{BS}} &= (-1) \frac{e^{- \int_0^T r_u^f \dd u } \phi(d_+) }{x^2 y} (d_+ + \sqrt{y}), \\
\partial_{xxy} \text{Put}_{\text{BS}} &=   \frac{e^{- \int_0^T r_u^f \dd u } \phi(d_+) }{2 y} (d_- d_+ - 1), \\
\partial_{xyy} \text{Put}_{\text{BS}} &= (-1)\frac{ e^{ - \int_0^T  r_u^f \dd u } \phi( d_+) }{2 y^2} \left (  \frac{d_-^2 d_+}{2} - \frac{d_+ }{2}- d_- \right ), \\
\partial_{yyy} \text{Put}_{\text{BS}} &= \frac{ x e^{ - \int_0^T  r_u^f \dd u } \phi( d_+)}{8 y^{5/2}} \left ( (d_- d_+ - 1)^2 - (d_- + d_+)^2 + 2  \right ).
\end{align*}

\subsection{Fourth-order $\textnormal{Put}_{\textnormal{BS}}$} 
\begin{align*}
\partial_{xxxx} \text{Put}_{\text{BS}} &= \frac{e^{- \int_0^T r_u^f \dd u } \phi(d_+) }{x^3 y^{3/2}} (d_+^2 + 3d_+ \sqrt{y} + 2y +1), \\
\partial_{xxxy} \text{Put}_{\text{BS}} &=   \frac{e^{- \int_0^T r_u^f \dd u } \phi(d_+) }{2 x y^{3/2}} (d_- (1- d_+^2)), \\
\partial_{xxyy} \text{Put}_{\text{BS}} &= (-1)\frac{ e^{ - \int_0^T  r_u^f \dd u } \phi( d_+) }{2 x y^{5/2}} \left ( \frac{1}{2} (d_- + d_+)^2 + d_-d_+ \left(1 - \frac{d_- d_+}{2}\right)  - \frac{3}{2}\right ),\\
\partial_{xyyy} \text{Put}_{\text{BS}} &= \frac{ e^{ - \int_0^T  r_u^f \dd u } \phi( d_+)}{8 y^{3}} \Big ( (\sqrt{y} - d_+) \left [(d_- d_+ - 1)^2 - (d_- + d_+)^2 + 2 \right ] \\&+ 4 \left [d_+ (d_- -  d_+) - d_- - d_+ \right ] \Big ), \\
\partial_{yyyy} \text{Put}_{\text{BS}} &= \frac{ x e^{ - \int_0^T  r_u^f \dd u } \phi( d_+)}{8 y^{7/2}} \Bigg ( \frac{1}{2} (d_- d_+ - 1)^2(d_- d_+ - 5) - (d_-d_+ - 1)(d_- + d_+) \\&- \frac{1}{2}(d_- + d_+)^2 (d_-d_+ - 7) + (d_-d_+ - 1)   \Bigg ).
\end{align*}

\section{Greeks}
\noindent
\label{appen:Greeks}
Expressions for second-order approximations of put option Greeks can be obtained via simple partial differentiation of $\text{Put}^{(2)}$ (\cref{price2}).

The Put Delta approximation is obtained via partial differentiation of $\text{Put}^{(2)}$ with respect to the underlying $S_0$. 
\begin{align*}
\partial_{S_0} \text{Put}^{(2)} &=  \partial_{x} \text{Put}_{\text{BS}} ( \hat x, \hat y) \\ &+ \frac{1}{2}  \left [ 2 S_0 \partial_{xx} \text{Put}_{\text{BS}} ( \hat x, \hat y) + S_0^2 \partial_{xxx} \text{Put}_{\text{BS}} ( \hat x, \hat y)  \right ]\Ex (\xi _T - 1)^2 \\ &+ \frac{1}{2}\partial_{xyy} \text{Put}_{\text{BS}} ( \hat x, \hat y)   \Ex \left (\int_0^T (1 - \rho_t^2) (\sigma_t - \Ex(\sigma_t)) \dd t \right )^2 \\ 
&+ \left [\partial_{xy} \text{Put}_{\text{BS}} ( \hat x, \hat y) + S_0  \partial_{xxy} \text{Put}_{\text{BS}} ( \hat x, \hat y)  \right ] \Ex \left \{  ( \xi_T - 1) \left (\int_0^T (1 - \rho_t^2) (\sigma_t - \Ex(\sigma_t)) \dd t \right ) \right \}.
\end{align*}
The Put Gamma approximation is obtained via partial differentiation of the Put Delta approximation with respect to the underlying $S_0$. 
\begin{align*}
\partial_{S_0 S_0} \text{Put}^{(2)} &=  \partial_{xx} \text{Put}_{\text{BS}} ( \hat x, \hat y) \\ &+ \frac{1}{2}  \left [ 2 \partial_{xx} \text{Put}_{\text{BS}} ( \hat x, \hat y) +  2 S_0 \partial_{xxx} \text{Put}_{\text{BS}} ( \hat x, \hat y)  + S_0^2 \partial_{xxxx} \text{Put}_{\text{BS}} ( \hat x, \hat y) \right ]\Ex (\xi _T - 1)^2 \\ &+ \frac{1}{2}\partial_{xxyy} \text{Put}_{\text{BS}} ( \hat x, \hat y) \Ex \left (\int_0^T (1 - \rho_t^2) (\sigma_t - \Ex(\sigma_t)) \dd t \right )^2 \\ 
&+ \left [2\partial_{xxy} \text{Put}_{\text{BS}} ( \hat x, \hat y) + S_0  \partial_{xxxy} \text{Put}_{\text{BS}} ( \hat x, \hat y)  \right ] \\&\cdot \Ex \left \{  ( \xi_T - 1) \left (\int_0^T (1 - \rho_t^2) (\sigma_t - \Ex(\sigma_t)) \dd t \right ) \right \}.
\end{align*}
Notice that the above expectations are the same as those from \crefrange{mom1}{mom3}. These expectations are made explicit in \Cref{sec:mom}.

\section{Calculation of moments}
\noindent
\label{appen:moments}
In this appendix, we derive expressions for some of the moments, mixed moments and covariances of the CIR and IGa processes utilised in this article. The results for the CIR process are well known, however, we have not found a source for the IGa moments with time-dependent parameters in the literature.
\subsection{Moments of the CIR process}
\label{CIRappend}
Let $V$ be a CIR$( v_0; \kappa_t, \theta_t, \lambda_t)$. It satisfies the SDE 
\begin{align*}
\dd V_t = \kappa_t(\theta_t -  V_t) \dd t + \lambda_t  \sqrt{V_t} \dd B_t, \quad V_0 = v_0,
\end{align*}
where we assume $(\kappa_t)_{0 \leq t \leq T}, (\theta_t)_{0 \leq t \leq T}$ and $(\lambda_t)_{0 \leq t \leq T}$ are time-dependent, deterministic, strictly positive and bounded on $[0,T]$. 
For $s<t$, it can be integrated to obtain 
\begin{align}
 V_t =  V_s e^{- \int_s^t \kappa_z \dd z } + \int_s^t e^{- \int_u^t \kappa_z \dd z} \kappa_u \theta_u \dd u + \int_s^t e^{- \int_u^t \kappa_z \dd z } \lambda_u \sqrt{ V_u} \dd B_u. \label{CIRsol}
\end{align}
In particular, for $s=0$, 
\begin{align}
 V_t =  v_0 e^{- \int_0^t \kappa_z \dd z } + \int_0^t e^{-\int_u^t \kappa_z \dd z } \kappa_u \theta_u \dd u + \int_0^t e^{-\int_u^t \kappa_z \dd z } \lambda_u \sqrt{ V_u} \dd B_u. \label{CIRsol0}
\end{align}

\begin{proposition}
$V$ has the following moments: 
\begin{align*}
\Ex(V_t^n) &= e^{- \int_0^t n\kappa_z \dd z }  \left ( v_0^n + \int_0^t e^{ \int_0^u n \kappa_z \dd z }   \left (n \kappa_u \theta_u + \frac{1}{2} n(n-1) \lambda_u^2  \right ) \Ex(V_u^{n-1} ) \dd u \right ),\\
\Var( V_t) &= \int_0^t \lambda_u^2 e^{-2 \int_u^t \kappa_z \dd z} \left \{ v_0 e^{- \int_0^u \kappa_z \dd z } + \int_0^u e^{- \int_p^u \kappa_z \dd z }\kappa_p \theta_p \dd p \right \} \dd u, \\
\Cov( V_s,  V_t) &= e^{-\int_s^t \kappa_z \dd z } \int_0^s \lambda_u^2 e^{-2 \int_u^s \kappa_z \dd z} \left \{ v_0 e^{- \int_0^u \kappa_z \dd z } + \int_0^u e^{- \int_p^u \kappa_z \dd z }\kappa_p \theta_p \dd p \right \} \dd u, \\
\Ex(V_s^m V_t^n) &= e^{-\int_0^t n \kappa_z \dd z } \left ( \Ex(V_s^{m+n}) + \int_s^t e^{ \int_0^u n \kappa_z \dd z } \left (n \kappa_u \theta_u + \frac{1}{2} n(n-1) \lambda_u^2  \right ) \Ex(V_s^m V_u^{n-1} ) \dd u  \right ), \\
\Cov (V_s^m ,V_t^n) &= \Ex(V_s^m V_t^n) - \Ex(V_s^m) \Ex(V_t^n),
\end{align*}
all for $m, n \geq 1$ and $s<t$.
\end{proposition}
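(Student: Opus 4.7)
The plan is to apply Itô's formula to the power $V_t^n$ and reduce each moment calculation to a linear first-order ODE. Using $\dd \langle V \rangle_t = \lambda_t^2 V_t \, \dd t$, Itô gives
\[
\dd V_t^n = \bigl[n \kappa_t \theta_t V_t^{n-1} - n \kappa_t V_t^n + \tfrac{1}{2} n(n-1) \lambda_t^2 V_t^{n-1}\bigr] \dd t + n \lambda_t V_t^{n-1/2} \, \dd B_t,
\]
and taking expectation (granting for the moment that the Itô integral is a true martingale) yields
\[
\tfrac{\dd}{\dd t} \Ex(V_t^n) + n \kappa_t \Ex(V_t^n) = \bigl[n \kappa_t \theta_t + \tfrac{1}{2} n(n-1) \lambda_t^2\bigr] \Ex(V_t^{n-1}),
\]
with initial condition $\Ex(V_0^n) = v_0^n$. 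Multiplying by the integrating factor $e^{n \int_0^t \kappa_z \dd z}$ and integrating produces the first formula of the proposition, and induction on $n$ (base case $n = 0$ being trivial) turns it into an explicit closed form.

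For the variance, I would either specialise the recursion above to $n = 2$ and subtract $\Ex(V_t)^2$, or, more cleanly, subtract $\Ex(V_t)$ from the integrated representation (CIRsol0) to isolate the Itô integral and apply the isometry. Either route yields $\Var(V_t) = \int_0^t e^{-2 \int_u^t \kappa_z \dd z} \lambda_u^2 \Ex(V_u) \dd u$, which matches the stated formula once $\Ex(V_u)$ is expanded. For the covariance $\Cov(V_s, V_t)$ with $s < t$, I would use (CIRsol) to compute
\[
\Ex(V_t \mid \FF_s) = V_s e^{-\int_s^t \kappa_z \dd z} + \int_s^t e^{-\int_u^t \kappa_z \dd z} \kappa_u \theta_u \dd u,
\]
the remaining stochastic integral being zero-mean given $\FF_s$. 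Applying the tower property, the deterministic drift terms in $\Ex(V_s \Ex(V_t \mid \FF_s))$ and in $\Ex(V_s)\Ex(V_t)$ cancel, and the expression collapses to $e^{-\int_s^t \kappa_z \dd z} \Var(V_s)$, which is the stated form.

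For the mixed moment $\Ex(V_s^m V_t^n)$, the approach is identical but anchored at time $s$: for fixed $s$ and $t \geq s$, write the Itô decomposition of $V_t^n$ starting from $V_s^n$, multiply both sides by $V_s^m$ (which is $\FF_s$-measurable and hence factors through both the Lebesgue and the Itô integrals), and take expectation. This yields the linear ODE in $t$,
\[
\tfrac{\dd}{\dd t} \Ex(V_s^m V_t^n) + n \kappa_t \Ex(V_s^m V_t^n) = \bigl[n \kappa_t \theta_t + \tfrac{1}{2} n(n-1) \lambda_t^2\bigr] \Ex(V_s^m V_t^{n-1}),
\]
with initial value $\Ex(V_s^m V_s^n) = \Ex(V_s^{m+n})$ at $t = s$. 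Solving by integrating factor and peeling off $n$ inductively gives the stated formula, while $\Cov(V_s^m, V_t^n) = \Ex(V_s^m V_t^n) - \Ex(V_s^m) \Ex(V_t^n)$ is just the definition of covariance.

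The main technical obstacle is justifying $\Ex \int_0^t n \lambda_u V_u^{n-1/2} \dd B_u = 0$, which requires a priori moment control on $V_u^{2n-1}$. My plan is to localise with $\tau_k := \inf\{t \geq 0 : V_t \geq k\}$, apply Itô to the stopped process $V_{t \wedge \tau_k}^n$ to obtain bona fide $L^2$ martingales, and induct on $n$: assuming $\sup_{t \in [0,T]} \Ex(V_t^{n-1}) < \infty$, the stopped moments satisfy an integral inequality to which Grönwall's lemma applies uniformly in $k$, yielding a finite bound on $\sup_{t,k} \Ex(V_{t \wedge \tau_k}^n)$. Fatou's lemma then delivers $\sup_{t \in [0,T]} \Ex(V_t^n) < \infty$ and allows the localisation to be removed, at which point all the ODE derivations above become rigorous and the identical localisation argument carries over verbatim to the mixed-moment case.
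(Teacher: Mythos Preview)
Your approach is essentially the same as the paper's: It\^o's formula on $V_t^n$ followed by the integrating-factor ODE, and the integrated representation plus It\^o isometry for the variance and covariance. Your treatment is in fact more careful than the paper's sketch: you phrase the covariance step via $\Ex(V_t\mid\FF_s)$ and the tower property rather than the paper's loose claim that $V_s$ is ``independent'' of $\int_s^t e^{-\int_u^t\kappa_z\dd z}\lambda_u\sqrt{V_u}\,\dd B_u$ (it is not; what is actually used is that this integral has zero conditional mean given $\FF_s$), and you supply the localisation/Gr\"onwall argument that the paper omits entirely.
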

\begin{proof}
We give an outline for obtaining $\Var(V_t)$ and $\Cov(V_s, V_t)$. The other terms follow a similar methodology.
Notice that $\Var(V_t) = \Ex(V_t - \Ex(V_t))^2$. Then using \cref{CIRsol0} and $\Ex(V_t)$, 
\begin{align*}
\Var(V_t) = \Ex \left ( \int_0^t e^{-\int_u^t \kappa_z \dd z } \lambda_u \sqrt{V_u} \dd B_u \right )^2 = \int_0^t e^{-2 \int_u^t \kappa_z \dd z } \lambda^2_u \Ex(V_u) \dd u.
\end{align*}
Assume $s<t$. Using the representation of $V_t$ in terms of $V_s$ \cref{CIRsol}, we have 
\begin{align*}
\Cov(V_s, V_t) &= \Cov \left ( V_s,  V_s e^{- \int_s^t \kappa_z \dd z } + \int_s^t e^{- \int_u^t \kappa_z \dd z } \kappa_u \theta_u \dd u + \int_s^t e^{- \int_u^t \kappa_z \dd z } \lambda_u \sqrt{ V_u} \dd B_u \right )\\
&=  e^{-\int_s^t \kappa_u \dd u }\Var(V_s), 
\end{align*}
where we have used that $V_s$ is independent of the It\^o integral $ \int_s^t e^{- \int_u^t \kappa_z \dd z } \lambda_u \sqrt{ V_u} \dd B_u$. 
\end{proof}

\subsection{Moments of the IGa process}
\label{appen:IGa}
Let $V$ be an IGa$(v_0; \kappa_t, \theta_t, \lambda_t)$. It satisfies the SDE 
\begin{align*}
\dd V_t = \kappa_t(\theta_t -  V_t) \dd t + \lambda_t  V_t \dd B_t, \quad V_0 = v_0,
\end{align*}
where we assume $(\kappa_t)_{0 \leq t \leq T}, (\theta_t)_{0 \leq t \leq T}$ and $(\lambda_t)_{0 \leq t \leq T}$ are time-dependent, deterministic, strictly positive and bounded on $[0,T]$. Then for $s<t$, $V$ has the explicit strong solution
\begin{align*}
	V_t =  V_s \frac{Y_t}{Y_s} \left ( \frac{v_0 + \int_0^t \kappa_u \theta_u/Y_u \dd u}{v_0 +\int_0^s \kappa_u \theta_u / Y_u \dd u }\right).
\end{align*}
In particular, for $s=0$, 
\begin{align*}
	V_t &= Y_t  \left ( v_0 + \int_0^t \frac{\kappa_u \theta_u}{Y_u} \dd u \right ).
\end{align*}

\begin{proposition}
$V$ has the following moments:
\begin{align*}
\Ex( V_t^n) &= e^{\int_0^t \frac{n(n-1)}{2} \lambda_z^2 - n \kappa_z \dd z } \left (  v_0^n + n \int_0^t \kappa_u \theta_u e^{-\int_0^u \frac{n(n-1)}{2} \lambda_z^2 - n \kappa_z \dd z } \Ex(  V_u^{n-1}) \dd u \right ), \\
\Var( V_t) &= e^{-2\int_0^t \kappa_z \dd z } \int_0^t \lambda_u^2 \Ex( V_u^2) e^{2 \int_0^u \kappa_z \dd z } \dd u, \\
\Cov(  V_s,   V_t) &= \Var( V_s) e^{-\int_s^t \kappa_z \dd z }, \\ 
\Ex( V_s^m  V_t^n ) &= e^{ \int_0^t \frac{n(n-1)}{2} \lambda_z^2 - n \kappa_z \dd z } \Big ( \Ex( V_s^{m+n}) e^{-\int_0^s \frac{n(n-1)}{2} \lambda_z^2 - n \kappa_z \dd z} \\&+ n \int_s^t \kappa_u \theta_u e^{-\int_0^u \frac{n(n-1)}{2} \lambda_z^2 - n \kappa_z \dd z}  \Ex(  V_s^m   V_u^{n-1}) \dd u \Big ), \\
\Cov(  V_s^m,   V_t^n) &= \Ex(  V_s^m   V_t^n) - \Ex(  V_s^m) \Ex(  V_t^n),
\end{align*}
all for $m, n \geq 1$ and $s < t$.
\end{proposition}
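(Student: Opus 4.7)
The plan is to derive all five formulas from a single SDE for $V_t^n$ obtained via It\^o's formula, together with tower-property arguments for the mixed quantities. First I would apply It\^o's formula to $f(v)=v^n$ with $f'(v)=nv^{n-1}$, $f''(v)=n(n-1)v^{n-2}$, giving
\begin{align*}
\dd(V_t^n) = \Big[n\kappa_t\theta_t V_t^{n-1} + \Big(\tfrac{n(n-1)}{2}\lambda_t^2 - n\kappa_t\Big)V_t^n\Big]\dd t + n\lambda_t V_t^n \dd B_t.
\end{align*}
Taking expectations and writing $m_n(t):=\Ex(V_t^n)$ yields the linear first-order ODE
\begin{align*}
m_n'(t) = n\kappa_t\theta_t\, m_{n-1}(t) + \Big(\tfrac{n(n-1)}{2}\lambda_t^2 - n\kappa_t\Big)m_n(t), \qquad m_n(0)=v_0^n,
\end{align*}
whose solution via the integrating factor $\exp\bigl(-\int_0^t(\tfrac{n(n-1)}{2}\lambda_z^2-n\kappa_z)\dd z\bigr)$ is precisely the stated formula for $\Ex(V_t^n)$. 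The moments then follow inductively in $n$, starting from $m_0\equiv 1$.

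For the variance, I would set $X_t:=V_t-\Ex(V_t)$ and note that $\Ex(V_t)$ satisfies the deterministic ODE $\tfrac{\dd}{\dd t}\Ex(V_t)=\kappa_t(\theta_t-\Ex(V_t))$, so that $\dd X_t = -\kappa_t X_t\dd t + \lambda_t V_t\dd B_t$. Applying It\^o to $X_t^2$ and taking expectations gives
\begin{align*}
\tfrac{\dd}{\dd t}\Var(V_t) = -2\kappa_t\Var(V_t) + \lambda_t^2\Ex(V_t^2), \qquad \Var(V_0)=0,
\end{align*}
which integrates explicitly into the stated formula for $\Var(V_t)$.

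For the covariance and mixed moment, I would exploit the tower property. Integrating the SDE for $V$ between $s$ and $t$ and taking conditional expectation with respect to $\FF_s$ gives $\Ex(V_t\mid\FF_s)=V_s e^{-\int_s^t\kappa_z\dd z}+\int_s^t e^{-\int_u^t\kappa_z\dd z}\kappa_u\theta_u\dd u$, and subtracting $\Ex(V_t)$ from both sides leaves $\Ex(V_t-\Ex V_t\mid\FF_s)=(V_s-\Ex V_s)e^{-\int_s^t\kappa_z\dd z}$, so that
\begin{align*}
\Cov(V_s,V_t) = \Ex\Big[(V_s-\Ex V_s)\Ex(V_t-\Ex V_t\mid\FF_s)\Big] = \Var(V_s)e^{-\int_s^t\kappa_z\dd z}.
\end{align*}
Analogously, conditioning the integrated SDE for $V_t^n$ on $\FF_s$ yields the same linear ODE as before but in the conditional moments $\Ex(V_t^n\mid\FF_s)$ with initial value $V_s^n$ at $t=s$; multiplying by $V_s^m$ and taking expectations gives the ODE
\begin{align*}
\tfrac{\dd}{\dd t}\Ex(V_s^m V_t^n) = n\kappa_t\theta_t\Ex(V_s^m V_t^{n-1}) + \Big(\tfrac{n(n-1)}{2}\lambda_t^2-n\kappa_t\Big)\Ex(V_s^m V_t^n)
\end{align*}
for $t\ge s$, with initial condition $\Ex(V_s^{m+n})$ at $t=s$, and its explicit solution is precisely the stated formula. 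The covariance of powers $\Cov(V_s^m,V_t^n)$ is then immediate from its definition.

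The main obstacle will be justifying that the stochastic integral $\int_0^\cdot n\lambda_u V_u^n\dd B_u$ is a true martingale, so that taking expectations kills it; a priori only finite second-moment existence would guarantee this for $n=1$. I would handle this via a standard localization: define $\tau_N:=\inf\{t\ge 0:V_t\ge N\}$, carry out the argument up to $t\wedge\tau_N$ where all integrands are bounded, apply Fatou's lemma together with Gronwall to the resulting truncated linear ODE to conclude $\sup_{t\le T}\Ex(V_t^n)<\infty$ by induction on $n$, and then pass $N\to\infty$ via dominated convergence. Once the $n$-th moment is known to be finite and continuous in $t$, the Itô integrand is in $L^2$ and the local-martingale-to-martingale upgrade is automatic; the same localization argument simultaneously handles the conditional and mixed-moment cases.
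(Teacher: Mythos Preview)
Your proposal is correct and follows essentially the same route as the paper: apply It\^o's formula to $V_t^n$, integrate from $s$ to $t$, multiply by $V_s^m$, take expectations, and solve the resulting first-order linear ODE for $t\mapsto\Ex(V_s^m V_t^n)$ via the integrating factor. The paper only spells out the mixed-moment case and says the rest follow by a ``similar methodology''; your derivations of $\Ex(V_t^n)$, $\Var(V_t)$ and $\Cov(V_s,V_t)$ are exactly the natural specialisations. Your treatment of the variance via the centered process $X_t=V_t-\Ex(V_t)$ and of the covariance via the tower property are minor stylistic variants of the paper's use of the integral representation, and your localisation argument to justify that $\int_0^\cdot n\lambda_u V_u^n\,\dd B_u$ is a true martingale supplies a technical point the paper leaves implicit.
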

\begin{proof}
We show how to obtain $\Ex(V_s^n V_t^m)$. The other terms follow a similar methodology.
We consider the differential of $V^n$.
\begin{align*}
	\dd (V_t^n) &= \left (n \kappa_t \theta_t V_t^{n-1}+ \left (\frac{1}{2} n (n-1) \lambda_t^2 -n \kappa_t \right ) V_t^n \right ) \dd t + n\lambda_t V_t^n \dd B_t  \\
	\implies V_t^n &= V_s^n + \int_s^t n \kappa_u \theta_u V_u^{n-1} + \left (\frac{1}{2} n (n-1) \lambda_u^2  - n \kappa_u  \right ) V_u^n\dd u + \int_s^t n \lambda_u V_u^n \dd B_u. 
\end{align*}
Multiplying both sides by $V_s^m$ and taking expectation yields
\begin{align*}
	\Ex(V_s^m V_t^n) = \Ex(V_s^{n+m}) + \int_s^t n \kappa_u \theta_u \Ex(V_s^m V_u^{n-1}) +\left ( \frac{1}{2} n (n-1) \lambda_u^2 - n \kappa_u \right ) \Ex(V_s^m V_u^n) \dd u.
\end{align*}
Differentiating both sides in $t$ and letting $M^{m,n}_s(t) := \Ex(V_s^m V_t^n)$, then 
\begin{align*}
	\frac{\dd}{\dd t} M_s^{m,n}(t) = n\kappa_t \theta_t M^{m,n-1}_s(t) + \left ( \frac{1}{2}n(n-1)\lambda_t^2-n\kappa_t \right ) M^{m,n}_s(t).
\end{align*}
This is a first order ODE, which can be solved with the integrating factor method by integrating from $s$ to $t$. 
\end{proof}

\section{Solutions to SDEs with linear diffusion}
\noindent
\label{appen:SDElinear}
Suppose the diffusion $U$ solves the SDE 
\begin{align}
	\dd U_t = f(t, U_t) \dd t + \nu_t U_t \dd B_t, \quad U_0 = u_0, \label{linearsde}
\end{align}
where $(\nu_t)_{0 \leq t \leq T}$ is adapted to the Brownian filtration and $f$ and $\nu$ satisfy some regularity conditions so that a pathwise unique strong solution for $U$ exists. For example, $f$ Lipschitz in $x$ uniformly in $t$, and $\nu$ bounded on $[0, T]$ is sufficient.

\begin{proposition}
The solution to \cref{linearsde} can be expressed as 
\begin{align*}
	U_t = Y_t/F_t,
\end{align*}
where $F$ is a GBM$(1; \nu_t^2, -\nu_t)$, That is, 
\begin{align*}
	\dd F_t &= \nu_t^2 F_t \dd t - \nu_t F_t \dd B_t, \quad F_0 = 1, \\
	\implies F_t &= \exp \left \{ \int_0^t \frac{1}{2} \nu_u^2 \dd u - \int_0^t \nu_u \dd B_u \right \},
\end{align*}
and $Y$ solves the integral equation (written in differential form)
\begin{align}
	\dd Y_t = F_t f \left (t, \frac{Y_t}{F_t} \right ) \dd t, \quad Y_0 = u_0 \label{ODE}.
\end{align} 
\end{proposition}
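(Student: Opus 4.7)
The natural approach is the classical \emph{integrating factor} technique for linear-noise SDEs: find an auxiliary process $F$ such that the product $Y_t := F_t U_t$ has its martingale part annihilated, so that $Y$ is driven purely by a finite-variation integral equation. The candidate $F$ should satisfy an SDE whose diffusion coefficient cancels the $\nu_t U_t \, \dd \tilde B_t$ term after applying It\^o's product rule, and whose drift compensates for the quadratic covariation contribution. This motivates precisely the choice $F = \mathrm{GBM}(1;\nu_t^2, -\nu_t)$: the diffusion coefficient $-\nu_t F_t$ cancels $\nu_t U_t F_t$ upon multiplication, and the drift $\nu_t^2 F_t$ cancels the cross variation $\dd\langle F, U\rangle_t$.

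Concretely, first I would write down $F$ explicitly as
\begin{align*}
F_t = \exp\left\{ \int_0^t \tfrac{1}{2}\nu_u^2 \, \dd u - \int_0^t \nu_u \, \dd \tilde B_u \right\},
\end{align*}
which is a pathwise unique strong solution to $\dd F_t = \nu_t^2 F_t \, \dd t - \nu_t F_t \, \dd \tilde B_t$ under the stated regularity assumptions on $\nu$. Then define $Y_t := F_t U_t$ and apply It\^o's product rule, computing the cross variation $\dd\langle F, U\rangle_t = -\nu_t^2 F_t U_t \, \dd t$. Expanding everything out, the four stochastic-integral and drift-correction terms collapse, leaving
\begin{align*}
\dd Y_t = F_t f(t, U_t) \, \dd t = F_t f(t, Y_t/F_t) \, \dd t, \qquad Y_0 = F_0 U_0 = u_0,
\end{align*}
which is precisely the claimed integral equation \eqref{ODE}. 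Inverting, $U_t = Y_t/F_t$ gives the desired representation.

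The argument is essentially routine once the correct $F$ is identified; the only real subtlety is verifying that the construction makes sense, namely that the product $F_t U_t$ is well-defined (which follows from the assumed pathwise unique strong solvability of the original SDE, together with $F_t > 0$ a.s.\ from its exponential form) and that the resulting random ODE \eqref{ODE} is sufficiently regular to admit a solution. I would not attempt to prove existence/uniqueness for \eqref{ODE} in full generality here — it must be verified case by case depending on $f$ — but would simply note that whenever a strong solution of the original SDE exists, the transformation produces a corresponding $Y$, and conversely any $Y$ solving \eqref{ODE} yields $U = Y/F$ as a solution of the original SDE via the reverse application of It\^o's formula. The main obstacle, if any, is purely a bookkeeping one in the product-rule expansion, ensuring that the signs and coefficients in the GBM parameters $(\nu_t^2, -\nu_t)$ are exactly those needed to eliminate both the martingale term and the It\^o correction simultaneously.
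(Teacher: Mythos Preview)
Your proposal is correct and amounts to the same It\^o product/quotient computation as the paper, just run in the opposite direction: the paper takes $U_t := Y_t/F_t$ as given and verifies via $\dd(Y_t/F_t)$ that the original SDE is satisfied, whereas you start from a solution $U$ of the SDE, set $Y_t := F_t U_t$, and compute $\dd(F_t U_t)$ to arrive at the ODE for $Y$. The cancellations are identical in both cases, so the arguments are essentially equivalent.
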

\begin{proof}
We essentially verify that this form of $U$ satisfies the SDE \cref{linearsde}.
\begin{align*}
	\dd \left ( \frac{Y_t}{F_t}\right ) &=  \dd \left ( 1/F_t \right ) Y_t + \frac{1}{F_t} \dd Y_t +  \dd \left ( 1/F_t \right )  \dd Y_t \\
	&= \left ( \left \{ \frac{\nu_t}{F_t} \dd B_t - \frac{\nu_t^2}{F_t} \dd t \right \}+ \frac{\nu_t^2}{F_t} \dd t \right ) Y_t + f \left (t, Y_t/F_t \right ) \dd t+ 0 \\
	&= \frac{Y_t}{F_t} \nu_t \dd B_t + f \left (t, Y_t/F_t \right ) \dd t\\
	&= \nu_t U_t \dd B_t + f(t, U_t) \dd t.
\end{align*}
\end{proof}

\end{document}